\newtheorem{thm}{Theorem}[section]
\newtheorem{lem}[thm]{Lemma}
\newtheorem{cor}[thm]{Corollary}
\newtheorem{pro}[thm]{Proposition}
\newtheorem{ex}[thm]{Example}
\newtheorem{rmk}[thm]{Remark}
\newtheorem{defi}[thm]{Definition}
\newtheorem{propdef}[thm]{Proposition-definition}
\newcommand {\emptycomment}[1]{}
\newcommand {\yh}[1]{{\marginpar{*}\scriptsize\textcolor{purple}{yh: #1}}}
\newcommand{\nc}{\newcommand}
\newcommand{\delete}[1]{}
\newcommand{\U}{\mathrm{U}}
\nc{\CV}{\mathbf{C}}
\newcommand{\CE}{\mathsf{CE}}
\newcommand{\LR}{$\mathsf{Lie}\mathsf{Rep}$~}
\newcommand{\ET}{\mathsf{ET}}
\newcommand{\lon }{\,\rightarrow\,}
\newcommand{\be }{\begin{equation}}
\newcommand{\ee }{\end{equation}}
\newcommand{\g}{\mathfrak g}
\newcommand{\h}{\mathfrak h}
\newcommand{\la}{\mathfrak G} 
\newcommand{\Real}{\mathbb R}
\newcommand{\huaL}{\mathcal{L}}
\newcommand{\huaP}{\mathcal{P}}
\newcommand{\huaD}{\mathcal{D}}
\newcommand{\huaH}{\mathcal{H}}
\newcommand{\huaT}{\mathcal{T}}
\newcommand{\frki}{\mathfrak i}
\newcommand{\frkl}{\mathfrak l}
\newcommand{\frkp}{\mathfrak p}
\newcommand{\frks}{\mathfrak s}
\newcommand{\frkC}{\mathfrak C}
\newcommand{\frkT}{\mathfrak T}
\newcommand{\half}{\frac{1}{2}}
\newcommand{\Courant}[1]{\left\llbracket  #1\right\rrbracket }
\newcommand{\Id}{{\rm{Id}}}
\newcommand{\br}[1]{   [ \cdot,    \cdot  ]   }
\newcommand{\dM}{\mathrm{d}}
\newcommand{\Hom}{\mathrm{Hom}}
\newcommand{\Sym}{\mathsf{S}}
\newcommand{\Ten}{\mathsf{T}}
\newcommand{\KS}{\mathsf{KS}}
\newcommand{\Leib}{\mathsf{Leibniz}}
\newcommand{\Lie}{\mathsf{Lie}}
\newcommand{\Alg}{\mathsf{Alg}}
\newcommand{\ann}{\mathsf{ann}}
\newcommand{\Der}{\mathrm{Der}}
\newcommand{\gl}{\mathfrak {gl}}
\newcommand{\ol}{\mathfrak {ol}}
\newcommand{\so}{\mathfrak {so}}
\nc{\oprn}{\theta}
\newcommand{\B}{\mathsf{B}}
\newcommand{\NR}{\mathsf{NR}}
\newcommand{\co}{\mathsf{cosh}}
\newcommand{\End}{\mathrm{End}}
\newcommand{\ad}{\mathrm{ad}}
\newcommand{\Img}{\mathrm{Im}}
\newcommand{\K}{\mathbb{K}}
\begin{document}

\title[The controlling $L_\infty$-algebra, cohomology and homotopy]{The controlling $L_\infty$-algebra, cohomology and homotopy of embedding tensors and Lie-Leibniz triples
}
\author{Yunhe Sheng}
\address{Department of Mathematics, Jilin University, Changchun 130012, Jilin, China}
\email{shengyh@jlu.edu.cn}

\author{Rong Tang}
\address{Department of Mathematics, Jilin University, Changchun 130012, Jilin, China}
\email{tangrong@jlu.edu.cn}

\author{Chenchang Zhu}
\address{ Mathematics Institute, Georg-August-University G\"ottingen, Bunsenstrasse 3-5, 37073, G\"ottingen, Germany}
\email{czhu@gwdg.de }


\begin{abstract}
In this paper, we first construct the controlling algebras of
embedding tensors and Lie-Leibniz triples, which turn out to be a
graded Lie algebra and an $L_\infty$-algebra respectively. Then we
introduce representations and cohomologies of embedding tensors and
Lie-Leibniz triples, and show that there is a long exact sequence
connecting various cohomologies. As applications, we classify
infinitesimal deformations and central extensions using the second
cohomology groups. Finally, we introduce the notion of a homotopy
embedding tensor which will induce a Leibniz$_\infty$-algebra. We
realize Kotov and Strobl's construction of an $L_\infty$-algebra
from an embedding tensor, as a functor from the category of homotopy
embedding tensors to that of Leibniz$_\infty$-algebras, and a functor further to that of $L_\infty$-algebras.

\end{abstract}

\subjclass[2010]{17B40, 17B56,  70S15}

\keywords{embedding tensor, averaging operator,  Lie-Leibniz triple, cohomology, deformation, extension, homotopy, Leibniz$_\infty$-algebra}

\maketitle

\tableofcontents

\allowdisplaybreaks


\section{Introduction}
An embedding tensor on a Lie algebra representation $(\g,V)$ is a
linear map $T:V\lon\g$ satisfying a quadratic equivariancy constraint (see Definition \ref{defi:O}). Leibniz algebras,  embedding tensors and their associated tensor hierarchies provide a
nice and efficient way to construct supergravity theories
and further to higher gauge theories (see e.g. \cite{BH,BH-1,Hoh, KS, Str19} and
references therein
for a rich physics literature on this subject, and see \cite{Lavan,LavauP}
 for a math-friendly introduction on this
subject).  Recently, this topic has attracted much attention of
the mathematical physics world. First of all, a sharp and beautiful observation by Kotov and Strobl in a
recent article \cite{KS} demonstrates for us a possible mathematical nature
behind the various calculations from embedding tensors to their
associated tensor hierarchies in the physics literature. An embedding
tensor gives rise to a Leibniz algebra, which further gives rise to an
$L_\infty$-algebra, and this corresponds to tensor hierarchies in physics
literature. We see later that both procedures are
functorial (the first one in Section \ref{sec:L}, and the second one in Section
\ref{sec:homotopy}), moreover the functoriality can be extended with
{\em homotopy} added in for all objects. In
fact, the second procedure is a composition of several very classic results
\cite{LV, Milnor}. This therefore guarantees us, from a category viewpoint, that
the process from embedding tensors to tensor hierarchies, and its
corresponding transition from supergravity theories to higher gauge
theories, is natural.  Then, almost at the same time, appeared several other approaches to
encode this process to tensor hierarchies, involving also Leibniz
field theory: \cite{Str16, SW} are from
the point of view of enhanced Leibniz structures; \cite{LavauS} builds an
$L_\infty$-algebra extension from a Leibniz algebra, which is
apparently different from the second functor described above.  The
functoriality was shown in both procedures.

In our setting, we further conjecture that the above two functors are functorial
in an $\infty$-category sense. We will explore this direction in a future work
\cite{STZ:2}. Notice that the homotopy nature of $L_\infty$-algebras
suggests homotopy also in the category hosting these objects.

In this article, we provide a rich math tool box for embedding tensors
and Lie-Leibniz triples, which seem not yet existing in the mathematical
literature.  Indeed, as a sort of algebra  (or operad), embedding tensors involve not only binary but also unary operations. We develop the theory of  controlling algebras, thus further the theory of cohomology and homotopy for embedding tensors and Lie-Leibniz triples.

To establish a good cohomology theory for an object, besides the
standard homological algebraic method of projective resolutions, there is
also another shorter way through its ``controlling
algebraic object''. Let us explain this idea in the case of a Lie algebra
$\g$. We start with a vector space $\g$, then the graded vector space
$\oplus_{k=0}^{+\infty}\Hom(\wedge^k \g, \g)$ 
equipped with the Nijenhuis-Richardson bracket $[\cdot,\cdot]_{\NR}$ becomes a
graded Lie algebra (g.l.a.), or a differential graded Lie algebra
(d.g.l.a.) with 0 differential \cite{NR}. Then a Lie algebra structure on $\g$
corresponds {\em exactly} to a Maurer-Cartan element $\pi \in
\Hom(\wedge^2 \g, \g)$. We call this g.l.a. $
(\oplus_{k=0}^{+\infty}\Hom(\wedge^k \g, \g), [\cdot,\cdot]_{\NR})$ the {\em
  controlling algebra of Lie algebra structures} on $\g$. Furthermore,
since $[\pi, \pi]_{\NR}=0$,  $d_{\pi}:=[\pi, \cdot]_{\NR}$ satisfies
$d^2_{\pi}=0$, thus $d_{\pi}$ is a differential.  The controlling
g.l.a. $\oplus_{k=0}^{+\infty}\Hom(\wedge^k \g, \g)$ together with
$d_{\pi}$ becomes exactly the Chevalley-Eilenberg complex to
calculate the cohomology of $\g$ with coefficients in its adjoint
representation. This is a general phenomenon and works not only for
Lie algebras, but also for associative algebras, Leibniz algebras,
$n$-Lie algebras, and pre-Lie algebras.  See the review \cite{review,GLST1}
for more details. Thus, we use this principal as a guide   to
develop   cohomology theories for embedding tensors (Section \ref{sec:A}) and
Lie-Leibniz triples (Section \ref{sec:control}). Here a Lie-Leibniz triple \cite{LavauP} consists
of a Lie algebra representation $(\g, V)$,  and an embedding tensor
$T: V\to \g$. The subtle difference between these two concepts shows
up, e.g. in deformation theory. To deform an
embedding tensor, we fix   $(\g, V)$  and deform only the operator
$T$, but to deform a Lie-Leibniz triple, we are allowed to deform also
the Lie algebra and its representation $(\g, V)$ simultaneously. It turns out that the controlling algebraic structure for embedding tensors is a g.l.a. and that for Lie-Leibniz triples is an $L_\infty$-algebra. Thus indeed the theory of Lie-Leibniz triples is more involved.

We give some immediate applications (also as verifications) of this cohomology theory of Lie-Leibniz triples in Section \ref{sec:cohomology}. It does behave as it should: Given a Lie-Leibniz triple,
\begin{enumerate}
    \item its second cohomology classes in $H^2$ with coefficients in the adjoint representation correspond exactly to the equivalence classes of its infinitesimal deformations;
    \item its second cohomology classes in $H^2$ with coefficients in the trivial representation correspond exactly to  the equivalence classes of central extensions.
\end{enumerate}
Here, we actually need a bit of additional luck in the second
application: We need to develop a cohomology theory for Lie-Leibniz
triples with arbitrary coefficients, but not just the one from adjoint
representation. For this, we find that there is a natural
 Lie-Leibniz triple structure on the endomorphisms of a 2-term complex of vector
spaces. This natural structure comes from the strict Lie 2-algebra
structure on them \cite{shengzhu2}, and a strict Lie 2-algebra is a
special Lie-Leibniz triple (see Example \ref{example-1} and Example
\ref{example-2}).

Finally, in Section \ref{sec:homotopy}, we study how embedding tensors cooperate with
homotopy.  That is, what a homotopy embedding tensor should be, and
how Kotov-Strobl's functor $\KS$ behaves with respect to homotopy. Will $\KS$
still produces an $L_\infty$-algebra or something involving even more
homotopy? This will test how stable the concept of embedding tensors
and the procedure to topological hierarchies are. We still use the tool of controlling algebras to develop  the homotopy theory. A
standard approach \cite{LV, Vallette-1} to give  a homotopy $\huaP$-algebraic structure is
to construct  a minimal model $\huaP_\infty$ of the  operad
$\huaP$. Along this approach, $L_\infty$-algebras and
$A_\infty$-algebras are well developed. Moreover,
Leibniz$_\infty$-algebras are defined as the algebras over the
minimal model \cite{ammardefiLeibnizalgebra,livernet} of the $\huaL
eibniz$ operad. However, apart from this approach, we can also use Maurer-Cartan
elements of the aforementioned controlling algebra on a graded vector
space to define a homotopy algebraic structure. For example, to define
an $L_\infty$-algebra, one can start with a graded vector space
$\g^\bullet$ and define an $L_\infty$-algebra to be a Maurer-Cartan
element of the g.l.a. $ (\Hom(\Sym( \g^\bullet), \g^\bullet),
[\cdot,\cdot]_{\NR})$. Using this method, we define a homotopy embedding
tensor to be a Maurer-Cartan element of a graded
version of the controlling
algebra that we develop in Section \ref{sec:A}. Then we show that a homotopy embedding tensor gives rise to a Leibniz$_\infty$-algebra, and a
Leibniz$_\infty$-algebra further gives rise to an
$L_\infty$-algebra. We further prove that these two processes are
functorial. Thus the functor $\KS$ extends to a homotopic version.

We want to emphasis that embedding tensors and Lie-Leibniz triples
have been already known in mathematics literature under the name of
averaging operators and averaging algebras respectively for a long time. In the last century, many studies on averaging operators were done for various special algebras, such as function spaces, Banach algebras, and the topics and methods were largely
analytic \cite{Barnett,Brainerd,Huijsmans,Rota}. See
the well-written introduction in \cite{PG} for more details. More
recently,  people have begun to study averaging operators in double
algebras, classical Yang-Baxter equation, conformal algebras, and the
procedure of replication in the operad theory \cite{Aguiar,
  Goncharov,Kolesnikov,Pei-Bai-Guo-Ni}. It is not yet clear to us how these aspects of embedding tensors and Lie-Leibniz triples are related. But we wish our article makes some first steps to understand these concepts more deeply.

\noindent
{\bf Acknowledgements. } Y. Sheng is  supported by National Science Foundation of China
(11922110). R. Tang is supported by National Science Foundation of China
(12001228) and China Postdoctoral Science
Foundation (2020M670833). C. Zhu is funded by Deutsche Forschungsgemeinschaft (ZH 274/1-1, ZH
243-3-1, RTG 2491). We thank warmly Florian Naef, Dmitry
Roytenberg, Jim Stasheff, and Bruno Vallette for very helpful discussions and
suggestions. We also thank ESI, Vienna, for their invitation to present a
preliminary version of this work during the Programme on Higher Structures and Field Theory.

\section{Embedding tensors, omni-Lie algebras and Leibniz algebras}\label{sec:L}

In this section, first  we establish relations between embedding tensors, omni-Lie algebras and Leibniz algebras. Then we give some interesting examples of embedding tensors.

\begin{defi}
  A {\em \LR pair} consists of a Lie algebra  $(\g,[\cdot,\cdot]_\g)$  and a representation $\rho:\g\longrightarrow\gl(V)$   of $\g$ on a vector space $V$.
\end{defi}

We denote a \LR pair by $((\g,[\cdot,\cdot]_\g),(V;\rho))$, or simply by $(\g,V)$ if there is no confusion.


\begin{defi} \label{defi:O}
\begin{enumerate}
\item[\rm(i)] An {\bf embedding tensor}  on a \LR pair $((\g,[\cdot,\cdot]_\g),(V;\rho))$ is a linear map $T:V\longrightarrow\g$ satisfying the following quadratic constraint:
 \begin{equation}
   [Tu,Tv]_\g=T\big(\rho(Tu)(v)\big),\quad\forall u,v\in V.
 \end{equation}
\item[\rm(ii)]

 A {\bf Lie-Leibniz triple} is a triple $(\g,V,T)$, where  $(\g,V)$ is a \LR pair and  $T:V\longrightarrow\g$ is an embedding tensor on the \LR pair $(\g,V)$.
\end{enumerate}
\end{defi}

      \begin{defi}\label{defi:homoRRB}
   Let $((\g,[\cdot,\cdot]_\g),(V;\rho),T)$ and $((\g',\{\cdot,\cdot\}_{\g'}),(V',\rho'),T')$ be two Lie-Leibniz triples. A {\bf homomorphism} from  $((\g',\{\cdot,\cdot\}_{\g'}),(V',\rho'),T')$ to $((\g,[\cdot,\cdot]_\g),(V;\rho),T)$ consists of
     a Lie algebra homomorphism  $\phi:\g'\longrightarrow\g$ and a linear map $\varphi:V'\longrightarrow V$ such that
         \begin{eqnarray}
          T\circ \varphi&=&\phi\circ T',\label{defi:isocon1}\\
                \varphi\rho'(x)(u)&=&\rho(\phi(x))(\varphi(u)),\quad\forall x\in\g', u\in V'.\label{defi:isocon2}
      \end{eqnarray}

      In particular, if $\phi$ and $\varphi$ are  invertible,  then $(\phi,\varphi)$ is called an  {\bf isomorphism}.

\end{defi}

The algebraic structure underlying an  embedding tensor is a {\bf Leibniz algebra}, which is a vector space $\la$ together with a bilinear operation $[\cdot,\cdot]_\la:\la\otimes\la\lon\la$ such that
\begin{eqnarray*}
\label{Leibniz}[x,[y,z]_\la]_\la=[[x,y]_\la,z]_\la+[y,[x,z]_\la]_\la,\quad\forall x,y,z\in\la.
\end{eqnarray*}
\begin{pro}{\rm (\cite{Aguiar})}\label{average-Leibniz}
Let $T:V\lon\g$ be an embedding tensor on a \LR pair $((\g,[\cdot,\cdot]_\g),(V;\rho))$. Then there exists a Leibniz algebra structure $[\cdot,\cdot]_T$ on $V$ given by
\begin{eqnarray}
[u,v]_T:=\rho(Tu)v,\,\,\,\,\forall u,v\in V.
\end{eqnarray}

\end{pro}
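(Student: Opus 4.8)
The plan is to verify directly that the bracket $[u,v]_T := \rho(Tu)v$ satisfies the Leibniz identity by unwinding both sides and using the two hypotheses available: that $\rho$ is a representation of $\g$ (so $\rho([x,y]_\g) = \rho(x)\rho(y) - \rho(y)\rho(x)$), and that $T$ is an embedding tensor (so $[Tu,Tv]_\g = T(\rho(Tu)v)$). First I would expand the left-hand side: $[u,[v,w]_T]_T = \rho(Tu)\bigl(\rho(T(\rho(Tv)w))w\bigr)$, and here the inner argument of $T$ is $\rho(Tv)w$, so by the quadratic constraint $T(\rho(Tv)w) = [Tv,Tw]_\g$, giving $\rho(Tu)\rho([Tv,Tw]_\g)w$. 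Then using that $\rho$ is a Lie algebra representation, $\rho([Tv,Tw]_\g) = \rho(Tv)\rho(Tw) - \rho(Tw)\rho(Tv)$.

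Next I would expand the two terms on the right-hand side similarly. For $[[u,v]_T,w]_T = \rho(T(\rho(Tu)v))w$, the argument of the outer $T$ is again of the form $\rho(Tu)v$, so the quadratic constraint applies and this equals $\rho([Tu,Tv]_\g)w = \rho(Tu)\rho(Tv)w - \rho(Tv)\rho(Tu)w$. For $[v,[u,w]_T]_T = \rho(Tv)\rho(Tu)\rho(Tw)w$ — wait, more carefully this is $\rho(Tv)\bigl(\rho(T(\rho(Tu)w))w\bigr)$; the inner $T$ again has argument $\rho(Tu)w$, so it equals $\rho(Tv)\rho([Tu,Tw]_\g)w = \rho(Tv)\rho(Tu)\rho(Tw)w - \rho(Tv)\rho(Tw)\rho(Tu)w$. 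Collecting the right-hand side then gives $\rho(Tu)\rho(Tv)\rho(Tw)w - \rho(Tv)\rho(Tu)\rho(Tw)w + \rho(Tv)\rho(Tu)\rho(Tw)w - \rho(Tv)\rho(Tw)\rho(Tu)w$, where the middle two terms cancel, leaving $\rho(Tu)\rho(Tv)\rho(Tw)w - \rho(Tv)\rho(Tw)\rho(Tu)w$, which is exactly $\rho(Tu)\rho([Tv,Tw]_\g)w$, matching the left-hand side.

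I do not anticipate a genuine obstacle here; the proof is a short direct computation and the only thing to watch is bookkeeping — making sure that in each nested bracket the expression sitting inside an application of $T$ has exactly the shape $\rho(T\,\cdot\,)(\,\cdot\,)$ so that the quadratic constraint is legitimately applicable. The one mildly delicate point is that the Leibniz identity as stated is the left-Leibniz (left-loday) form $[x,[y,z]] = [[x,y],z] + [y,[x,z]]$, so I must be careful to match this convention rather than the right-handed one; the computation above is organized to land on precisely that form. If desired, one can also phrase the whole thing more slickly: since $T$ is equivariant in the appropriate sense, $u \mapsto \rho(Tu)$ is essentially a morphism into $\gl(V)$ twisted through $T$, and the Leibniz identity for $[\cdot,\cdot]_T$ is the pullback of the Jacobi identity on $\gl(V)$ — but the elementary expansion is cleaner to write and self-contained, so that is the route I would take.
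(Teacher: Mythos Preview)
Your overall strategy---direct verification of the left Leibniz identity using the representation property of $\rho$ and the quadratic constraint on $T$---is the right one, and indeed the paper does not supply a direct proof at all (the result is simply cited from \cite{Aguiar}, with alternative conceptual explanations given later via integrable subspaces of omni-Lie algebras and via the graded Lie algebra homomorphism $\Phi$ of Proposition~\ref{pro:homoglaop}). However, your execution contains a systematic bookkeeping error of exactly the kind you flagged as the ``mildly delicate point'': you are applying $T$ to the inner bracket even when that inner bracket sits in the \emph{second} slot of the outer bracket.

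Recall $[a,b]_T=\rho(Ta)b$: only the \emph{first} argument passes through $T$. Hence
\[
[u,[v,w]_T]_T=\rho(Tu)\bigl(\rho(Tv)w\bigr),\qquad [v,[u,w]_T]_T=\rho(Tv)\bigl(\rho(Tu)w\bigr),
\]
with no further appearance of $T$; your expressions $\rho(Tu)\bigl(\rho(T(\rho(Tv)w))w\bigr)$ and the analogous one for $[v,[u,w]_T]_T$ are simply not what the definition gives. The quadratic constraint is needed \emph{only once}, namely for the term where the inner bracket is in the first slot:
\[
[[u,v]_T,w]_T=\rho\bigl(T(\rho(Tu)v)\bigr)w=\rho([Tu,Tv]_\g)w.
\]
The whole computation then collapses to
\[
[[u,v]_T,w]_T+[v,[u,w]_T]_T=\bigl(\rho(Tu)\rho(Tv)-\rho(Tv)\rho(Tu)\bigr)w+\rho(Tv)\rho(Tu)w=\rho(Tu)\rho(Tv)w=[u,[v,w]_T]_T.
\]
Note also that your final matching step is false as written: $\rho(Tu)\rho([Tv,Tw]_\g)w$ equals $\rho(Tu)\rho(Tv)\rho(Tw)w-\rho(Tu)\rho(Tw)\rho(Tv)w$, not $\rho(Tu)\rho(Tv)\rho(Tw)w-\rho(Tv)\rho(Tw)\rho(Tu)w$. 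Once the slot error is corrected, no third factor $\rho(Tw)$ ever appears and the identity is immediate.
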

\begin{rmk}
This association  gives rise to a functor $F: \ET\to \Leib\mbox{-}\Alg$ from the
category of embedding tensors to that of Leibniz algebras. 
The direction from Leibniz algebras to embedding tensors is less well
behaved. It is easy to check that the association in \cite{KS}
gives rise to a functor $G: \Leib\mbox{-}\Alg \to
\ET$, and $F\circ G=\Id$ as also noticed therein.  But $G\circ F
\neq \Id$, and these two functors do not
differ even by a natural transformation. There is
another natural association given in \cite{LavauP}, namely for a Leibniz algebra $(\la,[\cdot,\cdot]_\la)$,    the left multiplication $L:\la\lon\gl(\la)$ given by
\begin{equation}\label{eq:lm}
L_xy=[x,y]_\la,\quad \forall x,y\in \la,
\end{equation} is an embedding tensor on the Lie algebra $\gl(\la)$ with
respect to the natural representation on the vector space
$\la$. Even though with this method, it is more likely to create a
natural transformation, it does not give rise to even a functor $\Leib\mbox{-}\Alg\to
\ET$ in the first place.
\end{rmk}


In the sequel, we give an alternative explanation of Proposition \ref{average-Leibniz} using  integrable subspaces of omni-Lie algebras. For this purpose, we give an interesting example of embedding tensors.

\begin{ex}\label{example-4}
{\rm
  Let $V$ be a vector space. Then the general linear Lie algebra $\gl(V)$ represents on the direct sum $\gl(V)\oplus V$ naturally via:
  $$
  \rho(A)(B+v)=[A,B]+Av,\quad \forall A,B\in\gl(V), v\in V.
  $$
  Let $P:\gl(V)\oplus V\lon \gl(V)$ be the projection to the first summand. Then we have
  $$
  P\Big(\rho(P(A+u))(B+v)\Big)=P(\rho(A)(B+v))=[A,B]=[P(A+u),P(B+v)],
  $$
for all $A,B\in\gl(V), u,v\in V.$ Thus, $P$ is an embedding tensor on $\gl(V) $ with respect to the representation $(\gl(V)\oplus V;\rho)$.
   }
\end{ex}

 By Proposition \ref{average-Leibniz}, there is an induced  Leibniz algebra structure on $\gl(V)\oplus V$ given by
  \begin{equation}\label{eq:omniLiebra}
 [ A+u,B+v]_\ol=\rho(P(A+u))(B+v)=[A,B]+Av.
\end{equation}

 The above bracket $[\cdot,\cdot]_\ol$ is exactly the omni-Lie bracket  on $\gl(V)\oplus V$ introduced by Weinstein in \cite{Alan}. Recall that an {\bf omni-Lie algebra} is a triple $(\gl(V)\oplus V,[\cdot,\cdot]_\ol,(\cdot,\cdot)_+)$, where the omni-Lie bracket $[\cdot,\cdot]_\ol$ is given by \eqref{eq:omniLiebra}, and $(\cdot,\cdot)_+$ is a symmetric $V$-valued pairing given by
   \begin{equation}\label{eq:omniLiepair}
 ( A+u,B+v)_+= Av+Bu.
\end{equation}

\begin{defi}
  A subspace $H\subset \gl(V)\oplus V$ is said to be {\bf integrable} if $[H,H]_\ol\subset H$. 
\end{defi}

Now we are ready to characterize embedding tensors using integrable subspaces of the omni-Lie algebra.

\begin{thm}
  Let $T:V\lon \gl(V)$ be a linear map. Then $T$ is an embedding tensor on the general linear Lie algebra $\gl(V)$ with respect to the natural representation on $V$ if and only if the graph of $T$, denoted by $G_T$, is an integrable subspace of the omni-Lie algebra $(\gl(V)\oplus V,[\cdot,\cdot]_\ol,(\cdot,\cdot)_+)$.
\end{thm}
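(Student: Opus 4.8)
The plan is to unwind both sides of the equivalence into explicit linear-algebraic conditions on $T$ and compare them directly. Write a general element of the graph as $Tv + v \in \gl(V)\oplus V$ for $v\in V$, so that $G_T = \{Tv + v \mid v\in V\}$. Since $G_T$ is spanned by such elements, integrability $[G_T, G_T]_\ol \subseteq G_T$ is equivalent to the statement that $[Tu + u, Tv + v]_\ol \in G_T$ for all $u,v\in V$.

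First I would compute $[Tu + u, Tv + v]_\ol$ using the omni-Lie bracket formula \eqref{eq:omniLiebra}: here the $\gl(V)$-component is $Tu$ and the $V$-component is $u$, so
\[
[Tu + u, Tv + v]_\ol = [Tu, Tv] + (Tu)(v),
\]
where $[Tu,Tv]$ is the commutator in $\gl(V)$ and $(Tu)(v)$ is the natural action of the operator $Tu$ on $v\in V$, i.e. $\rho(Tu)(v)$. Now, an element $A + w \in \gl(V)\oplus V$ lies in $G_T$ precisely when $A = Tw$. Applying this to $A + w = [Tu,Tv] + \rho(Tu)(v)$, we see that $[Tu+u, Tv+v]_\ol \in G_T$ if and only if
\[
[Tu, Tv] = T\big(\rho(Tu)(v)\big), \qquad \forall u,v\in V.
\]
But this is exactly the quadratic constraint of Definition \ref{defi:O}(i) that characterizes $T$ being an embedding tensor on $\gl(V)$ with respect to the natural representation $\rho$ on $V$. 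Hence $G_T$ is integrable if and only if $T$ is an embedding tensor, completing the equivalence.

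I do not expect any real obstacle here: the argument is a direct two-line computation followed by matching the definition of "graph lies in $G_T$" with the quadratic constraint. The only point requiring a small amount of care is bookkeeping the two summands of an element of $\gl(V)\oplus V$ correctly when feeding it into the asymmetric bracket \eqref{eq:omniLiebra} — in particular noting that only the $\gl(V)$-component of the \emph{left} argument matters for the bracket, which is why the graph condition collapses neatly onto the first component. One could also phrase this more abstractly as an instance of the general principle that the graph of $T$ is closed under the induced operation precisely when $T$ intertwines, but the explicit computation is shortest.
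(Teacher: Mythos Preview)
Your proposal is correct and follows essentially the same approach as the paper: compute $[Tu+u,Tv+v]_\ol=[Tu,Tv]+(Tu)v$ and observe that this lies in $G_T$ precisely when $[Tu,Tv]=T((Tu)v)$, which is the embedding tensor condition. The paper's proof is just a terser version of what you wrote.
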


\begin{proof}
  For all $u,v\in V$, we have
  \begin{eqnarray*}
    [Tu+u,Tv+v]_\ol=[Tu, Tv]+(Tu)v,
  \end{eqnarray*}
  which implies that the graph of $T$ is integrable if and only if $[Tu,Tv] =T((Tu)v)$, i.e. $T$ is an embedding tensor.
\end{proof}

\begin{rmk}
  Since the omni-Lie bracket $[\cdot,\cdot]_\ol$ is a Leibniz algebra
  structure, it follows that an integrable subspace is also a Leibniz
  algebra. Thus, if $T:V\lon\gl(V)$ is an embedding tensor, then $G_T$
  is a Leibniz algebra. Since $G_T$ and $V$ are isomorphic, so there
  is an induced Leibniz algebra structure on $V$. This Leibniz algebra
  structure on $V$ is exactly the one given in Proposition
  \ref{average-Leibniz}.
\end{rmk}

In the rest of this section, we give various interesting examples.

\begin{ex}[differential Lie algebras]\label{example-1}{\rm
Let $(\g,[\cdot,\cdot]_\g,d)$ be a differential Lie algebra, 
  that is a Lie algebra $(\g,[\cdot,\cdot]_\g)$ with a derivation $d$
  such that $d\circ d=0$. Then we have
  $$d[dx,y]_\g=[d^2x,y]_\g+[dx,dy]_\g=[dx,dy]_\g.  $$
 Thus, $d$ is an embedding tensor on the \LR
  pair $((\g,[\cdot,\cdot]_\g),(\g;\ad))$.}
\end{ex}

\begin{ex}[an example from supergravity]\label{ex:E8}
{\rm This example is taken from physics literature \cite{KS,Sam}  on supergravity in space-time dimension 4, which is
  one of the origins where the concept of embedding tensors
  appear. The vector space $V$ is taken as the fundamental
  representation ${\bf 56}=(\wedge^2 \Real^8) \oplus (\wedge^2 \Real^8)^*$,
  of $E_{7(7)}$, the non-compact real form of $E_7$. We take
  $\g=\so(8)$, the Lie algebra of real skew-symmetric matrices. In fact
  ${\rm SO}(8)\subset E_{7(7)}$ and the naturally induced representation $\rho$ of
  $\so(8)$ on $V$ is simply the sum of a wedge product of the
  fundamental representation of $\so(8)$ and its dual. More precisely, $\so(8)$ naturally represents on $W:=\wedge^2\mathbb R^8$ via
  $$
  \bar{\rho}(A)(u\wedge v)=(Au)\wedge v+u\wedge Av,\quad \forall u,v\in\mathbb R^8.
  $$
  Let $\bar{\rho}^*$ be the dual representation of $\so(8)$ on $W^*$. Then $\rho=\bar{\rho}+\bar{\rho}^*$ is a representation of $\so(8)$ on $V.$
Let $E_{ij}=R_{ij}-R_{ji}$ be a basis of $\g$, where $R_{ij}$ denotes the $8\times 8$ matrix with the $(i,j)$-position being 1 and elsewhere being 0. Then we have
   \begin{eqnarray*}
     [E_{ij},E_{kl}]=\delta_{jk}E_{il}-\delta_{ik}E_{jl}+ \delta_{jl}E_{ki}-\delta_{il}E_{kj}.
   \end{eqnarray*}
Let $\{e_1,\cdots,e_8\}$ be the basis of $\mathbb R^8$ where $e_i$ is the vector with the $i$-th position being $1$ and elsewhere being $0$. Then $\{e_i\wedge e_j\}_{i<j}$ forms a basis of $W$. Let $\{e_1^*,\cdots,e_8^*\}$ be the dual basis. So $\{e_i^*\wedge e_j^*\}_{i<j}$ forms a basis of $W^*.$

  Define $T:V\lon\so(8)$ by
  $$
  T(e_i\wedge e_j)=E_{ij}, \quad T(e_i^*\wedge e_j^*)=0.
  $$
  Then we have
\begin{eqnarray*}
  T(\rho(T(e_i\wedge e_j))(e_k\wedge e_l))&=&T((E_{ij}e_k)\wedge e_l+e_k\wedge E_{ij}e_l)\\
  &=&T(\delta_{jk}e_i\wedge e_l-\delta_{ik}e_j\wedge e_l+ \delta_{jl}e_k\wedge e_i-\delta_{il}e_k\wedge e_j)\\
  &=&\delta_{jk}E_{il}-\delta_{ik}E_{jl}+ \delta_{jl}E_{ki}-\delta_{il}E_{kj}\\
  &=&[E_{ij},E_{kl}]\\
  &=&[T(e_i\wedge e_j),T(e_k\wedge e_l)],
\end{eqnarray*}
which implies that $T$ is an embedding tensor on the \LR pair $(\so(8),V)$.

 The induced Leibniz algebra structure on $V$  is given by
\begin{eqnarray*}
 [e_i\wedge e_j+e_p^*\wedge e_q^*,e_k\wedge e_l+e_m^*\wedge e_n^*]_T&=&\rho(T(e_i\wedge e_j+e_p^*\wedge e_q^*))(e_k\wedge e_l+e_m^*\wedge e_n^*)\\
 &=&\bar{\rho}(E_{ij})(e_k\wedge e_l)+\bar{\rho}^*(E_{ij})(e_m^*\wedge e_n^*)\\
 &=&\delta_{jk}e_i\wedge e_l-\delta_{ik}e_j\wedge e_l+\delta_{jl}e_k\wedge e_i-\delta_{il}e_k\wedge e_j\\
 &&+\delta_{jm}e_i^*\wedge e_n^*-\delta_{im}e_j^*\wedge e_n^*+\delta_{jn}e_m^*\wedge e_i^*-\delta_{in}e_m^*\wedge e_j^*.
\end{eqnarray*}
Notice that even though the first term $\bar{\rho}(E_{ij})(e_k\wedge
e_l)=[E_{ij}, E_{kl}]=-\bar{\rho}(E_{kj})(e_i\wedge e_j)$ has
antisymmetric property, the second term make the bracket $[\cdot,\cdot]_T$ not
antisymmetric. Thus we end up really with a Leibniz algebra,  not a
Lie algebra. Mathematically, this example can be generalized to all
Lie algebra $\g$ acts on $V=\g \oplus \g^*$ via its adjoint and coadjoint representation.
That is, the natural
projection to the first factor $T: V\to \g$ is an embedding tensor on $\g$ with respect to the action on $V=\g \oplus \g^*$.
}
\end{ex}

\begin{ex}[endomorphism algebra of a 2-term complex]\label{example-3}{\rm
Given a 2-term complex of vector spaces $W\stackrel{\frkT}{\lon}\h$, we
define $\End (W\stackrel{\frkT}{\lon}\h)$ by
$$
\End (W\stackrel{\frkT}{\lon}\h)\triangleq\{(A_0,A_1)\in\gl(\h)\oplus
\gl(W)|A_0\circ\frkT=\frkT\circ A_1\}.
$$
 It is obvious that $\End (W\stackrel{\frkT}{\lon}\h)$ with the commutator bracket $[\cdot,\cdot]_C$ is a Lie algebra. Moreover it represents on $\Hom(\h,W)$ via
 $$
 \rho(A_0,A_1)(\Phi)=[(A_0,A_1),\Phi]_C\triangleq A_1\circ \Phi-\Phi\circ A_0,\quad \forall (A_0,A_1)\in \End (W\stackrel{\frkT}{\lon}\h),~\Phi\in \Hom(\h,W).
 $$
Define $\overline{\frkT}:\Hom(\h,W)\longrightarrow
\End (W\stackrel{\frkT}{\lon}\h)$  by
$$
\overline{\frkT}(\Phi)\triangleq(\Phi\circ\frkT,\frkT\circ\Phi),\quad\forall~\Phi\in\Hom(\h,W).
$$
Then it is straightforward to deduce that $(\End (W\stackrel{\frkT}{\lon}\h),\Hom(\h,W),\overline{\frkT})$ is a Lie-Leibniz triple.

}

\end{ex}

This Lie-Leibniz triple plays an important role in the representation
theory of Lie-Leibniz triples. See Definition \ref{defi:RepLieLei} for
more details. In fact, this embedding tensor comes from a strict 2-Lie
algebra structure on the endomorphisms  of a 2-term complex
 described in
\cite{shengzhu2}. This can be generalized to any strict Lie 2-algebra as follows.

\begin{ex}[strict Lie 2-algebras]\label{example-2}{\rm
A strict Lie $2$-algebra 
is a  $2$-term  graded vector spaces $ \g= \g_{1}\oplus
  \g_{0}$ equipped  with a linear map $\dM_\g:\g_1\longrightarrow \g_0$ and a skew-symmetric bilinear map $[\cdot,\cdot]_\g:\g_i\wedge\g_j\longrightarrow \g_{i+j},0\le i+j\le 1$, such that for all $x,y,z\in \g_0,~a,b\in \g_1$   the following equalities are satisfied:
\begin{itemize}
\item[\rm(a)] $\dM_{\g} [x,a]_\g=[x,\dM_{\g} a]_\g$, \quad $[\dM_{\g} a,b]_\g=[a,\dM_{\g} b]_\g$,
\item[\rm(b)]$[[x,y]_\g,z]_\g+[[y,z]_\g,x]_\g+[[z,x]_\g,y]_\g=0$,\quad $[[x,y]_\g,a]_\g+[[y,a]_\g,x]_\g+[[a,x]_\g,y]_\g=0$.

\end{itemize}
Define $\rho$ from $\g_{0}$ to $\gl(\g_{1})$ by
$
\rho(x)(a)=[x,a]_\g.
$
Then, $(\g_{1};\rho)$ is a representation of the Lie algebra $(\g_0,[\cdot,\cdot]_\g)$ and $\dM_\g$ is an embedding tensor on the \LR pair $((\g_0,[\cdot,\cdot]_\g),(\g_{1};\rho))$.   }
\end{ex}

As strict Lie 2-algebras are equivalent to crossed modules of Lie algebras,
we naturally have the following example.
\begin{ex}[crossed modules of Lie algebras]\label{crossed module}{\rm
A crossed module of Lie algebras is a quadruple $(\g_0,\g_1,\dM,\rho)$, where $\g_0,\g_1$ are Lie algebras, $\dM:\g_1\lon\g_0$  and $\rho:\g_0\lon\Der(\g_1)$ are homomorphisms of Lie algebras such that for all $x\in\g_0$ and $m,n\in\g_1$, we have
  $$\dM(\rho(x)m)=[x,\dM(m)]_{\g_0},\quad
   \rho(\dM(m))(n)=[m,n]_{\g_1}.$$
Then $d$ is an embedding tensor on the \LR pair $((\g_0,[\cdot,\cdot]_{\g_0}),(\g_1;\rho))$.  }
\end{ex}

Example \ref{example-2} and Example \ref{crossed module} can be generalized to the following more general case.

\begin{ex}[Lie objects in the infinitesimal tensor category of linear maps]\label{Lie object111}{\rm
Let $(\g,[\cdot,\cdot]_\g)$ be a Lie algebra and $(V;\rho)$ a representation. If a linear map $T:V\lon\g$ is $\g$-equivariant, that is,
\begin{eqnarray*}
 T(\rho(x)v)=[x,Tv]_\g,\quad \forall x\in\g,~v\in V,
\end{eqnarray*}
then $T$ is an embedding tensor on the \LR pair $((\g,[\cdot,\cdot]_\g),(V;\rho))$. In fact $V\stackrel{T}{\lon}\g$ is a Lie object in the infinitesimal tensor category of linear maps if and only if $T:V\lon\g$ is $\g$-equivariant. Let $(\la,[\cdot,\cdot]_\la)$ be a Leibniz algebra and  $\la^{\ann}$ be the two-sided ideal of $\la$ generated by $[x,x]_\la$ for all $x\in\la$. Then the natural projection $\pi$ from $\la$ to $\la/\la^{\ann}$ gives a Lie object   in the infinitesimal tensor category of linear maps and $\pi$ is an embedding tensor. See \cite{KS,Loday-Pirashvili} for more details.}
\end{ex}

\emptycomment{

 \begin{ex}\label{example-4}{\rm
    Consider the unique $2$-dimensional non-abelian Lie algebra on $\mathbb C^2$ given with respect to a basis $\{e_1,e_2\}$ by
   $[e_1,e_2]=e_1$.
   Then $T=\left(\begin{array}{cc}a_{11}&a_{12}\\
                                                                a_{21}&a_{22}\end{array}\right)$ is an embedding tensor on $\mathbb C^2$ with respect to the adjoint representation if and only if
   \begin{eqnarray*}
   &&[Te_i,Te_j]=T[Te_i,e_j],\quad i,j\in\{1,2\}.
   \end{eqnarray*}
   First by
   $$
 0=[Te_1,Te_1]= T[Te_1,e_1]=T[a_{11}e_1+a_{21}e_2, e_1]=-a_{21}a_{11}e_1-a_{21}^2e_2,
   $$
   we have $a_{21}=0$. Then by
   $$
 0=[Te_2,Te_2]=  T[Te_2,e_2]=T[a_{12}e_1+a_{22}e_2, e_2]=a_{12}a_{11}e_1+a_{12}a_{21}e_2,
   $$
  we obtain $a_{12}a_{11}=0$. Finally by
\begin{eqnarray*}
  [Te_1,Te_2]
  =T[Te_1,e_2]
  =-T[Te_2,e_1],
\end{eqnarray*}
  we have $a_{11}a_{22}=a_{11}^2$. Summarize the above discussion, we have
   \begin{itemize}
     \item[\rm(i)] If $a_{11}=0$, then  any $T=\left(\begin{array}{cc}0&a_{12}\\
   0&a_{22}\end{array}\right)$ is an embedding tensor on $\mathbb C^2$ with respect to the adjoint representation;
     \item[\rm(ii)] If $a_{11}\not=0$, then $a_{12}=0$ and $a_{22}=a_{11}$. Thus, any $T=\left(\begin{array}{cc}a_{11}&0\\
   0&a_{11}\end{array}\right)$ is an embedding tensor on $\mathbb C^2$ with respect to the adjoint representation.
   \end{itemize}
   }
    \end{ex}
    }

\begin{ex}\label{example-5}{\rm
The {\bf Heisenberg algebra} $H_3(\mathbb C)$ is a
three-dimensional  complex  Lie algebra generated by
elements $e_1, e_2$ and $e_3$ with Lie brackets
$
[e_1,e_2]=e_3,~ [e_1,e_3]=0,~  [e_2,e_3]=0.
$
  $T=\left(\begin{array}{ccc}r_{11}&r_{12}&r_{13}\\
r_{21}&r_{22}&r_{23}\\
r_{31}&r_{32}&r_{33}\end{array}\right)$ is an embedding tensor on $H_3(\mathbb C)$ with respect to the adjoint representation  if and only if
$
   [Te_i,Te_j]=T[Te_i,e_j],~~ i,j\in\{1,2,3\}.
$
First by
\begin{eqnarray*}
 0=[Te_1,Te_1]= T[Te_1,e_1]=T[r_{11}e_1+r_{21}e_2+r_{31}e_3, e_1]=-r_{21}r_{13}e_1-r_{21}r_{23}e_2-r_{21}r_{33}e_3,
\end{eqnarray*}
we obtain
$
r_{21}r_{13}=0,~ r_{21}r_{23}=0,~ r_{21}r_{33}=0.
$
Similarly, we can obtain
$$\begin{array}{rcl  rcl rcl}
  r_{12}r_{13}&=&0, & r_{12}r_{23}&=&0, &  r_{12}r_{33}&=&0,\\
 r_{11}r_{13}&=&r_{22}r_{13}=0,&   r_{11}r_{23}&=&r_{22}r_{23}=0, & r_{11}r_{22}-r_{12}r_{21}&=&r_{11}r_{33}=r_{22}r_{33},\\
 r_{23}r_{13}&=&0, & r_{23}^2&=&0, & r_{23}r_{33}&=&r_{11}r_{23}-r_{13}r_{21}=0,\\
  r_{13}^2&=&0, & r_{13}r_{23}&=&0, & -r_{13}r_{33}&=&r_{12}r_{23}-r_{13}r_{22}=0.
\end{array}
$$
\emptycomment{Then by
\begin{eqnarray*}
&&[Te_1,Te_2]=[r_{11}e_1+r_{21}e_2+r_{31}e_3,r_{12}e_1+r_{22}e_2+r_{32}e_3]=(r_{11}r_{22}-r_{12}r_{21})e_3\\
&&T[Te_1,e_2]=T[r_{11}e_1+r_{21}e_2+r_{31}e_3,e_2]=r_{11}r_{13}e_1+r_{11}r_{23}e_2+r_{11}r_{33}e_3\\
&&T[e_1,Te_2]=T[e_1,r_{12}e_1+r_{22}e_2+r_{32}e_3]=r_{22}r_{13}e_1+r_{22}r_{23}e_2+r_{22}r_{33}e_3\\
&&[Te_1,Te_3]=[r_{11}e_1+r_{21}e_2+r_{31}e_3,r_{13}e_1+r_{23}e_2+r_{33}e_3]=(r_{11}r_{23}-r_{13}r_{21})e_3\\
&&T[Te_1,e_3]=0\\
&&T[e_1,Te_3]=T[e_1,r_{13}e_1+r_{23}e_2+r_{33}e_3]=r_{23}r_{13}e_1+r_{23}r_{23}e_2+r_{23}r_{33}e_3\\
&&[Te_2,Te_3]=[r_{12}e_1+r_{22}e_2+r_{32}e_3,r_{13}e_1+r_{23}e_2+r_{33}e_3]=(r_{12}r_{23}-r_{13}r_{22})e_3\\
&&T[Te_2,e_3]=0\\
&&T[e_2,Te_3]=T[e_2,r_{13}e_1+r_{23}e_2+r_{33}e_3]=-r_{13}^2e_1-r_{13}r_{23}e_2-r_{13}r_{33}e_3.
\end{eqnarray*}
}
Therefore, we have
   \begin{itemize}
     \item[\rm(i)] If $r_{13}=r_{23}=r_{33}=0$,  then   $T=\left(\begin{array}{ccc}r_{11}&r_{12}&0\\
r_{21}&r_{22}&0\\
r_{31}&r_{32}&0\end{array}\right)$ is an embedding tensor on $H_3(\mathbb C)$ if and only if $r_{11}r_{22}=r_{12}r_{21}$.
     \item[\rm(ii)] If $r_{13}=r_{23}=0$ and $r_{33}\not=0$, then $r_{12}=r_{21}=0$ and    $T=\left(\begin{array}{ccc}r_{11}&0&0\\
0&r_{22}&0\\
r_{31}&r_{32}&r_{33}\end{array}\right)$ is an embedding tensor on $H_3(\mathbb C)$ if and only if   $r_{11}r_{22}=r_{22}r_{33}=r_{11}r_{33}$.
   \end{itemize}
These two cases exhaust all the possibilities of embedding tensors on
$H_3(\mathbb C)$ with respect to the adjoint representation.
   }
\end{ex}

\emptycomment{
\begin{ex}\label{example-3}
The $sl_2(\mathbb C)$ is the
three-dimensional  complex simple Lie algebra generated by
elements $e,f$ and $h$ with Lie brackets
\begin{eqnarray*}
[e,h]=2e,\quad [f,h]=-2f,\quad  [e,f]=h.
\end{eqnarray*}
For all $n\ge 0$, there exists one irreducible representation $(V_n;\rho_n)$ of $sl_2(\mathbb C)$ (up to isomorphism) of dimension $n+1$.  The vector space $V_n$ with a basis $\{v_0,v_1,\cdots,v_n\}$ and the representation of $sl_2(\mathbb C)$ on $V_n$ is given by
\begin{eqnarray*}
\rho_n(h)v_i=(n-2i)v_i,\quad\rho_n(e)v_i=(n-i+1)v_{i-1},\quad\rho_n(f)v_i=(i+1)v_{i+1},
\end{eqnarray*}
here we set $v_{-1}=v_{n+1}=0.$ Work out the average operators with respect to representation $(V_n;\rho_n)$ will be sophisticated unless we get help from the computers. However, for $n=1$
$$
   Tv_0=r_{11}e+r_{21}f+r_{31}h,\quad Tv_1=r_{12}e+r_{22}f+r_{32}h.
   $$
It is straightforward to see that $T$ is an average operator with respect to representation $(V_1;\rho_1)$ if and only if
$$
?
$$
For $n=2$, we have
$$
   Tv_0=r_{11}e+r_{21}f+r_{31}h,\quad Tv_1=r_{12}e+r_{22}f+r_{32}h,\quad Tv_2=r_{13}e+r_{23}f+r_{33}h.
$$
It is straightforward to see that $T$ is an average operator with respect to representation $(V_2;\rho_2)$\footnote{Since $sl_2(\mathbb C)$ is a complex simple Lie algebra. The representation $(V_2;\rho_2)$ is isomorphic to the adjoin representation $(sl_2(\mathbb C);\ad)$.} if and only if
$$
?
$$
\end{ex}
}

\section{The controlling graded Lie algebra  and cohomology of embedding tensors}\label{sec:A}

In this section first we recall the controlling g.l.a. that characterize Leibniz algebras as Maurer-Cartan elements  and   the g.l.a. governing  a \LR pair that was originally given in \cite{Arnal}. Then we construct the controlling g.l.a. of  embedding tensors. Finally we introduce the cohomologies of embedding tensors.

\subsection{The controlling graded Lie algebra of \LR pairs}

A permutation $\sigma\in\mathbb S_n$ is called an {\bf $(i,n-i)$-shuffle} if $\sigma(1)<\cdots <\sigma(i)$ and $\sigma(i+1)<\cdots <\sigma(n)$. If $i=0$ or $n$, we assume $\sigma=\Id$. The set of all $(i,n-i)$-shuffles will be denoted by $\mathbb S_{(i,n-i)}$. The notion of an $(i_1,\cdots,i_k)$-shuffle and the set $\mathbb S_{(i_1,\cdots,i_k)}$ are defined analogously.

Let $\g$ be a vector space. We consider the graded vector space $$C^\bullet(\g,\g)=\oplus_{n=0}^{+\infty}C^n(\g,\g)=\oplus_{n=0}^{+\infty}\Hom(\otimes^{n+1}\g,\g).$$ It is known that $C^\bullet(\g,\g)$ equipped with the {\bf  Balavoine bracket} ~\cite{Bal}:
\begin{eqnarray}\label{leibniz-bracket}
[P,Q]_\B=P\bar{\circ}Q-(-1)^{pq}Q\bar{\circ}P,\,\,\,\,\forall P\in C^{p}(\g,\g),Q\in C^{q}(\g,\g),
\end{eqnarray}
is a g.l.a.  where $P\bar{\circ}Q\in C^{p+q}(\g,\g)$ is defined by
$
P\bar{\circ}Q=\sum_{k=1}^{p+1}P\circ_k Q,
$
and $\circ_k$ is defined by
\begin{eqnarray*}
 \nonumber&&(P\circ_kQ)(x_1,\cdots,x_{p+q+1})\\
&=&\sum_{\sigma\in\mathbb S_{(k-1,q)}}(-1)^{(k-1)q}(-1)^{\sigma}P(x_{\sigma(1)},\cdots,x_{\sigma(k-1)},Q(x_{\sigma(k)},\cdots,x_{\sigma(k+q-1)},x_{k+q}),x_{k+q+1},\cdots,x_{p+q+1}).
\end{eqnarray*}

\emptycomment{
\begin{rmk}\label{NR-subalgebra}
Let $\g$ be a vector space. Then $\big(\oplus_{n=0}^{+\infty}\Hom(\wedge^{n+1}\g,\g),[\cdot,\cdot]_\NR\big)$ is a subalgebra of the graded Lie algebra $(\oplus_{n=0}^{+\infty}\Hom(\otimes^{n+1}\g,\g),[\cdot,\cdot]_\B)$, where  $[\cdot,\cdot]_\NR$ is the Nijenhuis-Richardson bracket \cite{NR,NR2}.
\end{rmk}
}

\begin{rmk}\label{B-coder}
In fact, the Balavoine bracket is the commutator of  coderivations on the  cofree conilpotent coZinbiel coalgebra $\bar{\Ten}(\g)$. See \cite{ammardefiLeibnizalgebra,Uchino-1} for more details. Note  that on the same graded vector space $C^\bullet(\g,\g)$ there is the Gerstenhaber bracket \cite{stasheff-93} which is the commutator of  coderivations on the  cofree conilpotent coassociative coalgebra $\bar{\Ten}(\g)$.
\end{rmk}


The following conclusion is straightforward.
\begin{lem}\label{lem:LeibnizMC}
For $\Omega\in C^{1}(\g,\g)=\Hom(\otimes^2\g,\g)$, we have
\begin{eqnarray*}
\half[\Omega,\Omega]_\B(x_1,x_2,x_3)=\Omega\bar{\circ}\Omega(x_1,x_2,x_3)=\Omega(\Omega(x_1,x_2),x_3)-\Omega(x_1,\Omega(x_2,x_3))
+\Omega(x_2,\Omega(x_1,x_3)).
\end{eqnarray*}
Thus, $\Omega$ defines a Leibniz algebra structure if and only if $[\Omega,\Omega]_\B=0$, i.e. $\Omega$ is a Maurer-Cartan element of the g.l.a. $( C^\bullet(\g,\g),[\cdot,\cdot]_\B)$.
\end{lem}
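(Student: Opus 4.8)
The plan is to prove the identity by directly unwinding the definitions of the Balavoine bracket and of the operations $\bar{\circ}$ and $\circ_k$ on an element $\Omega\in C^1(\g,\g)$, where $p=q=1$. First I would dispose of the coefficient: since $(-1)^{pq}=(-1)^{1\cdot1}=-1$, formula \eqref{leibniz-bracket} gives $[\Omega,\Omega]_\B=\Omega\bar{\circ}\Omega-(-1)\,\Omega\bar{\circ}\Omega=2\,\Omega\bar{\circ}\Omega$, hence $\half[\Omega,\Omega]_\B=\Omega\bar{\circ}\Omega$. This is the first asserted equality and needs no computation.

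Next I would expand $\Omega\bar{\circ}\Omega=\sum_{k=1}^{p+1}\Omega\circ_k\Omega=\Omega\circ_1\Omega+\Omega\circ_2\Omega$ and evaluate each summand on $(x_1,x_2,x_3)$ using the formula for $\circ_k$. For $k=1$ the relevant shuffle set is $\mathbb S_{(0,1)}=\{\Id\}$ and the sign $(-1)^{(k-1)q}$ equals $+1$, so this term contributes exactly $\Omega(\Omega(x_1,x_2),x_3)$. For $k=2$ the shuffle set is $\mathbb S_{(1,1)}$, which consists of the identity and the transposition $(1\,2)$ — the increasing conditions being vacuous on blocks of length one — and the prefactor is $(-1)^{(k-1)q}=(-1)^{1}=-1$; combining this with the permutation signs $(-1)^\sigma$ of the two shuffles produces $-\Omega(x_1,\Omega(x_2,x_3))+\Omega(x_2,\Omega(x_1,x_3))$. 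Adding the two contributions yields the middle expression $\Omega(\Omega(x_1,x_2),x_3)-\Omega(x_1,\Omega(x_2,x_3))+\Omega(x_2,\Omega(x_1,x_3))$, which is the second asserted equality.

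For the Maurer--Cartan claim, I would then note that by the identity just established, $[\Omega,\Omega]_\B=0$ holds if and only if $\Omega\bar{\circ}\Omega$ vanishes identically, and that this equation, after transposing the middle term, is precisely the Leibniz identity for the bracket $[x,y]_\la:=\Omega(x,y)$ recorded before Proposition \ref{average-Leibniz}; thus $\Omega$ is a Leibniz algebra structure exactly when it is a Maurer--Cartan element of $(C^\bullet(\g,\g),[\cdot,\cdot]_\B)$. I do not expect any genuine obstacle: this is a finite, low-degree bookkeeping exercise, and the only point that rewards a moment's attention is the correct enumeration of the $(i,n-i)$-shuffles when a block has length one (so that both elements of $\mathbb S_2$ qualify as $(1,1)$-shuffles) together with the convention $\sigma=\Id$ when $i=0$.
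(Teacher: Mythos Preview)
Your proposal is correct and is exactly the direct unwinding of the definitions that the paper implicitly has in mind; the paper itself offers no proof beyond labeling the lemma ``straightforward,'' so there is nothing to compare against. Your bookkeeping of the $(0,1)$- and $(1,1)$-shuffles and the sign $(-1)^{(k-1)q}$ is accurate, and the identification of $\Omega\bar{\circ}\Omega=0$ with the Leibniz identity is precisely what is intended.
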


Let $\g_1$ and $\g_2$ be vector spaces and elements in $\g_1$ will be denoted by $x,y,z, x_i$ and elements in $\g_2$ will be denoted by $u,v,v_i$.
Let $c:\g_2^{\otimes n}\lon \g_1$ be a linear map. Define a linear map $\hat{c}\in C^{n-1}(\g_1\oplus\g_2,\g_1\oplus\g_2)$ by
\begin{eqnarray*}
\hat{c}\big((x_1,v_1)\otimes\cdots\otimes(x_n,v_n)\big):=(c(v_1,\cdots,v_n),0).
\end{eqnarray*}
In general, for a given linear map $f:\g_{i(1)}\otimes\g_{i(2)}\otimes\cdots\otimes\g_{i(n)}\lon\g_j$, $i(1),\cdots,i(n),j\in\{1,2\}$, we define a linear map $\hat{f}\in C^{n-1}(\g_1\oplus\g_2,\g_1\oplus\g_2)$ by

\[
\hat{f}:=\left\{
\begin{array}{ll}
f &\mbox {on $\g_{i(1)}\otimes\g_{i(2)}\otimes\cdots\otimes\g_{i(n)}$, }\\
0 &\mbox {all other cases.}
\end{array}
\right.
\]
We call the linear map $\hat{f}$ a {\bf horizontal lift} of $f$. 

\emptycomment{
We denote by $\g^{l,k}$ the direct sum of all $(l+k)$-tensor powers of $\g_1$ and $\g_2$, where $l$ (resp. $k$) is the number of $\g_1$ (resp. $\g_2$).
By the properties of the $\Hom$-functor, we have
\begin{eqnarray}\label{decomposition}
C^{n-1}(\g_1\oplus\g_2,\g_1\oplus\g_2)\cong\sum_{l+k=n}\Hom(\g^{l,k},\g_1)\oplus\sum_{l+k=n}\Hom(\g^{l,k},\g_2),
\end{eqnarray}
where the isomorphism is the horizontal lift.

\begin{defi}{\rm (\cite{ST})}
A linear map $f\in \Hom\big(\otimes^n(\g_1\oplus\g_2),(\g_1\oplus\g_2)\big)$ has a {\bf bidegree} $l|k$, which is denoted by $||f||=l|k$,   if $f$ satisfies the following four conditions:
\begin{itemize}
\item[\rm(i)] $l+k+1=n;$
\item[\rm(ii)] If $X$ is an element in $\g^{l+1,k}$, then $f(X)\in\g_1;$
\item[\rm(iii)] If $X$ is an element in $\g^{l,k+1}$, then $f(X)\in\g_2;$
\item[\rm(iv)] All the other case, $f(X)=0.$
\end{itemize}
\end{defi}
A linear map $f$ is said to be homogeneous  if $f$ has a bidegree.
We have $l+k\ge0,~k,l\ge-1$ because $n\ge1$ and $l+1,~k+1\ge0$. For instance, the lift $\hat{H}\in C^0(\g_1\oplus\g_2,\g_1\oplus\g_2)$ of $H:\g_2\lon\g_1$ has the bidegree $-1|1$.

It is obvious that we have the following lemmas:

\begin{lem}\label{Zero-condition-2}
If $||f||=-1|l$ (resp. $l|-1$) and $||g||=-1|k$ (resp. $k|-1$), then $[f,g]_\B=0.$
\end{lem}

\begin{lem}\label{important-lemma-2}
If $||f||=l_f|k_f$ and $||g||=l_g|k_g$, then $[f,g]_\B$ has the bidegree $l_f+l_g|k_f+k_g.$
\end{lem}
}
Let us write $\hat{C}^n(\g\oplus V,\g\oplus V):=\Hom(\wedge^{n+1}\g,\g)\oplus \Hom(\wedge ^{n}\g\otimes V,V)$. Using the horizontal lift, we can regard $\hat{C}^n(\g\oplus V,\g\oplus V)$ as a subspace of ${C}^n(\g\oplus V,\g\oplus V)$.

\begin{thm}\label{lem:MC-algrep}
The above defined  $\oplus_{n=0}^{+\infty}  \hat{C}^n(\g\oplus V,\g\oplus V)$ is a subalgebra of the g.l.a. $({C}^\bullet(\g\oplus V,\g\oplus V),[\cdot,\cdot]_\B)$. Moreover, a Maurer-Cartan element  of the g.l.a. $(\oplus_{n=0}^{+\infty}  \hat{C}^n(\g\oplus V,\g\oplus V),[\cdot,\cdot]_\B)$  is exactly a \LR pair.
\end{thm}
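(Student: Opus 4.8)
The plan is to verify the two assertions separately, both by reducing to the defining formulas for the Balavoine bracket. First I would show that $\bigoplus_{n}\hat C^n(\g\oplus V,\g\oplus V)$ is closed under $[\cdot,\cdot]_\B$. The cleanest bookkeeping device is the bidegree (the one suppressed in the commented-out block): an element of $\Hom(\wedge^{n+1}\g,\g)$, horizontally lifted, has bidegree $n|0$ in the sense that it outputs into $\g$ and ``consumes'' only $\g$-slots, while an element of $\Hom(\wedge^n\g\otimes V,V)$ has bidegree $n-1|1$ --- it outputs into $V$ and consumes exactly one $V$-slot. One checks from the explicit formula for $\circ_k$ that the bracket of two horizontally lifted maps is again horizontally lifted, and that bidegrees add: $[\,\cdot|0,\,\cdot|0\,]$ lands in $\Hom(\wedge^\bullet\g,\g)$, $[\,\cdot|0,\,\cdot|1\,]$ and $[\,\cdot|1,\,\cdot|0\,]$ land in $\Hom(\wedge^\bullet\g\otimes V,V)$, and $[\,\cdot|1,\,\cdot|1\,]$ vanishes because plugging a $V$-valued map into the single $V$-slot of another $V$-valued map would require a second $V$-slot that is not there (this is Lemma~\ref{Zero-condition-2} in the deleted material). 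The antisymmetrization implicit in restricting from $C^\bullet$ to $\hat C^\bullet$ is compatible with $\circ_k$ on the $\g$-slots, so the subspace is genuinely a graded Lie subalgebra, not merely a graded subspace.

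Next I would identify the Maurer-Cartan elements. A degree-$1$ element of the subalgebra is a pair $\Omega=(\mu,\nu)$ with $\mu\in\Hom(\wedge^2\g,\g)$ and $\nu\in\Hom(\g\otimes V,V)$, and I would compute $\tfrac12[\Omega,\Omega]_\B=\Omega\bar\circ\Omega$ using Lemma~\ref{lem:LeibnizMC} applied to $\g\oplus V$. Because the bracket respects bidegrees, $[\Omega,\Omega]_\B$ splits into its $\g$-component in $\Hom(\wedge^3\g,\g)$ and its $V$-component in $\Hom(\wedge^2\g\otimes V,V)$. The $\g$-component is exactly $\tfrac12[\mu,\mu]_\B$ evaluated on three $\g$-inputs, which by Lemma~\ref{lem:LeibnizMC} (or rather its skew-symmetric shadow, the Nijenhuis--Richardson identity) vanishes iff $\mu$ satisfies the Jacobi identity, i.e.\ $(\g,\mu)$ is a Lie algebra. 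The $V$-component, obtained by feeding two $\g$-inputs and one $V$-input, produces the combination
\[
\nu(\mu(x_1,x_2),v)-\nu(x_1,\nu(x_2,v))+\nu(x_2,\nu(x_1,v)),
\]
whose vanishing is precisely the statement that $\nu=\rho$ defines a representation of $(\g,\mu)$ on $V$ (the mixed ``$\mu$-into-$\nu$'' terms and ``$\nu$-into-$\nu$'' terms are exactly the two ways $\circ_k$ can act, and no ``$\nu$-into-$\mu$'' term survives since $\mu$ has no $V$-slot). Hence $[\Omega,\Omega]_\B=0$ iff $\Omega$ encodes a \LR pair.

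The main obstacle is purely organizational: carefully tracking which $\circ_k$ summands survive once one restricts inputs to lie in $\g$ versus $V$, and checking signs --- in particular that the shuffle signs in $\circ_k$ reproduce the usual skew-symmetry conventions for $\Hom(\wedge^{n+1}\g,\g)$ and for $\Hom(\wedge^n\g\otimes V,V)$, so that the restriction of the Balavoine bracket to $\hat C^\bullet$ agrees with the bracket on the Chevalley--Eilenberg-type complex of a \LR pair given in~\cite{Arnal}. Once the bidegree calculus is set up (the two deleted lemmas), both closure and the Maurer-Cartan identification are short formal consequences, so I would state the bidegree lemmas explicitly and then let them do the work rather than expanding $\circ_k$ by hand each time.
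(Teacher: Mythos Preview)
Your overall strategy and your Maurer--Cartan computation are sound and match the paper's (very brief) proof. However, your bidegree bookkeeping for the closure statement contains a genuine error that is internally inconsistent with your own second half.

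In the convention of the commented-out block (which is the one referenced by Lemma~\ref{Zero-condition-2}), \emph{both} summands of $\hat C^n$ have bidegree $n|0$: an element of $\Hom(\wedge^{n+1}\g,\g)$ sends $\g^{n+1,0}\to\g$, and an element of $\Hom(\wedge^n\g\otimes V,V)$ sends $\g^{n,1}\to V$; both fit the pattern $l=n$, $k=0$. Closure then follows from the additivity lemma (bidegrees add), so $[\,n|0,\,m|0\,]_\B$ has bidegree $(n+m)|0$ and lands in $\hat C^{n+m}$. Your alternative assignment of bidegree $(n-1)|1$ to the second summand forces you to assert that $[\,\cdot|1,\,\cdot|1\,]_\B=0$ to avoid leaking into a nonexistent ``$|2$'' piece; but this vanishing is \emph{false}. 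For $\nu\in\Hom(\g\otimes V,V)$ one computes
\[
(\hat\nu\bar\circ\hat\nu)(x_1,x_2,v)=-\nu(x_1,\nu(x_2,v))+\nu(x_2,\nu(x_1,v)),
\]
which is nonzero and lands in $\Hom(\wedge^2\g\otimes V,V)$. You in fact use exactly these ``$\nu$-into-$\nu$'' terms in your representation identity two paragraphs later, so your closure argument contradicts your Maurer--Cartan argument. Relatedly, Lemma~\ref{Zero-condition-2} concerns bidegree $-1|l$, i.e.\ the abelian subspace $\Hom(\otimes^lV,\g)$ that appears as $\h$ in the $V$-data of Section~\ref{sec:control}; it is not about the $V$-valued piece of $\hat C^\bullet$ at all.

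The fix is simple: either adopt the paper's bidegree so that all of $\hat C^\bullet$ sits in second index $0$ and closure is immediate from additivity, or drop the bidegree apparatus and check directly that each of the four cases $[\hat\mu,\hat\mu']_\B$, $[\hat\mu,\hat\nu']_\B$, $[\hat\nu,\hat\mu']_\B$, $[\hat\nu,\hat\nu']_\B$ lands in $\hat C^\bullet$. Once that is corrected, your identification of Maurer--Cartan elements with \LR pairs is exactly what the paper asserts.
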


\begin{proof}
It is straightforward to deduce  that $\oplus_{n=0}^{+\infty}  \hat{C}^n(\g\oplus V,\g\oplus V)$ is a subalgebra of the graded Lie algebra $(\oplus_{n=0}^{+\infty}{C}^n(\g\oplus V,\g\oplus V),[\cdot,\cdot]_\B)$.
Moreover, let $(\mu,\rho)\in\Hom(\wedge^{2}\g,\g)\oplus \Hom(\g\otimes V,V)$ be a Maurer-Cartan element. Then $[\hat{\mu}+\hat{\rho},\hat{\mu}+\hat{\rho}]_\B=0$    implies that $\mu$ defines a Lie algebra structure on $\g$ and $\rho$ is a representation of the Lie algebra $(\g,\mu)$ on $V.$
\end{proof}

\subsection{The controlling graded Lie algebra of  embedding tensors}
 Let $((\g,[\cdot,\cdot]_{\g}),(V;\rho))$ be a \LR pair. Usually we will also use $\mu$ to indicate the Lie bracket $[\cdot,\cdot]_\g$. We have a Leibniz algebra structure $\mu\boxplus\rho$ on $\g\oplus V$, which is given by
\begin{eqnarray}
(\mu\boxplus\rho)\big((x,u),(y,v)\big)=([x,y]_\g,\rho(x)v).
\end{eqnarray}
This Leibniz algebra is called the {\bf hemisemidirect product} of the Lie algebra $(\g,[\cdot,\cdot]_{\g})$ and the representation  $(V;\rho)$. It first appeared in \cite{Kinyon}.

\begin{thm} \label{average-MC}
Let $((\g,[\cdot,\cdot]_{\g}),(V;\rho))$ be a \LR pair. Then
$(\oplus_{k=1}^{+\infty} \Hom(\otimes^{k}V,\g),\Courant{\cdot,\cdot})$
is a graded Lie algebra, where the graded Lie  bracket
$\Courant{\cdot,\cdot}$ is given in a derived fashion
\begin{equation}\label{eq:glaO}
\Courant{\theta,\phi}=(-1)^{m-1}[[\mu\boxplus\rho,\theta]_\B,\phi]_\B,\quad \forall \theta\in \Hom(\otimes^{m}V,\g),~\phi \in \Hom(\otimes^{n}V,\g).
\end{equation}
More precisely, it is given by
{\footnotesize\begin{eqnarray*}
&&\nonumber\Courant{\theta,\phi }(v_1,\cdots,v_{m+n})\\
&=&\sum_{k=1}^{m}\sum_{\sigma\in\mathbb S_{(k-1,n)}}(-1)^{(k-1)n+1}(-1)^{\sigma}\theta(v_{\sigma(1)},\cdots,v_{\sigma(k-1)},\rho(\phi (v_{\sigma(k)},\cdots,v_{\sigma(k+n-1)}))v_{k+n},v_{k+n+1},\cdots,v_{m+n})\\
&&+\sum_{\sigma\in\mathbb S_{(m,n)}}(-1)^{mn+1}(-1)^{\sigma}[\theta(v_{\sigma(1)},\cdots,v_{\sigma(m)}),\phi (v_{\sigma(m+1)},\cdots,v_{\sigma(m+n-1)},v_{\sigma(m+n)})]_{\g}\\
&&+\sum_{k=1}^{n}\sum_{\sigma\in\mathbb S_{(k-1,m)}}(-1)^{m(k+n-1)}(-1)^{\sigma}\phi (v_{\sigma(1)},\cdots,v_{\sigma(k-1)},\rho(\theta(v_{\sigma(k)},\cdots,v_{\sigma(k+m-1)}))v_{k+m},v_{k+m+1},\cdots,v_{m+n}).
\end{eqnarray*}
}

Moreover, its Maurer-Cartan elements are precisely  embedding tensors.
\end{thm}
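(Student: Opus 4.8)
The plan is to exploit the derived bracket philosophy: since the hemisemidirect product $\mu\boxplus\rho$ is a Leibniz algebra structure on $\g\oplus V$ (Kinyon's result, cited before the theorem), it is a Maurer-Cartan element of the Balavoine g.l.a. $(C^\bullet(\g\oplus V,\g\oplus V),[\cdot,\cdot]_\B)$ by Lemma \ref{lem:LeibnizMC}, hence $D:=[\mu\boxplus\rho,\cdot]_\B$ is a square-zero degree-$1$ derivation. The general Voronov-type derived bracket construction then says that for any abelian subalgebra (or more precisely, a subspace on which the iterated bracket behaves well) the formula $\Courant{\theta,\phi}=\pm[[D\theta],\phi]_\B$ produces a (possibly shifted) graded Lie bracket. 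So the first step is to identify $L:=\bigoplus_{k\ge 1}\Hom(\otimes^k V,\g)$ — really its image under the horizontal lift inside $C^\bullet(\g\oplus V,\g\oplus V)$ — as an \emph{abelian} graded Lie subalgebra: for $\theta\in\Hom(\otimes^m V,\g)$ and $\phi\in\Hom(\otimes^n V,\g)$ one has $[\hat\theta,\hat\phi]_\B=0$, because any composite $\hat\theta\circ_k\hat\phi$ requires feeding the output of $\hat\phi$ (which lands in $\g$) into a slot of $\hat\theta$ — but $\hat\theta$ only reads $V$-inputs and sends everything else to $0$. This is a one-line bidegree/horizontal-lift bookkeeping argument (exactly the kind isolated in the commented-out Lemmas on bidegrees).

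Second, I would invoke Voronov's theorem on derived brackets (or prove the needed special case by hand): given a g.l.a. $\mathfrak G$ with an abelian subalgebra $\mathfrak h$ and a Maurer-Cartan-like element $\Delta$ with $[\Delta,\mathfrak h]\subseteq$ a controlled piece, the bracket $\Courant{a,b}:=\pm[[\Delta,a],b]$ is graded skew-symmetric and satisfies the graded Jacobi identity \emph{on} $\mathfrak h$, provided $\Courant{\mathfrak h,\mathfrak h}\subseteq\mathfrak h$. Here I need the closure statement: that $(-1)^{m-1}[[\mu\boxplus\rho,\hat\theta]_\B,\hat\phi]_\B$, a priori an element of $C^{m+n-1}(\g\oplus V,\g\oplus V)$, actually lies in $\bigoplus_k\Hom(\otimes^k V,\g)$ — i.e. it reads only $V$-inputs and outputs in $\g$. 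Again this is a degree-reason argument: $[\mu\boxplus\rho,\hat\theta]_\B$ has mixed bidegree, but bracketing once more against $\hat\phi$ forces the $\mu\boxplus\rho$-factor to be consumed in exactly the way that leaves a purely $V\to\g$ operator; I should check that no term surviving has a $\g$-input or a $V$-output. The explicit formula displayed in the theorem is then obtained by expanding $[\mu\boxplus\rho,\hat\theta]_\B$ (two contributions, from $\rho$ inserting into $\theta$ and from $\mu$ composing with $\theta$'s output) and then bracketing with $\hat\phi$, carefully tracking the shuffle signs and the $(-1)^{(k-1)q}$ Koszul signs from the $\circ_k$ operations — this is the one genuinely lengthy but routine computation.

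Third, for the Maurer-Cartan characterization: a degree-$1$ element $\theta\in\Hom(V\otimes V,\g)$ is Maurer-Cartan iff $\Courant{\theta,\theta}=0$. But by construction $\Courant{\theta,\theta}=(-1)^{2-1}[[\mu\boxplus\rho,\hat\theta]_\B,\hat\theta]_\B=-[[\mu\boxplus\rho,\hat\theta]_\B,\hat\theta]_\B$, and since $[\hat\theta,\hat\theta]_\B=0$ this equals (up to sign and the graded Jacobi identity) $-\tfrac12[\mu\boxplus\rho,[\hat\theta,\hat\theta]_\B]_\B + [[\mu\boxplus\rho,\hat\theta]_\B,\hat\theta]_\B$ rearranged — more cleanly, $\Courant{\theta,\theta}$ vanishes iff $[\mu\boxplus\rho+\hat\theta,\mu\boxplus\rho+\hat\theta]_\B$ has vanishing component in $\Hom(\otimes^3 V,V)$ beyond what $\mu\boxplus\rho$ already contributes. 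I would instead just evaluate the explicit three-term formula on $v_1,v_2$: for $\theta\in\Hom(V\otimes V,\g)$ taken to be (the image of) a linear map $T$ via $\theta(v_1,v_2):=\rho(Tv_1)v_2$ is \emph{not} the right move — rather one notes that $\Hom(V,\g)$ sits in degree $0$, and an embedding tensor $T\in\Hom(V,\g)$ should be a Maurer-Cartan element of degree $1$, so actually $T$ should be viewed via its associated element of $\Hom(\otimes^1 V,\g)=\Hom(V,\g)$ which is degree $0$ in the grading $\Hom(\otimes^k V,\g)$ placed in degree $k-1$. Then $\Courant{T,T}\in\Hom(V\otimes V,\g)$ and a direct expansion of \eqref{eq:glaO} gives $\Courant{T,T}(u,v)=\pm\big([Tu,Tv]_\g - T(\rho(Tu)v)\big)$ (the three sums collapsing because $m=n=1$), which is zero precisely when $T$ satisfies the quadratic constraint of Definition \ref{defi:O}.

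The main obstacle I anticipate is bookkeeping rather than conceptual: verifying the closure $\Courant{\hat C^{\bullet}_{V\to\g},\hat C^{\bullet}_{V\to\g}}\subseteq \hat C^{\bullet}_{V\to\g}$ and simultaneously matching the resulting expression term-by-term with the displayed formula, keeping every $(-1)^{(k-1)n}$, $(-1)^\sigma$, and $(-1)^{m-1}$ sign consistent across the double Balavoine bracket. A secondary subtlety is that $\Courant{\cdot,\cdot}$ needs the graded Jacobi identity, and derived brackets are only Leibniz/Loday in general — I must use the abelianness of the subalgebra (step one) to upgrade skew-symmetry, so it is important to state and use that $[\hat\theta,\hat\phi]_\B=0$ prominently, since without it one only gets a Leibniz bracket, not a graded Lie one.
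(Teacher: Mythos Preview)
Your approach is essentially the paper's: identify $\bigoplus_k\Hom(\otimes^kV,\g)$ as an abelian subalgebra of the Balavoine g.l.a.\ on $\g\oplus V$, invoke the Voronov/Kosmann-Schwarzbach derived-bracket theorem with $\Delta=\mu\boxplus\rho$, and finish by computing $\tfrac12\Courant{T,T}(v_1,v_2)=[Tv_1,Tv_2]_\g-T(\rho(Tv_1)v_2)$. The paper is terser (abelianness ``by degree reasons'', closure absorbed into the cited machinery) and your Maurer-Cartan paragraph wanders through a degree confusion before landing correctly, but the substance is the same.
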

\begin{proof}
 In short, the graded Lie algebra $(\oplus_{k=1}^{+\infty} \Hom(\otimes^{k}V,\g),\Courant{\cdot,\cdot})$ is obtained via the derived bracket \cite{Kosmann-Schwarzbach,Kosmann-Schwarzbach-1,Vo}. In fact, the Balavoine bracket   $[\cdot,\cdot]_{\B}$ associated to the direct sum vector space $\g\oplus V$ gives rise to a graded Lie algebra $(\oplus_{k= 1}^{+\infty}\Hom(\otimes^k(\g\oplus V),\g\oplus V),[\cdot,\cdot]_{\B})$. 
 Since $\mu:\wedge^2\g\longrightarrow \g$ is a Lie algebra structure and $\rho:\g\otimes V\longrightarrow V$   is a representation of $\g$ on $V$, therefore the  hemisemidirect product Leibniz algebra structure $\mu\boxplus\rho$ is a Maurer-Cartan element of the graded Lie algebra $(\oplus_{k=1}^{+\infty}\Hom(\otimes^k(\g\oplus V),\g\oplus V),[\cdot,\cdot]_{\B})$, defining a differential $d_{\mu\boxplus\rho}$ on $(\oplus_{k=1}^{+\infty}\Hom(\otimes^k(\g\oplus V),\g\oplus V),[\cdot,\cdot]_{\B})$ via
$
  d_{\mu\boxplus\rho}=[\mu\boxplus\rho,\cdot]_{\B}.
$
Since the subspace $\oplus_{k=1}^{+\infty}\Hom(\otimes^kV,\g)$ is
abelian under $[\cdot, \cdot]_\B$ by degree reasons, 
 the differential $d_{\mu\boxplus\rho}$ gives rise to a graded Lie algebra structure on the graded vector space $\oplus_{k=1}^{+\infty}\Hom(\otimes^kV,\g)$ via the derived bracket
\eqref{eq:glaO}.  

For $T\in \Hom(V,\g)$, we have
\begin{eqnarray*}
\half\Courant{T,T}(v_1,v_2)= [Tv_1,Tv_2]_\g-T(\rho(Tv_1)v_2),\quad\forall v_1,v_2\in V,
\end{eqnarray*}
which implies that Maurer-Cartan elements are precisely  embedding tensors.
 \end{proof}

There is a close relationship between the graded Lie algebra $(\oplus_{k=1}^{+\infty} \Hom(\otimes^{k}V,\g),\Courant{\cdot,\cdot})$ and the graded Lie algebra $(C^\bullet(V,V),[\cdot,\cdot]_\B)$, where $[\cdot,\cdot]_\B$ is the Balavoine bracket  defined by \eqref{leibniz-bracket}.

 Define a linear map $\Phi:\Hom(\otimes^kV,\g)\lon\Hom(\otimes^{k+1}V,V)$ for $k=1,2,\cdots,$   by
\begin{eqnarray}\label{eq:phi}
\qquad\Phi(f)(u_1,\cdots,u_k,u_{k+1})=-\rho(f(u_1,\cdots,u_k))u_{k+1},\,\,\,\,\forall f\in\Hom(\otimes^kV,\g),~ u_1,\cdots, u_{k+1}\in V.
\label{eq:defiphi}
\end{eqnarray}


\begin{pro}\label{pro:homoglaop}
The linear map $\Phi$, defined by  \eqref{eq:phi}, is a
homomorphism of graded Lie algebras from
$(\oplus_{k=1}^{+\infty} \Hom(\otimes^{k}V,\g),\Courant{\cdot,\cdot})$ to
$(C^\bullet(V,V),[\cdot,\cdot]_\B)$.
\end{pro}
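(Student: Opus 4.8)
The plan is to verify $\Phi(\Courant{\theta,\phi})=[\Phi(\theta),\Phi(\phi)]_\B$ by a direct computation. Since $\Phi$ and both brackets respect the natural gradings, it is enough to take homogeneous $\theta\in\Hom(\otimes^{m}V,\g)$ and $\phi\in\Hom(\otimes^{n}V,\g)$ and to evaluate both sides on arbitrary $u_1,\dots,u_{m+n+1}\in V$. By definition the left-hand side equals $-\rho\big(\Courant{\theta,\phi}(u_1,\dots,u_{m+n})\big)u_{m+n+1}$, so I would substitute the explicit three-group formula for $\Courant{\theta,\phi}$ from Theorem \ref{average-MC} and simplify each of the three groups separately, using only the two elementary rewritings $-\rho(\theta(w_1,\dots,w_m))W=\Phi(\theta)(w_1,\dots,w_m,W)$ and $\rho(\phi(w_1,\dots,w_n))w_{n+1}=-\Phi(\phi)(w_1,\dots,w_{n+1})$, together with the fact that $\rho$ is a representation.

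For the first and third groups of $\Courant{\theta,\phi}$ --- those of the shapes $\theta(\dots,\rho(\phi(\cdots))v_\bullet,\dots)$ and $\phi(\dots,\rho(\theta(\cdots))v_\bullet,\dots)$ --- applying $-\rho(-)u_{m+n+1}$ turns the outer $\rho$ into a $\Phi(\theta)$ (resp.\ $\Phi(\phi)$) with last argument $u_{m+n+1}$, while the inner $\rho$ already is a $\Phi(\phi)$ (resp.\ $\Phi(\theta)$). After the signs cancel, the first group should become exactly the partial compositions $\sum_{k=1}^{m}\Phi(\theta)\circ_k\Phi(\phi)$, and the third group exactly $-(-1)^{mn}\sum_{k=1}^{n}\Phi(\phi)\circ_k\Phi(\theta)$; the shuffle sets $\mathbb{S}_{(k-1,n)}$ and $\mathbb{S}_{(k-1,m)}$ and the sign $(-1)^{(k-1)n}$ in the definition of $\circ_k$ match term by term (here one uses $m(k+n-1)=mn+m(k-1)$ to reconcile the exponents).

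The second group, $\sum_{\sigma\in\mathbb{S}_{(m,n)}}(-1)^{mn+1}(-1)^\sigma[\theta(v_{\sigma(1)},\dots,v_{\sigma(m)}),\phi(v_{\sigma(m+1)},\dots,v_{\sigma(m+n)})]_\g$, is where the Lie bracket of $\g$ enters. Applying $-\rho(-)u_{m+n+1}$ and then the representation identity $\rho([x,y]_\g)=\rho(x)\rho(y)-\rho(y)\rho(x)$ splits it into two pieces. The first is immediately the extremal composition $\Phi(\theta)\circ_{m+1}\Phi(\phi)$. The second becomes $-(-1)^{mn}\,\Phi(\phi)\circ_{n+1}\Phi(\theta)$ after re-indexing the sum over $\mathbb{S}_{(m,n)}$ by the block transposition $\sigma\mapsto\tau$, $\tau(i)=\sigma(m+i)$ for $i\le n$ and $\tau(i)=\sigma(i-n)$ for $i>n$, which is a bijection onto $\mathbb{S}_{(n,m)}$ contributing precisely the extra sign $(-1)^{mn}=\sgn(\tau)\sgn(\sigma)$. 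Adding the three simplified groups gives $\Phi(\theta)\bar{\circ}\Phi(\phi)-(-1)^{mn}\Phi(\phi)\bar{\circ}\Phi(\theta)=[\Phi(\theta),\Phi(\phi)]_\B$, as wanted.

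The part that needs care --- and where an error is most likely to slip in --- is the sign and shuffle bookkeeping: reconciling the $(-1)^{(k-1)n+1}$-type signs of the $\Courant{\cdot,\cdot}$-formula against the $(-1)^{(k-1)n}$ in $\circ_k$, and checking that the $\mathbb{S}_{(m,n)}\!\leftrightarrow\!\mathbb{S}_{(n,m)}$ block swap really produces $(-1)^{mn}$. A useful sanity check is that $\Phi$ sends an embedding tensor $T$ to $-[\cdot,\cdot]_T$, the negative of the induced Leibniz bracket of Proposition \ref{average-Leibniz}; since by Lemma \ref{lem:LeibnizMC} the Maurer--Cartan elements of $(C^\bullet(V,V),[\cdot,\cdot]_\B)$ are exactly the Leibniz brackets on $V$, the claimed homomorphism restricts on Maurer--Cartan elements to that proposition. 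One could instead hope to realize $\Phi$ conceptually as (minus) the $V$-valued component of the map $\hat f\mapsto[\widehat{\mu\boxplus\rho},\hat f]_\B$ inside $C^\bullet(\g\oplus V,\g\oplus V)$ and invoke naturality of the derived bracket, but the degree shift built into the derived-bracket construction makes this route no cleaner than the direct one.
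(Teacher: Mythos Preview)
Your proposal is correct and follows essentially the same approach as the paper's own proof: a direct computation matching the three groups of the explicit formula for $\Courant{\theta,\phi}$ against the partial compositions $\Phi(\theta)\circ_k\Phi(\phi)$ and $\Phi(\phi)\circ_k\Phi(\theta)$, using the representation identity $\rho([x,y]_\g)=[\rho(x),\rho(y)]$ for the middle group and the block-transposition bijection $\mathbb S_{(m,n)}\leftrightarrow\mathbb S_{(n,m)}$ with sign $(-1)^{mn}$. The only cosmetic difference is the direction: the paper first expands $\Phi(\theta)\bar\circ\Phi(\phi)$ and $\Phi(\phi)\bar\circ\Phi(\theta)$ and then recognizes $\Phi(\Courant{\theta,\phi})$, whereas you start from $\Phi(\Courant{\theta,\phi})$ and assemble $[\Phi(\theta),\Phi(\phi)]_\B$.
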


\begin{proof}
For $\theta\in\Hom(\otimes^mV,\g)$ and $\phi \in\Hom(\otimes^nV,\g)$, we have $[\Phi(\theta),\Phi(\phi )]_\B\in\Hom(\otimes^{m+n+1}V,V)$. More precisely, for all $u_1,\cdots,u_{m+n+1}\in V$, we have
{\footnotesize\begin{eqnarray*}
&&\Phi(\theta)\bar{\circ}\Phi(\phi )(u_1,\cdots,u_{m+n+1})\\
&=&\sum_{k=1}^{m+1}\sum_{\sigma\in\mathbb S_{(k-1,n)}}(-1)^{(k-1)n}(-1)^{\sigma}\Phi(\theta)(u_{\sigma(1)},\cdots,u_{\sigma(k-1)},\Phi(\phi )(u_{\sigma(k)},\cdots,u_{\sigma(k+n-1)},u_{k+n}),u_{k+n+1},\cdots,u_{m+n+1})\\
&=&\sum_{k=1}^{m}\sum_{\sigma\in\mathbb S_{(k-1,n)}}(-1)^{(k-1)n}(-1)^{\sigma}\rho(\theta(u_{\sigma(1)},\cdots,u_{\sigma(k-1)},\rho(\phi (u_{\sigma(k)},\cdots,u_{\sigma(k+n-1)}))u_{k+n},u_{k+n+1},\cdots,u_{m+n}))u_{m+n+1}\\
&&+\sum_{\sigma\in\mathbb S_{(m,n)}}(-1)^{mn}(-1)^{\sigma}\rho(\theta(u_{\sigma(1)},\cdots,u_{\sigma(m)}))\rho(\phi (u_{\sigma(m+1)},\cdots,u_{\sigma(m+n)}))u_{m+n+1}.
\end{eqnarray*}
}
For any $\tau\in\mathbb S_{(n,m)}$, we can define $\sigma\in\mathbb S_{(m,n)}$ by
\[
\sigma(i)=\left\{
\begin{array}{ll}
\tau(n+i), & 1\le i\le m;\\
\tau(i-m), & m+1\le i\le m+n.
\end{array}
\right.
\]
Thus, we have $(-1)^{\sigma}=(-1)^{mn}(-1)^{\tau}$.
In fact, the elements of $\mathbb S_{(n,m)}$ are in one-to-one correspondence with the elements of $\mathbb S_{(m,n)}$. Then we have
{\footnotesize
\begin{eqnarray*}
&&\Phi(\phi )\bar{\circ}\Phi(\theta)(u_1,\cdots,u_{m+n+1})\\
&=&\sum_{k=1}^{n}\sum_{\tau\in\mathbb S_{(k-1,m)}}(-1)^{(k-1)m}(-1)^{\tau}\rho(\phi (u_{\tau(1)},\cdots,u_{\tau(k-1)},\rho(\theta(u_{\tau(k)},\cdots,u_{\tau(k+m-1)}))u_{k+m},u_{k+m+1},\cdots,u_{m+n}))u_{m+n+1}\\
&&+\sum_{\tau\in\mathbb S_{(n,m)}}(-1)^{mn}(-1)^{\tau}\rho(\phi (u_{\tau(1)},\cdots,u_{\tau(n)}))\rho(\theta(u_{\tau(n+1)},\cdots,u_{\tau(m+n)}))u_{m+n+1}\\
&=&\sum_{k=1}^{n}\sum_{\sigma\in\mathbb S_{(k-1,m)}}(-1)^{(k-1)m}(-1)^{\sigma}\rho(\phi (u_{\sigma(1)},\cdots,u_{\sigma(k-1)},\rho(\theta(u_{\sigma(k)},\cdots,u_{\sigma(k+m-1)}))u_{k+m},u_{k+m+1},\cdots,u_{m+n}))u_{m+n+1}\\
&&+\sum_{\sigma\in\mathbb S_{(m,n)}}(-1)^{\sigma}\rho(\phi (u_{\sigma(m+1)},\cdots,u_{\sigma(m+n)}))\rho(\theta(u_{\sigma(1)},\cdots,u_{\sigma(m)}))u_{m+n+1}.
\end{eqnarray*}
}
Therefore, we have
\begin{eqnarray*}
&&[\Phi(\theta),\Phi(\phi )]_\B(u_1,\cdots,u_{m+n+1})\\
&=&\sum_{k=1}^{m}\sum_{\sigma\in\mathbb S_{(k-1,n)}}(-1)^{(k-1)n}(-1)^{\sigma}\\
&&\rho(\theta(u_{\sigma(1)},\cdots,u_{\sigma(k-1)},\rho(\phi (u_{\sigma(k)},\cdots,u_{\sigma(k+n-1)}))u_{k+n},u_{k+n+1},\cdots,u_{m+n}))u_{m+n+1}\\
&&-(-1)^{mn}\sum_{k=1}^{n}\sum_{\sigma\in\mathbb S_{(k-1,m)}}(-1)^{(k-1)m}(-1)^{\sigma}\\
&&\rho(\phi (u_{\sigma(1)},\cdots,u_{\sigma(k-1)},\rho(\theta(u_{\sigma(k)},\cdots,u_{\sigma(k+m-1)}))u_{k+m},u_{k+m+1},\cdots,u_{m+n}))u_{m+n+1}\\
&&+(-1)^{mn}\sum_{\sigma\in\mathbb S_{(m,n)}}(-1)^{\sigma}\rho([\theta(u_{\sigma(1)},\cdots,u_{\sigma(m)}),\phi (u_{\sigma(m+1)},\cdots,u_{\sigma(m+n)})]_\g)u_{m+n+1}\\
&=&\Phi(\Courant{\theta,\phi })(u_1,\cdots,u_{m+n+1}).
\end{eqnarray*}
Thus, $\Phi$ is a homomorphism  from $(\oplus_{k=1}^{+\infty} \Hom(\otimes^{k}V,\g),\Courant{\cdot,\cdot})$ to
$(C^\bullet(V,V),[\cdot,\cdot]_\B)$.
\end{proof}

\begin{rmk}
  The above result gives another intrinsic explanation of Proposition \ref{average-Leibniz}. By Lemma \ref{lem:LeibnizMC}, Maurer-Cartan elements of the g.l.a
$(C^ \bullet(V,V),[\cdot,\cdot]_\B)$ are Leibniz algebra structures on $V$. By Theorem \ref{average-MC}, Maurer-Cartan elements of the g.l.a    $(\oplus_{k=1}^{+\infty} \Hom(\otimes^{k}V,\g),\Courant{\cdot,\cdot})$ are  embedding tensors. By Proposition \ref{pro:homoglaop}, $\Phi$ sends Maurer-Cartan elements to Maurer-Cartan elements. Thus, an  embedding tensor $T:V\lon\g$ induces a Leibniz algebra structure on $V$.
\end{rmk}

\emptycomment{
\subsection{Tensor equations of  embedding tensors}

 In the sequel, to define the tensor equations of  embedding tensors, we transfer the above graded Lie algebra structure to the tensor space.

For $k\ge1$, we define $\Psi:\otimes^{k}V^*\otimes\g\longrightarrow \Hom(\otimes^k V,\g)$ by
\begin{equation}\label{eq:defipsi}
 \langle\Psi(P)(v_1,\cdots, v_{k}),\xi\rangle=\langle P,~v_1\otimes\cdots\otimes v_k\otimes\xi\rangle,\quad \forall P\in\otimes^{k}V^*,~ v_1,\cdots, v_{k}\in V,~\xi\in\g^*,
\end{equation}
and $\Upsilon:\Hom(\otimes^k V,\g)\longrightarrow \otimes^{k}V^*\otimes\g$ by
\begin{equation}\label{eq:defiUpsilon}
 \langle\Upsilon(f),\xi_1\otimes\cdots\otimes\xi_k\otimes\xi_{k+1}\rangle=\langle f(\xi_1,\cdots,\xi_k),\xi_{k+1}\rangle,\quad \forall f\in\Hom(\otimes^k\g^*,\g),~ \xi_1,\cdots, \xi_{k+1}\in\g^*.
\end{equation}
Obviously we have $\Psi\circ\Upsilon={\Id},~~\Upsilon\circ\Psi={\Id}.$

\begin{thm}
Let $(\g,[\cdot,\cdot]_\g)$ be a Leibniz algebra. Then, there is a graded Lie  bracket $[[\cdot,\cdot]]$ on the graded space $\oplus_{k\ge2}(\otimes^{k}\g)$ given by
$$
[[ P,Q]]:=\Upsilon\{\Psi(P),\Psi(Q)\},\,\,\,\,\forall P\in\otimes^{m+1}\g,Q\in\otimes^{n+1}\g.
$$
\end{thm}

\begin{proof}
By $\Psi\circ\Upsilon={\Id},~~\Upsilon\circ\Psi={\Id},$ we transfer the graded Lie algebra structure on $C^*(\g^*,\g)$ to that on the graded  space $\oplus_{k\ge2}(\otimes^{k}\g)$. The proof is finished.
\end{proof}
The general formula of $[[P,Q]]$ is very sophisticated. But for $P=x\otimes y$ and $Q=z\otimes w$, there is an explicit expression, which is enough for our application.
\begin{lem}\label{gla-r-matrix}
For $x\otimes y,~z\otimes w\in\g\otimes\g$, we have
\begin{eqnarray}\label{2-tensor}
\nonumber[[x\otimes y,z\otimes w]]&=&z\otimes[w,x]_\g\otimes y-[w,x]_\g\otimes z\otimes y-[x,w]_\g\otimes z\otimes y+z\otimes x\otimes[w,y]_\g\\
                       &&+x\otimes z\otimes[y,w]_\g+x\otimes[y,z]_\g\otimes w-[y,z]_\g\otimes x\otimes w-[z,y]_\g\otimes x\otimes w.
\end{eqnarray}
\end{lem}

\begin{proof}
For all $\xi\in\g^*$, we have $\Psi(x\otimes y)(\xi)=\langle x,\xi\rangle y$. By Corollary ???, for all $\xi_1,\xi_2\in\g^*$, we have
\begin{eqnarray*}
\{\Psi(x\otimes y),\Psi(z\otimes w)\}(\xi_1,\xi_2)
&=&-\Psi(x\otimes y)(L^*_{\Psi(z\otimes w)\xi_1}\xi_2)+\Psi(x\otimes y)(L^*_{\Psi(z\otimes w)\xi_2}\xi_1)\\
&&+\Psi(x\otimes y)(R^*_{\Psi(z\otimes w)\xi_2}\xi_1)+[\Psi(z\otimes w)\xi_1,\Psi(x\otimes y)\xi_2]_\g\\
                                      &&+[\Psi(x\otimes y)\xi_1,\Psi(z\otimes w)\xi_2]_\g-\Psi(z\otimes w)(L^*_{\Psi(x\otimes y)\xi_1}\xi_2)\\
                                      &&+\Psi(z\otimes w)(L^*_{\Psi(x\otimes y)\xi_2}\xi_1)+\Psi(z\otimes w)(R^*_{\Psi(x\otimes y)\xi_2}\xi_1).
\end{eqnarray*}
Thus, for all $\xi_1,\xi_2,\xi_3\in\g^*$, we have
\begin{eqnarray*}
&&\langle [[x\otimes y,z\otimes w]],\xi_1\otimes\xi_2\otimes\xi_3\rangle=\langle \{\Psi(x\otimes y),\Psi(z\otimes w)\}(\xi_1,\xi_2),\xi_3\rangle\\
                                                                    &=&-\langle\Psi(x\otimes y)(L^*_{\Psi(z\otimes w)\xi_1}\xi_2),\xi_3\rangle+\langle\Psi(x\otimes y)(L^*_{\Psi(z\otimes w)\xi_2}\xi_1),\xi_3\rangle
                                                                    +\langle\Psi(x\otimes y)(R^*_{\Psi(z\otimes w)\xi_2}\xi_1),\xi_3\rangle\\
                                                                    && +\langle[\Psi(z\otimes w)\xi_1,\Psi(x\otimes y)\xi_2]_\g,\xi_3\rangle+\langle[\Psi(x\otimes y)\xi_1,\Psi(z\otimes w)\xi_2]_\g,\xi_3\rangle-\langle\Psi(z\otimes w)(L^*_{\Psi(x\otimes y)\xi_1}\xi_2),\xi_3\rangle\\
                                                                    && +\langle\Psi(z\otimes w)(L^*_{\Psi(x\otimes y)\xi_2}\xi_1),\xi_3\rangle+\langle\Psi(z\otimes w)(R^*_{\Psi(x\otimes y)\xi_2}\xi_1),\xi_3\rangle\\
                                                                    &=&-\langle z,\xi_1\rangle\langle x,L^*_w\xi_2\rangle\langle y,\xi_3\rangle+\langle z,\xi_2\rangle\langle x,L^*_w\xi_1\rangle\langle y,\xi_3\rangle+\langle z,\xi_2\rangle\langle x,R^*_w\xi_1\rangle\langle y,\xi_3\rangle\\
                                                                    && +\langle z,\xi_1\rangle\langle x,\xi_2\rangle\langle [w,y]_\g,\xi_3\rangle+\langle x,\xi_1\rangle\langle z,\xi_2\rangle\langle [y,w]_\g,\xi_3\rangle-\langle x,\xi_1\rangle\langle z,L^*_y\xi_2\rangle\langle w,\xi_3\rangle\\
                                                                    && +\langle x,\xi_2\rangle\langle z,L^*_y\xi_1\rangle\langle w,\xi_3\rangle+\langle x,\xi_2\rangle\langle z,R^*_y\xi_1\rangle\langle w,\xi_3\rangle\\
                                                                    &=&\langle z,\xi_1\rangle\langle[w,x]_\g,\xi_2\rangle\langle y,\xi_3\rangle-\langle [w,x]_\g,\xi_1\rangle\langle z,\xi_2\rangle\langle y,\xi_3\rangle-\langle[x,w]_\g,\xi_1\rangle\langle z,\xi_2\rangle\langle y,\xi_3\rangle\\
                                                                    &&+\langle z,\xi_1\rangle\langle x,\xi_2\rangle\langle [w,y]_\g,\xi_3\rangle +\langle x,\xi_1\rangle\langle z,\xi_2\rangle\langle [y,w]_\g,\xi_3\rangle+\langle x,\xi_1\rangle\langle [y,z]_\g,\xi_2\rangle\langle w,\xi_3\rangle\\
                                                                    &&-\langle[y,z]_\g,\xi_1\rangle\langle x,\xi_2\rangle\langle w,\xi_3\rangle-\langle[z,y]_\g,\xi_1\rangle\langle x,\xi_2\rangle\langle w,\xi_3\rangle,
\end{eqnarray*}
which implies that \eqref{2-tensor} holds.
\end{proof}

Moreover, we can obtain the tensor form of a relative Rota-Baxter operator on $\g$ with respect to the representation $(\g^*;L^*,-L^*-R^*)$.

}
\subsection{Cohomology of  embedding tensors}

First we review representations and cohomologies of Leibniz algebras.
A {\bf representation} of a Leibniz algebra $(\la,[\cdot,\cdot]_{\la})$ is a triple $(V;\rho^L,\rho^R)$, where $V$ is a vector space, $\rho^L,\rho^R:\la\lon\gl(V)$ are linear maps such that  for all $x,y\in\la$,
\begin{eqnarray*}
\rho^L([x,y]_{\la})=[\rho^L(x),\rho^L(y)],\quad
\rho^R([x,y]_{\la})=[\rho^L(x),\rho^R(y)],\quad
\rho^R(y)\circ \rho^L(x)=-\rho^R(y)\circ \rho^R(x).
\end{eqnarray*}
Here $[\cdot,\cdot]:\wedge^2\gl(V)\lon\gl(V)$ is the commutator Lie bracket on $\gl(V)$.
It is straightforward to see that $(\la;L,R)$, where the left multiplication $L:\la\longrightarrow\gl(\la)$ is given by \eqref{eq:lm} and the right multiplication $R:\la\longrightarrow\gl(\la)$ is defined by $R_xy=[y,x]_\la$,
 is a representation of $(\la,[\cdot,\cdot]_{\la})$, which is called the {\bf regular representation}.

\begin{defi}{\rm (\cite{Loday and Pirashvili})}  Let $(V;\rho^L,\rho^R)$ be a representation of a Leibniz algebra $(\la,[\cdot,\cdot]_{\la})$.
The {\bf  Loday-Pirashvili cohomology} of $\la$ with   coefficients in $V$ is the cohomology of the cochain complex $(C^\bullet(\la,V)=\oplus_{k=0}^{+\infty}C^k(\la,V),\partial)$, where $C^k(\la,V)=
\Hom(\otimes^k\la,V)$ and the coboundary operator
$\partial:C^k(\la,V)\longrightarrow C^
{k+1}(\la,V)$
is defined by
\begin{eqnarray*}
(\partial f)(x_1,\cdots,x_{k+1})&=&\sum_{i=1}^{k}(-1)^{i+1}\rho^L(x_i)f(x_1,\cdots,\hat{x_i},\cdots,x_{k+1})+(-1)^{k+1}\rho^R(x_{k+1})f(x_1,\cdots,x_{k})\\
                      \nonumber&&+\sum_{1\le i<j\le k+1}(-1)^if(x_1,\cdots,\hat{x_i},\cdots,x_{j-1},[x_i,x_j]_\la,x_{j+1},\cdots,x_{k+1}),
\end{eqnarray*}
for all $x_1,\cdots, x_{k+1}\in\la$. The resulting cohomology is denoted by $H^*(\la,V)$.
\end{defi}
The regular representation $(\la;L,R)$ is very important. We denote the corresponding cochain complex by $(C^\bullet\la,\la),\partial^{reg})$ and the resulting cohomology   by $H^*_{\rm reg}(\la)$.

 Let $T:V\longrightarrow\g$ be an  embedding tensor on  a \LR pair $((\g,[\cdot,\cdot]_\g),(V;\rho))$.
By Proposition \ref{average-Leibniz}, $[u,v]_T:=\rho(Tu)v$ defines a Leibniz algebra structure on $V$. Furthermore, define $\rho^L:V\lon\gl(\g)$ and $\rho^R:V\lon\gl(\g)$ by
\begin{equation}
\rho^L(u)y:=[Tu,y]_\g,\quad \rho^R(v)x:=[x,Tv]_\g-T(\rho(x)v).
\end{equation}
\begin{lem}
  With above notations, $(\g;\rho^L,\rho^R)$ is  a representation of the Leibniz algebra $(V,[\cdot,\cdot]_T)$.
\end{lem}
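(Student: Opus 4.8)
The plan is to verify directly the three defining axioms of a representation of the Leibniz algebra $(V,[\cdot,\cdot]_T)$ for the pair $(\rho^L,\rho^R)$, namely that for all $u,v\in V$:
\begin{eqnarray*}
\rho^L([u,v]_T)&=&[\rho^L(u),\rho^L(v)],\\
\rho^R([u,v]_T)&=&[\rho^L(u),\rho^R(v)],\\
\rho^R(v)\circ\rho^L(u)&=&-\rho^R(v)\circ\rho^R(u),
\end{eqnarray*}
where $[u,v]_T=\rho(Tu)v$. The only inputs I will need are: the Jacobi identity for $[\cdot,\cdot]_\g$, the representation property $\rho([x,y]_\g)=[\rho(x),\rho(y)]$ of $\rho$, and the embedding tensor (quadratic constraint) identity $[Tu,Tv]_\g=T(\rho(Tu)v)$, together with its immediate consequence $T([u,v]_T)=T(\rho(Tu)v)=[Tu,Tv]_\g$.

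For the first identity, I would expand $\rho^L([u,v]_T)y=[T(\rho(Tu)v),y]_\g=[[Tu,Tv]_\g,y]_\g$ using the embedding tensor constraint, and then apply the Jacobi identity in $\g$ to rewrite this as $[Tu,[Tv,y]_\g]_\g-[Tv,[Tu,y]_\g]_\g=\rho^L(u)\rho^L(v)y-\rho^L(v)\rho^L(u)y$, which is exactly $[\rho^L(u),\rho^L(v)]y$. For the second identity, I would compute $\rho^R([u,v]_T)x=[x,T(\rho(Tu)v)]_\g-T(\rho(x)\rho(Tu)v)=[x,[Tu,Tv]_\g]_\g-T(\rho(x)\rho(Tu)v)$ and compare it with $[\rho^L(u),\rho^R(v)]x=[Tu,[x,Tv]_\g-T(\rho(x)v)]_\g-\big([x,Tv]_\g-T(\rho(x)v)\big)$ applied appropriately; expanding the latter, using Jacobi on the $\g$-terms and the relation $T(\rho(Tu)w)=[Tu,Tw]_\g$ to handle the terms of the form $[Tu,T(\rho(x)v)]_\g$, and using $\rho([Tu,x]_\g)=[\rho(Tu),\rho(x)]$ to match the remaining $T(\rho(\cdot)\cdot)$ terms, the two sides should coincide. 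The third identity is the mildest: $\rho^R(v)\rho^L(u)x=\rho^R(v)([Tu,x]_\g)=[[Tu,x]_\g,Tv]_\g-T(\rho([Tu,x]_\g)v)$, while $\rho^R(v)\rho^R(u)x=\rho^R(v)\big([x,Tu]_\g-T(\rho(x)u)\big)$; here one uses $\rho^R(v)\circ T=0$ on the second summand because $\rho^R(v)(Tw)=[Tw,Tv]_\g-T(\rho(Tw)v)=0$ by the embedding tensor constraint, reducing everything to $\g$-bracket identities and the compatibility of $\rho$ with brackets.

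The main obstacle will be the bookkeeping in the second identity: there are several $T(\rho(\cdot)\cdot)$ terms and several iterated $\g$-brackets, and one must carefully invoke the embedding tensor constraint in the ``$T$ absorbs a bracket'' direction $T(\rho(Tw)v)=[Tw,Tv]_\g$ at the right places, rather than trying to push everything through Jacobi alone. A cleaner conceptual alternative, which I would mention as a remark, is to observe that $(V,[\cdot,\cdot]_T)$ together with $(\g;\rho^L,\rho^R)$ should arise from the general principle that an embedding tensor makes $\g\oplus V$ a Leibniz algebra (the hemisemidirect-product-type structure twisted by $T$) with $V$ as a subalgebra and $\g$ as an abelian ideal, so that the adjoint action of $V$ on $\g$ inside this bigger Leibniz algebra is precisely $(\rho^L,\rho^R)$; the representation axioms then follow formally from the Leibniz identity of the ambient algebra. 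But for a self-contained proof the direct computation above is the most economical route, and none of the three steps presents any genuine difficulty beyond careful indexing.
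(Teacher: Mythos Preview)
Your proposal is correct and matches the paper's approach exactly: the paper's proof reads in its entirety ``It follows from direct verification,'' and what you outline is precisely that verification. Your key observation $\rho^R(v)\circ T=0$ (from the quadratic constraint) is the right shortcut for the third axiom, and your handling of the second axiom via Jacobi plus $\rho([Tu,x]_\g)=[\rho(Tu),\rho(x)]$ and $[Tu,Tw]_\g=T(\rho(Tu)w)$ is exactly what is needed; the displayed expansion of $[\rho^L(u),\rho^R(v)]x$ in your sketch is slightly garbled (the second term should be $\rho^R(v)\rho^L(u)x=[[Tu,x]_\g,Tv]_\g-T(\rho([Tu,x]_\g)v)$, not $[x,Tv]_\g-T(\rho(x)v)$), but your subsequent description of the steps is correct.
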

\begin{proof}
  It follows from direct verification.
\end{proof}

Let $\partial_T: \Hom(\otimes^kV,\g)\longrightarrow   \Hom(\otimes^{k+1}V,\g)$ be the corresponding Loday-Pirashvili coboundary operator of the Leibniz algebra $(V,[\cdot,\cdot]_T)$ with   coefficients in $(\g;\rho^L,\rho^R)$. More precisely, $\partial_T: \Hom(\otimes^kV,\g)\longrightarrow   \Hom(\otimes^{k+1}V,\g)$ is given by
               \begin{eqnarray}
               \label{eq:defpt} \quad \partial_T \theta(u_1,\cdots,u_{k+1})&=&\sum_{i=1}^{k}(-1)^{i+1}[Tu_i,\theta(u_1,\cdots,\hat{u}_i,\cdots, u_{k+1})]_\g\\
                 \nonumber&&+(-1)^{k+1}[\theta(u_1,\cdots,  u_{k}),Tu_{k+1}]_\g+(-1)^kT(\rho(\theta(u_1,\cdots, u_{k}))u_{k+1})\\
                \nonumber &&+\sum_{1\le i<j\le k+1}(-1)^{i}\theta(u_1,\cdots,\hat{u}_i,\cdots,u_{j-1},\rho(Tu_i)(u_j),u_{j+1},\cdots, u_{k+1}).
               \end{eqnarray}
\emptycomment{It is obvious that for $x\in\g$, we have
$$
\partial_T x(u)=-[x,T(u)]_\g+T(\rho(x)u).
$$ Therefore  $x\in\g$ is closed if and only if
         $
           T\circ\rho(x)=\ad_x\circ T.
         $
              For   $\theta\in  \Hom(V,\g)$, we have
              $$
              \partial_T \theta (u,v)=  [Tu,\theta(v)]_\g+[\theta(u),Tv]_\g-T(\rho(\theta(u))v)-\theta(\rho(Tu)v).
              $$ Therefore, $\theta\in  \Hom(V,\g)$ is closed  if and only if
               $$
              [Tu,\theta(v)]_\g+[\theta(u),Tv]_\g-T(\rho(\theta(u))v)-\theta(\rho(Tu)v)=0,\quad\forall u,v\in V.
               $$}

The coboundary operator  $\partial_T $
  can be alternatively described by the following formula.
\begin{pro}\label{pro:danddT}
 Let $T:V\lon \g$ be an  embedding tensor. Then we have
 $$
 \partial_T \theta=(-1)^{k-1}\Courant{T,\theta},\quad \forall f\in \Hom(\otimes^kV,\g),\,\,k=1,2,\cdots,
 $$
 where the bracket $\Courant{\cdot,\cdot}$ is given by \eqref{eq:glaO}.
\end{pro}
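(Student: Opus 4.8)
The plan is to prove the identity simply by specializing the explicit formula for $\Courant{\cdot,\cdot}$ from Theorem~\ref{average-MC} and matching it, term by term, against the defining formula \eqref{eq:defpt} for $\partial_T$. In that formula one takes the first slot to be $T\in\Hom(V,\g)$, so $m=1$, and lets the second slot be the cochain $\theta\in\Hom(\otimes^kV,\g)$, so $n=k$. The prefactor $(-1)^{m-1}$ in \eqref{eq:glaO} is then $1$, and all three sums collapse drastically; one then checks that the result equals $(-1)^{k-1}\partial_T\theta$.

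Concretely, I would deal with the three sums in order. In the first sum only the summation index $1$ survives (since $m=1$) and $\mathbb S_{(0,k)}=\{\Id\}$, so it contributes the single term $-T\big(\rho(\theta(v_1,\dots,v_k))v_{k+1}\big)$; after multiplying by $(-1)^{k-1}$ this is exactly the third term $(-1)^kT(\rho(\theta(\dots))v_{k+1})$ of \eqref{eq:defpt}. The second sum runs over $\mathbb S_{(1,k)}$-shuffles, which are parametrised by $i=\sigma(1)\in\{1,\dots,k+1\}$ with $(-1)^\sigma=(-1)^{i-1}$; since $(-1)^{k-1}(-1)^{k+1}=1$ it becomes $\sum_{i=1}^{k+1}(-1)^{i-1}[Tv_i,\theta(v_1,\dots,\hat{v}_i,\dots,v_{k+1})]_\g$, and isolating the summand $i=k+1$ and using antisymmetry of $[\cdot,\cdot]_\g$ produces precisely the first two bracket terms of \eqref{eq:defpt}. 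Finally, the third sum runs over $\mathbb S_{(k'-1,1)}$-shuffles with $1\le k'\le k$; such a shuffle is determined by the leftover index $j=\sigma(k')\in\{1,\dots,k'\}$ with $(-1)^\sigma=(-1)^{k'-j}$, and the sign bookkeeping $(-1)^{k-1}(-1)^{k'+k-1}(-1)^{k'-j}=(-1)^{j}$ together with the reindexing $\ell=k'+1$ turns it into $\sum_{1\le i<\ell\le k+1}(-1)^{i}\theta(v_1,\dots,\hat{v}_i,\dots,v_{\ell-1},\rho(Tv_i)v_\ell,v_{\ell+1},\dots,v_{k+1})$, which is the last sum of \eqref{eq:defpt}.

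The only genuine obstacle is the careful tracking of the shuffle signs; in particular it is exactly the identities $(-1)^{k-1}(-1)^{k+1}=1$ and $(-1)^{k-1}(-1)^{k'+k-1}(-1)^{k'-j}=(-1)^j$ that force the normalization factor $(-1)^{k-1}$ in the statement. One may also phrase the same computation more conceptually: by Theorem~\ref{average-MC}, $T$ is a Maurer-Cartan element of $(\oplus_{k\ge1}\Hom(\otimes^kV,\g),\Courant{\cdot,\cdot})$, so $\Courant{T,\cdot}$ squares to zero and defines a cohomology, and the proposition identifies this twisted differential with the Loday-Pirashvili differential of $(V,[\cdot,\cdot]_T)$ with coefficients in $(\g;\rho^L,\rho^R)$; but pinning down this identification still reduces to the degree-by-degree comparison above.
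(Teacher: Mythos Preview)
Your proposal is correct and takes exactly the approach the paper intends: the paper's own proof reads in its entirety ``It follows from straightforward verification,'' and what you have written is precisely that verification, carried out carefully by specializing the explicit formula of Theorem~\ref{average-MC} to $m=1$, $n=k$ and matching the three resulting sums to the terms of \eqref{eq:defpt}. Your sign bookkeeping checks out in each case.
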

\begin{proof}
It follows from straightforward verification.
\end{proof}

Now we define a cohomology theory governing deformations of an  embedding tensor $T:V\lon\g.$ Define the space of  $0$-cochains $\frkC^0(T)$ to be $0$ and of $1$-cochains $\frkC^1(T)$  to be $\g$. For $n\geq2$, define the space of $n$-cochains $\frkC^n(T)$ as $\frkC^n(T)=\Hom(\otimes^{n-1}V,\g)$.

\begin{defi}\label{de:opcoh}
Let $T$ be an  embedding tensor on  a \LR pair
$((\g,[\cdot,\cdot]_\g),(V;\rho))$. We define the {\bf cohomology of
  the embedding tensor $T$} to be the cohomology of the cochain complex $( \frkC^\bullet(T)=\oplus _{k=0}^{+\infty} \frkC^k(T),\partial_T)$. The corresponding $k$-th cohomology group is denoted by $\huaH^k(T)$.  
\end{defi}

At the end of this section, we study the relation between the
cohomology of an  embedding tensor $T:V\longrightarrow\g$ and
the cohomology of the underlying Leibniz algebra $(V,[\cdot,\cdot]_T)$ given in Proposition \ref{average-Leibniz}.

\begin{thm}
Let  $T:V\longrightarrow \g$ be  an  embedding tensor on  a \LR pair $((\g,[\cdot,\cdot]_\g),(V;\rho))$. Then $\Phi$, defined by  \eqref{eq:phi}, is a homomorphism from the cochain complex   $(\frkC^\bullet(T),\partial_T)$ of the embedding tensor $T$ to the cochain complex $(\frkC^\bullet(V,V),\partial^{reg})$ of the underlying Leibniz algebra $(V,[\cdot,\cdot]_T)$, that is, we have the following commutative diagram:
\begin{equation}
\CD
 \Hom(\otimes^kV,\g) @>\Phi>>      \Hom(\otimes^{k+1}V,V)  \\
  @V \partial_T VV                  @V V \partial^{reg} V                \\
 \Hom(\otimes^{k+1}V,\g) @>\Phi>>               \Hom(\otimes^{k+2}V,V).
\endCD
\label{eq:ext1}
\end{equation}
Consequently, $\Phi$ induces a homomorphism $\Phi_*:\huaH^{k}(T)\longrightarrow H^{k}_{\rm reg}(V,V)$ between the corresponding cohomology groups.
\label{thm:morphismcohomology}
\end{thm}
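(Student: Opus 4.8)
The plan is to reduce the statement to two facts already available: Proposition~\ref{pro:danddT}, which expresses $\partial_T$ as $\partial_T\theta=(-1)^{k-1}\Courant{T,\theta}$ on $\Hom(\otimes^kV,\g)$, and Proposition~\ref{pro:homoglaop}, which says that $\Phi$ carries the derived bracket $\Courant{\cdot,\cdot}$ to the Balavoine bracket $[\cdot,\cdot]_\B$ on $C^\bullet(V,V)$. First I would observe that, $\Phi$ being a homomorphism of graded Lie algebras and $T$ a Maurer--Cartan element of the source (Theorem~\ref{average-MC}), the image $\Phi(T)$ is a Maurer--Cartan element of $(C^\bullet(V,V),[\cdot,\cdot]_\B)$; writing $\Omega:=[\cdot,\cdot]_T$ for the Leibniz bracket of Proposition~\ref{average-Leibniz}, formula \eqref{eq:phi} gives $\Phi(T)(u_1,u_2)=-\rho(Tu_1)u_2=-\Omega(u_1,u_2)$, i.e.\ $\Phi(T)=-\Omega$. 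Being a g.l.a.\ homomorphism, $\Phi$ automatically intertwines the adjoint operators: $\Phi(\Courant{T,\alpha})=[\Phi(T),\Phi(\alpha)]_\B$ for every $\alpha$.

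Next I would use the standard fact (contained in \cite{Bal}, and readily pinned down directly in the relevant degrees) that the Loday--Pirashvili coboundary operator of a Leibniz algebra with structure element $\Omega$, taken with coefficients in the regular representation, restricts on $C^n(V,V)=\Hom(\otimes^nV,V)$ to $(-1)^{n-1}[\Omega,\cdot]_\B$. For $\theta\in\Hom(\otimes^kV,\g)$ we have $\Phi(\theta)\in C^{k+1}(V,V)$, hence
\[
\partial^{reg}\bigl(\Phi(\theta)\bigr)=(-1)^{k}\,[\Omega,\Phi(\theta)]_\B=-(-1)^{k}\,[\Phi(T),\Phi(\theta)]_\B=(-1)^{k-1}\,[\Phi(T),\Phi(\theta)]_\B ,
\]
whereas Proposition~\ref{pro:danddT}, linearity of $\Phi$, and Proposition~\ref{pro:homoglaop} give
\[
\Phi(\partial_T\theta)=(-1)^{k-1}\,\Phi(\Courant{T,\theta})=(-1)^{k-1}\,[\Phi(T),\Phi(\theta)]_\B .
\]
Comparing the two displays proves $\Phi\circ\partial_T=\partial^{reg}\circ\Phi$ on $\Hom(\otimes^kV,\g)$ for all $k\ge1$; the one remaining piece, $\frkC^1(T)=\g$ (on which one puts $\Phi(x)(u):=-\rho(x)u$ and uses $\partial_Tx(u)=-[x,Tu]_\g+T(\rho(x)u)$), is a two-line check, and $\frkC^0(T)=0$ is vacuous. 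Hence the square \eqref{eq:ext1} commutes, and since any morphism of cochain complexes descends to cohomology we obtain the induced homomorphism $\Phi_*\colon\huaH^{k}(T)\longrightarrow H^{k}_{\rm reg}(V,V)$.

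The only point that really needs care is the sign bookkeeping: one must be sure both that $\Phi(T)=-[\cdot,\cdot]_T$ and that the regular Leibniz coboundary equals $(-1)^{n-1}[\Omega,\cdot]_\B$ with exactly that sign, so that the two stray signs cancel against the $(-1)^{k-1}$ of Proposition~\ref{pro:danddT}. If one prefers not to invoke the identification of the Leibniz coboundary with the Balavoine bracket, there is a self-contained alternative: expand $\Phi(\partial_T\theta)$ and $\partial^{reg}(\Phi(\theta))$ from the explicit formulas \eqref{eq:defpt}, \eqref{eq:phi} and the Loday--Pirashvili differential, and match them term by term using only that $\rho$ is a representation of $\g$ (so $\rho([x,y]_\g)=\rho(x)\rho(y)-\rho(y)\rho(x)$) and that $[u,v]_T=\rho(Tu)v$. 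This route is entirely mechanical but bulky, and in either approach the bookkeeping of terms and signs is the main obstacle.
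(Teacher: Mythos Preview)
Your proposal is correct and follows essentially the same route as the paper: apply Proposition~\ref{pro:danddT} to write $\partial_T\theta=(-1)^{k-1}\Courant{T,\theta}$, push through $\Phi$ via Proposition~\ref{pro:homoglaop}, and identify the result with $\partial^{reg}(\Phi(\theta))$. The paper compresses this into a single display and leaves the identification $(-1)^{k-1}[\Phi(T),\Phi(\theta)]_\B=\partial^{reg}(\Phi(\theta))$ implicit, whereas you spell out the two ingredients behind it (that $\Phi(T)=-[\cdot,\cdot]_T$ and that $\partial^{reg}=(-1)^{n-1}[\Omega,\cdot]_\B$ on $\Hom(\otimes^nV,V)$); your extra care with the sign bookkeeping and the degree-one piece $\frkC^1(T)=\g$ is welcome but not a different argument.
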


\begin{proof}
By Proposition \ref{pro:homoglaop} and Proposition \ref{pro:danddT}, for all  $\theta\in\Hom(\otimes^kV,\g)$, we have
\begin{eqnarray*}
 \Phi(\partial_T\theta)=(-1)^{k-1}\Phi\Courant{T,\theta}=(-1)^{k-1}\Courant{\Phi(T),\Phi(\theta)}=\partial^{reg}(\Phi(\theta)).
\end{eqnarray*}
Thus, $\Phi$ is a homomorphism of cochain complexes from $( \frkC^*(T),\partial_T)$ to $(\frkC^*(V,V),\partial^{reg})$ and $\Phi_*$ is a homomorphism between the corresponding  cohomology groups.
\end{proof}

\emptycomment{
\begin{rmk}
If $\rho$ is faithful, then $\Phi$ is injective, realizing
the cochain complex $( \frkC^*(T),\partial_T)$ as a subcomplex of the cochain complex $(\frkC^*(V,V),\partial^{reg})$.
\end{rmk}
}
\section{The controlling $L_\infty$-algebra   of Lie-Leibniz triples}\label{sec:control}

In this section, we apply T. Voronov's higher derived bracket to construct an $L_\infty$-algebra that characterizes Lie-Leibniz triples as Maurer-Cartan elements.

\subsection{$L_\infty$-algebras and higher derived brackets}
\emptycomment{
There is a well know slogan: {\bf every reasonable deformation theory is controlled by an $L_\infty$-algebra (most time a differential graded Lie algebra)}. Surprisingly, every $L_\infty$-algebra can be obtained via higher derived brackets. So, when we want to study deformation theory of mathematical structures we should use higher derived brackets to construct $L_\infty$-algebras which control the deformation theory. Therefore, higher derived brackets become very popular in deformation theory.
}

Let $V^\bullet=\oplus_{k\in\mathbb Z}V^k$ be a $\mathbb Z$-graded vector space.
We will denote by $\Sym(V^\bullet)$ the  symmetric  algebra  of $V^\bullet$. That is,
$
\Sym(V^\bullet):=\Ten(V^\bullet)/I,
$
where $\Ten(V^\bullet)$ is the tensor algebra and $I$ is the $2$-sided ideal of $\Ten(V^\bullet)$ generated by all homogeneous elements  of the form
$
x\otimes y-(-1)^{xy}y\otimes x.
$
We will write $\Sym(V^\bullet)=\oplus_{i=0}^{+\infty}\Sym^i (V)$.
Moreover, we denote the reduced tensor algebra and reduced symmetric  algebra by $\bar{\Ten}V^\bullet:=\oplus_{i=1}^{+\infty}\Ten^{i}(V^\bullet)$ and  $\bar{\Sym}(V^\bullet):=\oplus_{i=1}^{+\infty}\Sym^{i}(V^\bullet)$ respectively. Denote the product of homogeneous elements $v_1,\cdots,v_n\in V^\bullet$ in $\Sym^n(V^\bullet)$ by $v_1\odot\cdots\odot v_n$. The degree of $v_1\odot\cdots\odot v_n$ is by definition the sum of the degrees of $v_i$. For a permutation $\sigma\in\mathbb S_n$ and $v_1,\cdots, v_n\in V^\bullet$,  the Koszul sign $\varepsilon(\sigma)=\varepsilon(\sigma;v_1,\cdots,v_n)\in\{-1,1\}$ is defined by
\begin{eqnarray*}
	v_1\odot\cdots\odot v_n=\varepsilon(\sigma;v_1,\cdots,v_n)v_{\sigma(1)}\odot\cdots\odot v_{\sigma(n)}.
\end{eqnarray*}
The  desuspension operator $s^{-1}$ changes the grading of $V^\bullet$ according to the rule $(s^{-1}V^\bullet)^i:=V^{i+1}$. The  degree $-1$ map $s^{-1}:V^\bullet\lon s^{-1}V^\bullet$ is defined by sending $v\in V^\bullet$ to its   copy $s^{-1}v\in s^{-1}V^\bullet$.

The notion of an $L_\infty$-algebra was introduced by Stasheff in \cite{stasheff:shla}. See  \cite{LS,LM} for more details.
\begin{defi}
An {\bf  $L_\infty$-algebra} is a $\mathbb Z$-graded vector space
$\g^\bullet=\oplus_{k\in\mathbb Z}\g^k$ equipped with a collection
$(k\ge 1)$ of linear maps $l_k:\otimes^k\g^\bullet \lon\g^\bullet$ of degree $1$ with the property that, for any homogeneous elements $x_1,\cdots,x_n\in \g^\bullet$, we have
\begin{itemize}\item[\rm(i)]
{\bf (graded symmetry)}~ for every $\sigma\in\mathbb S_{n}$,
$
l_n(x_{\sigma(1)},\cdots,x_{\sigma(n)})=\varepsilon(\sigma;x_1,\cdots,x_n) l_n(x_1,\cdots,x_n).
$
\item[\rm(ii)] {\bf (generalized Jacobi identity)}~ for all $n\ge 1$,
\begin{eqnarray*}\label{sh-Lie}
\sum_{i=1}^{n}\sum_{\sigma\in \mathbb S_{(i,n-i)} }\varepsilon(\sigma;x_1,\cdots,x_n) l_{n-i+1}(l_i(x_{\sigma(1)},\cdots,x_{\sigma(i)}),x_{\sigma(i+1)},\cdots,x_{\sigma(n)})=0.
\end{eqnarray*}
\end{itemize}
\end{defi}

\begin{rmk}
An $L_\infty$-algebra structure on a graded vector space $\g^\bullet$ is equivalent to a codifferential on  the cofree conilpotent
cocommutative coalgebra $\bar{\Sym}(\g^\bullet)$.
\end{rmk}

\emptycomment{
There is a canonical way to view a differential graded Lie algebra as an $L_\infty$-algebra.

\begin{lem}\label{Quillen-construction}
Let $(\g,[\cdot,\cdot]_\g,d)$ be a dgLa. Then  $(s^{-1}\g,\{l_i\}_{i=1}^{+\infty})$ is an $L_\infty$-algebra, where $
l_1(s^{-1}x)=s^{-1}d(x),~
l_2(s^{-1}x,s^{-1}y)=(-1)^{x}s^{-1}[x,y]_\g,~
l_k=0,$ for all $k\ge 3,
$
and homogeneous elements $x,y\in \g$.
\end{lem}
}

\begin{defi}
The set of {\bf Maurer-Cartan elements} of an $L_\infty$-algebra $(\g^\bullet,\{l_i\}_{i=1}^{+\infty})$ is the set of those $\alpha\in \g^0$ satisfying the Maurer-Cartan equation
$
\sum_{k=1}^{+\infty}\frac{1}{k!}l_k(\alpha,\cdots,\alpha)=0.
$
\end{defi}

Let $\alpha$ be a Maurer-Cartan element. Define $l_k^{\alpha}:\otimes^k\g^\bullet\lon\g^\bullet$  $(k\ge 1)$ by
\begin{eqnarray}
l_k^{\alpha}(x_1,\cdots,x_k)=\sum_{n=0}^{+\infty}\frac{1}{n!}l_{k+n}(\underbrace{\alpha,\cdots,\alpha}_n,x_1,\cdots,x_k).
\end{eqnarray}

\begin{pro}\label{twisted-homotopy-lie}{\rm (\cite{Getzler})}
With the above notation, $(\g^\bullet,\{l_k^{\alpha}\}_{k=1}^{+\infty})$ is an $L_\infty$-algebra. The $L_\infty$-algebra $(\g^\bullet,\{l_k^{\alpha}\}_{k=1}^{+\infty})$  is called the {\bf twisted $L_\infty$-algebra}.
\end{pro}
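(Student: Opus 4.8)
The plan is to exploit the coalgebra description of $L_\infty$-algebras recalled in the remark above: twisting by a Maurer--Cartan element is nothing but conjugating the associated codifferential by a coalgebra automorphism. This is Getzler's construction \cite{Getzler}; a self-contained proof runs as follows. Encode the $L_\infty$-structure $\{l_k\}_{k\ge1}$ by the degree-$1$ coderivation $D=\sum_{k\ge1}\widehat{l_k}$ on (a completion of) the cofree cocommutative coalgebra $\Sym(\g^\bullet)$ --- this is the codifferential on $\bar\Sym(\g^\bullet)$ of the remark above, with no arity-zero part --- so that the generalized Jacobi identity is equivalent to $D^2=0$. The degree-$0$ element $\alpha$ gives the group-like element $e^\alpha=\sum_{n\ge0}\tfrac1{n!}\alpha^{\odot n}$, and $\Theta_\alpha\colon w\mapsto e^\alpha\odot w$ is a coalgebra automorphism with inverse $\Theta_{-\alpha}$ (one checks it intertwines the coproduct using $\Delta(e^\alpha)=e^\alpha\otimes e^\alpha$). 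Put $D^\alpha:=\Theta_{-\alpha}\circ D\circ\Theta_\alpha$.

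The key steps are: (i) $D^\alpha$ is again a degree-$1$ coderivation, being the conjugate of one by an automorphism, and $(D^\alpha)^2=\Theta_{-\alpha}\,D^2\,\Theta_\alpha=0$; (ii) reading off the Taylor coefficients of $D^\alpha$, that is the corestrictions $\pr_{\g^\bullet}\circ D^\alpha|_{\Sym^k(\g^\bullet)}$, one unwinds $D^\alpha(x_1\odot\cdots\odot x_k)=\Theta_{-\alpha}\big(D(e^\alpha\odot x_1\odot\cdots\odot x_k)\big)$ and keeps the $\g^\bullet$-component, obtaining precisely $l_k^\alpha(x_1,\dots,x_k)=\sum_{n\ge0}\tfrac1{n!}l_{k+n}(\alpha,\dots,\alpha,x_1,\dots,x_k)$ after tracking the splittings of $\alpha^{\odot n}\odot x_1\odot\cdots\odot x_k$ and the Koszul signs; (iii) the arity-zero component $D^\alpha(1)\in\g^\bullet$ --- which would be a curvature term $l_0^\alpha$ --- equals $\sum_{k\ge1}\tfrac1{k!}l_k(\alpha,\dots,\alpha)$, since a direct computation gives $D(e^\alpha)=\big(\sum_{k\ge1}\tfrac1{k!}l_k(\alpha,\dots,\alpha)\big)\odot e^\alpha$; hence it vanishes exactly when $\alpha$ is a Maurer--Cartan element. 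Step (iii) is precisely what keeps us inside the world of (flat) $L_\infty$-algebras: $D^\alpha$ then restricts to an honest codifferential on $\bar\Sym(\g^\bullet)$, so $\{l_k^\alpha\}_{k\ge1}$ is an $L_\infty$-algebra.

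I expect step (ii) to be the main obstacle --- not conceptually, but because matching the intrinsically defined $D^\alpha$ with the explicit combinatorial formula for $l_k^\alpha$ is a careful sign-and-shuffle bookkeeping. An alternative, coalgebra-free route is to insert the series $l_k^\alpha=\sum_{n\ge0}\tfrac1{n!}l_{k+n}(\alpha,\dots,\alpha,-)$ directly into the generalized Jacobi identity for $\{l_k^\alpha\}$ and reorganize the resulting multiple sum into (a) a combination of generalized Jacobi identities for $\{l_k\}$ evaluated on $x_1,\dots,x_n$ together with copies of $\alpha$, which vanish because $\{l_k\}$ is an $L_\infty$-algebra, plus (b) terms carrying the factor $\sum_{k\ge1}\tfrac1{k!}l_k(\alpha,\dots,\alpha)$, which vanish because $\alpha$ is Maurer--Cartan; here collecting the terms correctly is the crux. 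One should also note that the infinite sums defining $e^\alpha$ and $l_k^\alpha$ converge under a mild hypothesis; in our applications the controlling $L_\infty$-algebras are filtered-complete --- in fact have only finitely many nonzero brackets --- so only finitely many summands survive and no analytic issue arises.
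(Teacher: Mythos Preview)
Your proof is correct. Note, however, that the paper does not actually prove this proposition: it is stated with a citation to Getzler \cite{Getzler} and no proof is given. Your coalgebra argument --- encoding the $L_\infty$-structure as a codifferential and conjugating by the automorphism $\Theta_\alpha$ --- is precisely the standard proof one finds in the literature, so there is nothing to compare against here; you have simply supplied the omitted details.
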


In the sequel, we recall T. Voronov's derived brackets {\rm \cite{Vo}}, which is a useful tool to construct  explicit  $L_\infty$-algebras.

\begin{defi}
 $V$-data consist of a quadruple $(L,\h,P,\Delta)$ where
\begin{itemize}
\item[$\bullet$] $(L,[\cdot,\cdot])$ is a graded Lie algebra,
\item[$\bullet$] $\h$ is an abelian graded Lie subalgebra of $(L,[\cdot,\cdot])$,
\item[$\bullet$] $P:L\lon L$ is a projection, that is $P\circ P=P$, whose image is $\h$ and kernel is a  graded Lie subalgebra of $(L,[\cdot,\cdot])$,
\item[$\bullet$] $\Delta$ is an element of $ \ker(P)^1$ such that $[\Delta,\Delta]=0$.
\end{itemize}
\end{defi}

\emptycomment{
\begin{thm}{\rm (\cite{Vo})}\label{thm:db}
Let $(L,\h,P,\Delta)$ be a $V$-data. Then $(\h,\{{l_k}\}_{k=1}^{+\infty})$ is an $L_\infty$-algebra where
\begin{eqnarray}\label{V-shla}
l_k(a_1,\cdots,a_k)=P\underbrace{[\cdots[[}_k\Delta,a_1],a_2],\cdots,a_k],
\end{eqnarray}
 for homogeneous $ a_1,\cdots,a_k\in\h$. We call $\{{l_k}\}_{k=1}^{+\infty}$ the {\bf higher derived brackets} of the $V$-data $(L,\h,P,\Delta)$.
\end{thm}
}

\begin{thm}{\rm (\cite{Vo})}\label{thm:db-big-homotopy-lie-algebra}
Let $(L,\h,P,\Delta)$ be  $V$-data. Then the graded vector space $s^{-1}L\oplus \h$  is an $L_\infty$-algebra where
\begin{eqnarray*}\label{V-shla-big-algebra}
l_1(s^{-1}x,a)&=&(-s^{-1}[\Delta,x],P(x+[\Delta,a])),\\
l_2(s^{-1}x,s^{-1}y)&=&(-1)^xs^{-1}[x,y],\\
l_k(s^{-1}x,a_1,\cdots,a_{k-1})&=&P[\cdots[[x,a_1],a_2],\cdots,a_{k-1}],\quad k\geq 2,\\
l_k(a_1,\cdots,a_{k-1},a_k)&=&P[\cdots[[\Delta,a_1],a_2],\cdots,a_{k}],\quad k\geq 2.
\end{eqnarray*}
Here $a,a_1,\cdots,a_k$ are homogeneous elements of $\h$ and $x,y$ are homogeneous elements of $L$. All other $L_\infty$-algebra products that are not obtained from the ones written above by permutations of arguments, will vanish.  

Let $L'$ be a graded Lie subalgebra of $L$ that satisfies $[\Delta,L']\subset L'$. Then $s^{-1}L'\oplus \h$ is an $L_\infty$-subalgebra of the above $L_\infty$-algebra. In particular, $(\h,\{{l_k}\}_{k=1}^{+\infty})$ is an $L_\infty$-algebra, where
\begin{eqnarray}\label{V-shla}
l_k(a_1,\cdots,a_k)=P\underbrace{[\cdots[[}_k\Delta,a_1],a_2],\cdots,a_k], \quad  \mbox{for }~   a_1,\cdots,a_k\in\h.
\end{eqnarray}
\end{thm}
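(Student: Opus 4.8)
This is a theorem of Voronov \cite{Vo}, so one legitimate option is simply to cite it; let me instead describe how I would reprove it. The plan is to start from the elementary fact that $D:=[\Delta,\cdot]\colon L\to L$ is a degree $1$ derivation with $D^2=\tfrac12[[\Delta,\Delta],\cdot]=0$, so that $(L,[\cdot,\cdot],D)$ is a differential graded Lie algebra; this is where the hypothesis $[\Delta,\Delta]=0$ enters. Before attacking the general identities I would sanity-check the two ``boundary'' specializations of the displayed formulas: restricting every argument to $\h$ gives the higher derived brackets $l_k(a_1,\cdots,a_k)=P[\cdots[[\Delta,a_1],a_2],\cdots,a_k]$, while restricting to $s^{-1}L$ recovers $(L,[\cdot,\cdot],D)$ presented as an $L_\infty$-algebra in the standard way ($l_1=-s^{-1}D$, $l_2(s^{-1}x,s^{-1}y)=(-1)^x s^{-1}[x,y]$, $l_{\ge 3}=0$). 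The theorem asserts that the mixed brackets splice these two structures into a single $L_\infty$-algebra on $W:=s^{-1}L\oplus\h$.

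For the main verification I would pass to the coalgebra picture: an $L_\infty$-structure on $W$ is the same as a degree $1$ square-zero coderivation $Q$ of the cofree conilpotent cocommutative coalgebra $\bar{\Sym}(sW)$, the displayed multibrackets assemble into one such $Q$, and it remains to prove $Q^2=0$. Expanding $Q^2$ and sorting by the four families of brackets, $Q^2=0$ should reduce to a short list of identities, each a consequence of the graded Jacobi identity in $L$ together with the $V$-datum axioms ($\h$ abelian; $\ker P$ a subalgebra, so that $P$ annihilates $[\ker P,\ker P]$; $P^2=P$ with image $\h$; and $D^2=0$). The recurring manipulation will be to write $[\Delta,P(y)]=[\Delta,y]-[\Delta,(1-P)y]$ with $(1-P)y\in\ker P$, and to telescope $[\Delta,[\cdots[[\Delta,a_1],a_2],\cdots,a_k]]$ through the Jacobi identity into iterated brackets with one internal $\Delta$ inserted, the leftover $[\Delta,\Delta]$-term vanishing. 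An equivalent but more hands-on route is to verify the generalized Jacobi identities directly, grouping the sum $\sum_i\sum_{\sigma\in\mathbb S_{(i,n-i)}}\varepsilon(\sigma)\,l_{n-i+1}(l_i(x_{\sigma(1)},\cdots,x_{\sigma(i)}),x_{\sigma(i+1)},\cdots,x_{\sigma(n)})$ by the number of arguments lying in $s^{-1}L$: the case of zero such arguments is exactly the ``small'' higher-derived-bracket statement, and the remaining cases reduce to the same telescoping.

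Granting the $L_\infty$-algebra on $s^{-1}L\oplus\h$, I would finish with the last two assertions. For a graded Lie subalgebra $L'\subseteq L$ with $[\Delta,L']\subseteq L'$, every bracket having an argument in $s^{-1}L'$ applies only $[\cdot,\cdot]$, $D=[\Delta,\cdot]$ and $P$ to that argument, so all structure maps restrict to $s^{-1}L'\oplus\h$; taking $L'=0$ then leaves exactly $(\h,\{l_k\}_{k\ge1})$ with the higher derived brackets. The hard part will not be any of these structural steps but the sign-and-shuffle bookkeeping in $Q^2=0$ (equivalently in the generalized Jacobi identities): one must check that the terms in which an iterated bracket escapes $\h$ into $\ker P$ are precisely those produced by the $s^{-1}L$-valued brackets and cancel against them. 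Since no idea beyond $[\Delta,\Delta]=0$, the Jacobi identity, and the $V$-datum axioms enters, this bookkeeping is exactly the point at which one is tempted to defer to \cite{Vo}.
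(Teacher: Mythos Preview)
The paper does not prove this theorem at all: it is stated with the attribution ``{\rm (\cite{Vo})}'' and no proof is given. Your first sentence (``one legitimate option is simply to cite it'') is therefore exactly what the paper does, and the remainder of your proposal---a sketch of how Voronov's argument goes via $D=[\Delta,\cdot]$ being a square-zero derivation, the coalgebraic $Q^2=0$ reformulation, and the telescoping Jacobi computation---is additional content beyond what the paper supplies.
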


  $L_\infty$-algebras were constructed using the above method to study simultaneous deformations of morphisms between Lie algebras in \cite{Fregier-Zambon-1,Fregier-Zambon-2}, and to study simultaneous deformations of relative Rota-Baxter Lie algebras in \cite{LST}.

\subsection{The controlling $L_\infty$-algebra of Lie-Leibniz triples}

Let $\g$ and $V$ be two vector spaces. Then we have a graded Lie
algebra $(\oplus_{n=0}^{+\infty}C^{n}(\g\oplus V,\g\oplus
V),[\cdot,\cdot]_{\B})$. This graded Lie algebra gives rise to
V-data therefore also to an $L_\infty$-algebra.
\begin{pro}\label{pro:VdataL}
We have $V$-data $(L,\h,P,\Delta)$ as follows:
\begin{itemize}
\item[$\bullet$] the graded Lie algebra $(L,[\cdot,\cdot])$ is given by $\big(\oplus_{n=0}^{+\infty}C^{n}(\g\oplus V,\g\oplus V),[\cdot,\cdot]_{\B}\big)$;
\item[$\bullet$] the abelian graded Lie subalgebra $\h$ is given by
\begin{equation}\label{defi:h}
\h:=\oplus_{n=0}^{+\infty}\Hom(\otimes^{n+1}V,\g);
\end{equation}
\item[$\bullet$] $P:L\lon L$ is the projection onto the subspace $\h$;
\item[$\bullet$] $\Delta=0$.
\end{itemize}

Consequently, we obtain an $L_\infty$-algebra $(s^{-1}L\oplus\h,\{l_k\}_{k=1}^{+\infty})$, where $l_i$ are given by
\begin{eqnarray*}
l_1(s^{-1}Q,\theta)&=&P(Q),\\
  l_2(s^{-1}Q,s^{-1}Q')&=&(-1)^Qs^{-1}[Q,Q']_{\B}, \\
l_k(s^{-1}Q,\theta_1,\cdots,\theta_{k-1})&=&P[\cdots[Q,\theta_1]_{\B},\cdots,\theta_{k-1}]_{\B},
\end{eqnarray*}
for homogeneous elements   $\theta,\theta_1,\cdots,\theta_{k-1}\in \h$, homogeneous elements  $Q,Q'\in L$ and all the other possible combinations vanish.
\end{pro}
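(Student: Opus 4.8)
The plan is to verify that the quadruple $(L,\h,P,\Delta)$ displayed above satisfies the four axioms of $V$-data, and then to invoke Theorem~\ref{thm:db-big-homotopy-lie-algebra} verbatim. That $(L,[\cdot,\cdot]_\B)=\big(\oplus_{n\ge 0}C^n(\g\oplus V,\g\oplus V),[\cdot,\cdot]_\B\big)$ is a graded Lie algebra is Balavoine's theorem recalled above, and $\Delta=0$ plainly lies in $\ker(P)^1$ with $[\Delta,\Delta]=0$. So the substantive points are just: $(a)$ $P$ is a projection whose image is $\h$ and whose kernel is a graded Lie subalgebra of $L$, and $(b)$ $\h$ is an \emph{abelian} graded Lie subalgebra of $L$.

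For $(b)$ I would decompose each $C^n(\g\oplus V,\g\oplus V)=\Hom(\otimes^{n+1}(\g\oplus V),\g\oplus V)$ into summands according to the choice, for each of the $n+1$ arguments, of $\g$ or $V$, together with a choice of $\g$ or $V$ for the output. Then $\h^n=\Hom(\otimes^{n+1}V,\g)$ is exactly the summand consisting of the maps supported on tuples all of whose entries lie in $V$ and with output in $\g$; $P$ is by construction the projection onto $\oplus_n\h^n=\h$, so $P^2=P$ and $\Img P=\h$. If $f\in\Hom(\otimes^{p+1}V,\g)$ and $g\in\Hom(\otimes^{q+1}V,\g)$ both lie in $\h$, then in every term of $f\,\bar{\circ}\,g$ the inner value $g(\cdots)\in\g$ is inserted into an argument slot of $f$, which annihilates any tuple having a $\g$-entry; hence $f\,\bar{\circ}\,g=0$ and symmetrically $g\,\bar{\circ}\,f=0$, so $[f,g]_\B=0$. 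This is precisely the degree-count argument already used in the proof of Theorem~\ref{average-MC}.

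For $(a)$ the point is that $\ker(P)$ consists of those $F\in L$ all of whose $\Hom(\otimes^{n+1}V,\g)$-components vanish, i.e. such that whenever $F$ is evaluated on a tuple of arguments all lying in $V$ the $\g$-component of the result is $0$. Given $f\in\ker(P)^p$ and $g\in\ker(P)^q$ I would check that $[f,g]_\B(v_1,\dots,v_{p+q+1})$ has zero $\g$-component for all $v_1,\dots,v_{p+q+1}\in V$: in each summand of $f\,\bar{\circ}\,g$ the inner argument $g(v_{\sigma(k)},\dots)$ has all its entries in $V$, so its $\g$-component vanishes since $g\in\ker P$, which means $f$ gets evaluated on a tuple whose entries all lie in $V$, and then its $\g$-component vanishes since $f\in\ker P$; the same applies to $g\,\bar{\circ}\,f$, so $P[f,g]_\B=0$, and by bilinearity $\ker(P)$ is closed under $[\cdot,\cdot]_\B$. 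One may repackage all of this through the bidegree $l\,|\,k$ of homogeneous cochains: $\h^n$ is the piece of bidegree $-1\,|\,n{+}1$, $\ker(P)$ is the sum of the pieces of bidegree $l\,|\,k$ with $l\ge 0$, and $[\cdot,\cdot]_\B$ adds bidegrees. This kernel check is the only real (and still very light) obstacle; everything else is formal.

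Once $(L,\h,P,\Delta)$ is seen to be $V$-data, Theorem~\ref{thm:db-big-homotopy-lie-algebra} immediately supplies the $L_\infty$-algebra on $s^{-1}L\oplus\h$, and I would finish by substituting $\Delta=0$ into the structure maps there: $l_1(s^{-1}Q,\theta)=P(Q)$ because the $s^{-1}L$-slot $-s^{-1}[\Delta,Q]_\B$ is zero; $l_2(s^{-1}Q,s^{-1}Q')=(-1)^Qs^{-1}[Q,Q']_\B$ is unchanged; $l_k(s^{-1}Q,\theta_1,\dots,\theta_{k-1})=P[\cdots[Q,\theta_1]_\B,\dots,\theta_{k-1}]_\B$ for $k\ge 2$ is unchanged; and $l_k(\theta_1,\dots,\theta_k)=P[\cdots[\Delta,\theta_1]_\B,\dots,\theta_k]_\B=0$ since $\Delta=0$, all remaining products vanishing by the theorem. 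This reproduces the formulas in the statement.
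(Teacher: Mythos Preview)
Your proof is correct and follows the same approach as the paper: verify the $V$-data axioms (the paper simply asserts that $\h$ is abelian and that $\ker P$ is a graded Lie subalgebra, calling these ``obvious'' and ``straightforward'', whereas you actually carry out the degree/bidegree argument) and then invoke Theorem~\ref{thm:db-big-homotopy-lie-algebra} with $\Delta=0$. Your explicit check that $\ker P$ is closed under $[\cdot,\cdot]_\B$ and your specialization of the structure maps are accurate and match the paper's conclusions.
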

\begin{proof}
  Note that
  $
  \h=\oplus_{n=0}^{+\infty}\Hom(\otimes^{n+1}V,\g)
  $
  is an abelian subalgebra of $(L,[\cdot,\cdot])$. Since $P$ is the projection onto $\h$, it is obvious that $P\circ P=P$. It is also straightforward to see that the kernel  of $P$ is a  graded Lie subalgebra of $(L,[\cdot,\cdot])$. Thus $(L,\h,P,\Delta=0)$ are V-data.

   The other conclusions follows immediately from Theorem \ref{thm:db-big-homotopy-lie-algebra}.
\end{proof}

By Theorem \ref{lem:MC-algrep}, we obtain that
\begin{equation}\label{defi:Lprime}
L'=\oplus_{n=0}^{+\infty}\hat{C}^{n}(\g\oplus V,\g\oplus V),\quad\mbox{where}\quad \hat{C}^{n}(\g\oplus V,\g\oplus V)=\Hom(\wedge^{n+1}\g,\g)\oplus\Hom(\wedge^{n}\g\otimes V,V)
 \end{equation}is a graded Lie subalgebra of $\big(\oplus_{n=0}^{+\infty}C^{n}(\g\oplus V,\g\oplus V),[\cdot,\cdot]_{\B}\big)$. Then we have the following result.

\begin{pro}\label{cor:Linfty}
 With above notations,  $(s^{-1}L'\oplus \h,\{l_i\}_{i=1}^{+\infty})$ is an $L_\infty $-algebra, where $l_i$ are given by
 \begin{eqnarray*}
  l_2(s^{-1}Q,s^{-1}Q')&=&(-1)^Qs^{-1}[Q,Q']_{\B}, \\
l_k(s^{-1}Q,\theta_1,\cdots,\theta_{k-1})&=&P[\cdots[Q,\theta_1]_{\B},\cdots,\theta_{k-1}]_{\B},
\end{eqnarray*}
for homogeneous elements   $\theta_1,\cdots,\theta_{k-1}\in \h$, homogeneous elements  $Q,Q'\in L'$, and all the other possible combinations vanish.
\end{pro}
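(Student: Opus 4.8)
The plan is to obtain this as an immediate corollary of the last part of Theorem \ref{thm:db-big-homotopy-lie-algebra}. That theorem asserts that for any graded Lie subalgebra $L'$ of $L$ satisfying $[\Delta,L']\subset L'$, the subspace $s^{-1}L'\oplus\h$ is an $L_\infty$-subalgebra of the big $L_\infty$-algebra constructed in Proposition \ref{pro:VdataL}, with structure maps simply the restrictions of the ones listed there. So I would first verify the two hypotheses, and then read off the restricted maps.

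For the hypotheses: the fact that $L'=\oplus_n\hat{C}^n(\g\oplus V,\g\oplus V)$ is a graded Lie subalgebra of $(L,[\cdot,\cdot]_\B)$ is exactly the content of \eqref{defi:Lprime}, which in turn is a restatement of Theorem \ref{lem:MC-algrep} (the horizontal lifts of the $\g$-valued skew maps on $\g$ and the $V$-valued maps on $\wedge^\bullet\g\otimes V$ are closed under the Balavoine bracket). The condition $[\Delta,L']\subset L'$ is vacuous here, since the $V$-data of Proposition \ref{pro:VdataL} has $\Delta=0$. With both hypotheses in hand, Theorem \ref{thm:db-big-homotopy-lie-algebra} gives the $L_\infty$-algebra structure, and substituting $\Delta=0$ into its formulas produces $l_2(s^{-1}Q,s^{-1}Q')=(-1)^Q s^{-1}[Q,Q']_\B$ (valued in $s^{-1}L'$ because $L'$ is bracket-closed) and $l_k(s^{-1}Q,\theta_1,\ldots,\theta_{k-1})=P[\cdots[Q,\theta_1]_\B,\ldots,\theta_{k-1}]_\B$ for $k\ge2$, while every bracket all of whose arguments lie in $\h$ is of the shape $P[\cdots[\Delta,\cdot],\ldots]$ and hence vanishes.

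The one point that is not purely formal, and which I expect to be the only thing worth spelling out, is why there is no $l_1$ in the statement even though Proposition \ref{pro:VdataL} records $l_1(s^{-1}Q,\theta)=P(Q)$. I would argue that $P$ kills $L'$: by definition $P$ projects a cochain in $C^\bullet(\g\oplus V,\g\oplus V)$ onto its component lying in $\h=\oplus_n\Hom(\otimes^{n+1}V,\g)$, i.e.\ the component whose inputs are all taken from $V$ and whose output lies in $\g$; but the horizontal lift of an element of $\hat{C}^n(\g\oplus V,\g\oplus V)=\Hom(\wedge^{n+1}\g,\g)\oplus\Hom(\wedge^n\g\otimes V,V)$ only has a ``$(n{+}1)$ $\g$-inputs, $\g$-output'' part and a ``$n$ $\g$-inputs, one $V$-input, $V$-output'' part, neither of which meets $\h$. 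Hence $P|_{L'}=0$, so $l_1\equiv0$ on $s^{-1}L'\oplus\h$ and the clause ``all the other possible combinations vanish'' is correct. After that, nothing remains: there is no genuinely hard step here, the real work having already been done in Theorem \ref{thm:db-big-homotopy-lie-algebra} and in establishing \eqref{defi:Lprime}.
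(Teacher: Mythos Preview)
Your proposal is correct and follows exactly the same route as the paper's own proof, which simply reads ``It follows from Theorem \ref{thm:db-big-homotopy-lie-algebra} and Proposition \ref{pro:VdataL}.'' Your write-up is in fact more complete, since you explicitly check the one nontrivial point the paper leaves implicit, namely that $P|_{L'}=0$ (so the $l_1$ term inherited from Proposition \ref{pro:VdataL} vanishes on $s^{-1}L'\oplus\h$).
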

\begin{proof}
  It follows from Theorem \ref{thm:db-big-homotopy-lie-algebra} and Proposition \ref{pro:VdataL}.
\end{proof}

Now we are ready to give the controlling $L_\infty$-algebra of Lie-Leibniz triples, which is the main result in this subsection.

\begin{thm}\label{deformation-rota-baxter}
  Let $\g$ and $V$ be two vector spaces,  $\mu\in\Hom(\wedge^2\g,\g),~\rho\in\Hom(\g\otimes V,V)$ and $T\in\Hom(V,\g)$. Then $((\g,\mu),(V;\rho),T)$ is a Lie-Leibniz triple if and only if  $(s^{-1}(\mu\boxplus\rho),T)$ is a Maurer-Cartan element of the   $L_\infty$-algebra $(s^{-1}L'\oplus \h,\{l_i\}_{i=1}^{+\infty})$ given in Proposition \ref{cor:Linfty}.
\end{thm}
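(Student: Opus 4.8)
The plan is to unwind the Maurer--Cartan equation of the $L_\infty$-algebra $(s^{-1}L'\oplus\h,\{l_i\}_{i=1}^{+\infty})$ at the element $(s^{-1}(\mu\boxplus\rho),T)$ and match the resulting three conditions with the three pieces of data defining a Lie-Leibniz triple: (a) $\mu$ is a Lie bracket together with $\rho$ being a representation, (b) $T$ satisfies the embedding tensor (quadratic) constraint, and (c) a compatibility forcing these to sit together. First I would note that the degree $0$ part of $s^{-1}L'\oplus\h$ is $(s^{-1}L')^0\oplus\h^0 = \hat C^1(\g\oplus V,\g\oplus V)\oplus\Hom(V,\g)$, and that the candidate element $(s^{-1}(\mu\boxplus\rho),T)$ indeed lives there since $\mu\boxplus\rho\in\hat C^1(\g\oplus V,\g\oplus V)$ and $T\in\Hom(V,\g)=\h^0$. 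So the Maurer--Cartan equation $\sum_k\frac1{k!}l_k(\cdot,\dots,\cdot)=0$ makes sense.

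Next I would compute each $l_k$ applied to copies of $(s^{-1}(\mu\boxplus\rho),T)$ using the explicit formulas from Proposition~\ref{cor:Linfty}. Write $\Delta\!\!\!\!\Delta := \mu\boxplus\rho$ for brevity. Since $l_1$ vanishes on this subalgebra (as $\Delta=0$ in the $V$-data), the $k=1$ term is absent. The $k=2$ term $l_2(s^{-1}\Delta\!\!\!\!\Delta,s^{-1}\Delta\!\!\!\!\Delta) = -s^{-1}[\Delta\!\!\!\!\Delta,\Delta\!\!\!\!\Delta]_\B$ contributes the "$s^{-1}L'$-component", and by Lemma~\ref{lem:LeibnizMC} (applied inside $\g\oplus V$) together with Theorem~\ref{lem:MC-algrep}, $[\Delta\!\!\!\!\Delta,\Delta\!\!\!\!\Delta]_\B=0$ is exactly the statement that $((\g,\mu),(V;\rho))$ is a \LR pair. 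The remaining terms $l_k(s^{-1}\Delta\!\!\!\!\Delta,T,\dots,T) = P[\cdots[\Delta\!\!\!\!\Delta,T]_\B,T]_\B,\cdots,T]_\B$ for $k\geq 2$ contribute the "$\h$-component"; by the derived-bracket description in Theorem~\ref{average-MC} (equation~\eqref{eq:glaO}), the combination $\frac1{2!}l_2(s^{-1}\Delta\!\!\!\!\Delta,T)+\frac1{3!}l_3(s^{-1}\Delta\!\!\!\!\Delta,T,T)$ with the appropriate sign is precisely $\frac12\Courant{T,T}$, and one checks via degree reasons (the subspace $\oplus_k\Hom(\otimes^kV,\g)$ being abelian under $[\cdot,\cdot]_\B$, so no iterated bracket of three or more $T$'s survives after applying $P$) that $l_k(s^{-1}\Delta\!\!\!\!\Delta,T,\dots,T)=0$ for $k\geq 4$. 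Hence the $\h$-component of the Maurer--Cartan equation reduces to $\frac12\Courant{T,T}=0$, which by Theorem~\ref{average-MC} is exactly the embedding tensor condition on $T$.

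Putting the two components together, $(s^{-1}\Delta\!\!\!\!\Delta,T)$ is Maurer--Cartan if and only if $[\Delta\!\!\!\!\Delta,\Delta\!\!\!\!\Delta]_\B=0$ and $\Courant{T,T}=0$, i.e.\ if and only if $((\g,\mu),(V;\rho))$ is a \LR pair and $T$ is an embedding tensor on it --- that is, $((\g,\mu),(V;\rho),T)$ is a Lie-Leibniz triple. The main obstacle I anticipate is bookkeeping: carefully tracking the desuspension signs and the $\frac1{k!}$ coefficients so that the $k=2$ and $k=3$ $\h$-terms assemble into precisely $\frac12\Courant{T,T}$ (using Proposition~\ref{pro:danddT} / equation~\eqref{eq:glaO} which already packages $\Courant{\theta,\phi}=(-1)^{m-1}[[\Delta\!\!\!\!\Delta,\theta]_\B,\phi]_\B$), and confirming rigorously that all higher brackets vanish for degree reasons rather than by ad hoc computation. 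Once the derived-bracket identity~\eqref{eq:glaO} is invoked, however, this is essentially a matching of terms already established in Theorem~\ref{average-MC} and Theorem~\ref{lem:MC-algrep}.
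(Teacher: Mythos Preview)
Your proposal is correct and follows essentially the same approach as the paper: expand the Maurer--Cartan sum, observe that only the $l_2$ and $l_3$ terms survive, and identify the resulting conditions $[\mu\boxplus\rho,\mu\boxplus\rho]_\B=0$ and $[[\mu\boxplus\rho,T]_\B,T]_\B=0$ with Theorem~\ref{lem:MC-algrep} and Theorem~\ref{average-MC} respectively. One small bookkeeping correction: the term $l_2(s^{-1}(\mu\boxplus\rho),T)=P[\mu\boxplus\rho,T]_\B$ actually vanishes (by a type count it has no component in $\Hom(\otimes^2V,\g)$), so the entire $\h$-contribution comes from $l_3$, with the factor $\tfrac{1}{3!}\cdot 3=\tfrac12$ arising from the three ways to place $s^{-1}(\mu\boxplus\rho)$ among the inputs; similarly $l_k$ vanishes for $k\ge 4$ because $[[[\mu\boxplus\rho,T]_\B,T]_\B,T]_\B=0$ by the same type count rather than merely because $\h$ is abelian.
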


\begin{proof}
 Let $(s^{-1}(\mu\boxplus\rho),T)$ be a Maurer-Cartan element of $(s^{-1}L'\oplus \h,\{l_i\}_{i=1}^{+\infty})$. Then we have
\begin{eqnarray*}
&&\sum_{k=1}^{+\infty}\frac{1}{k!}l_k\Big((s^{-1}(\mu\boxplus\rho),T),\cdots,(s^{-1}(\mu\boxplus\rho),T)\Big)\\
&=&\frac{1}{2!}l_2\Big((s^{-1}(\mu\boxplus\rho),T),(s^{-1}(\mu\boxplus\rho),T)\Big)+\frac{1}{3!}l_3\Big((s^{-1}(\mu\boxplus\rho),T),(s^{-1}(\mu\boxplus\rho),T),(s^{-1}(\mu\boxplus\rho),T)\Big)\\
&=&\Big(-s^{-1}\frac{1}{2}[\mu\boxplus\rho,\mu\boxplus\rho]_{\B},\frac{1}{2}[[\mu\boxplus\rho,T]_{\B},T]_{\B}\Big)\\
&=&(0,0).
\end{eqnarray*}
Thus, we obtain $$
 [\mu\boxplus\rho,\mu\boxplus\rho]_{\B}=0,\quad
[[\mu\boxplus\rho,T]_{\B},T]_{\B}=0.
$$
 By Theorem \ref{lem:MC-algrep} and Theorem \ref{average-MC},  $(\g,\mu)$ is a Lie algebra, $(V;\rho)$ is its representation and   $T$ is an embedding tensor on  the \LR pair $((\g,\mu),(V;\rho))$.
\end{proof}

\section{Cohomology of Lie-Leibniz triples and applications}\label{sec:cohomology}

In this section, we introduce a cohomology theory of Lie-Leibniz
triples and justify it by using it to classify infinitesimal
deformations and central extensions.
\subsection{Regular cohomology of Lie-Leibniz triples   and infinitesimal deformations}\label{sec:cohomology1}

In this subsection,  first we recall the cohomology of a \LR pair, and then we introduce a regular cohomology of a Lie-Leibniz triple. Finally, we use the second cohomology group to characterize infinitesimal deformations of a Lie-Leibniz triple.

Let $((\g,\mu),(V;\rho))$ be a \LR pair.  Define the set of $0$-cochains $\frkC^0(\g,\rho)$ to be $0$. For $n\geq 1$, we define the set of $n$-cochains $\frkC^n(\g,\rho)$  by
$$
\frkC^n(\g,\rho)=\Hom(\wedge^{n}\g,\g)\oplus \Hom(\wedge^{n-1}\g\otimes V,V).
$$
Define the coboundary operator $\delta:\frkC^n(\g,\rho)\lon \frkC^{n+1}(\g,\rho)$ by
\begin{equation}\label{eq:defdel}
\delta f=(-1)^{n-1}[ \mu\boxplus\rho, {f}]_{\B}.
\end{equation}
By Theorem \ref{lem:MC-algrep}, $\delta^2=0$. Thus we obtain a cochain complex $(\oplus_n\frkC^n(\g,\rho),\delta)$.
\begin{defi}
  The cohomology of the cochain complex $(\oplus_n\frkC^n(\g,\rho),\delta)$ is called the {\bf cohomology of the \LR pair}.
\end{defi}

Now we give the precise formula of the coboundary operator $\delta$. Write $f=(f_\g,f_V)$ and $\delta f=\Big((\delta f)_\g,(\delta f)_V\Big)$, where $f_\g\in \Hom(\wedge^n\g,\g)$, $f_V\in\Hom(\wedge^{n-1}\g\otimes V,V)$, $(\delta f)_\g\in \Hom(\wedge^{n+1}\g,\g)$ and $(\delta f)_V\in\Hom(\wedge^{n}\g\otimes V,V)$.  Then we have
$$
(\delta f)_\g=(-1)^{n-1}[ {\mu} , {f}_\g]_{\B}=\dM_\CE f_\g,
$$
where $ \dM_\CE: \Hom(\wedge^n\g,\g)\lon  \Hom(\wedge^{n+1}\g,\g)$ is the  {\bf Chevalley-Eilenberg coboundary operator} of the Lie algebra $(\g,[\cdot,\cdot]_\g)$, and $(\delta f)_V$  is given by
\begin{eqnarray*}
\nonumber&&(\delta f)_V(x_1,\cdots,x_{n},v)=(-1)^{n-1}[ \mu\boxplus\rho , f]_{\B}(x_1,\cdots,x_{n},v)\\
&=&(-1)^{n-1}\Big((\mu\boxplus\rho)\bar{\circ} f-(-1)^{n-1}f\bar{\circ}(\mu\boxplus\rho)\Big)(x_1,\cdots,x_{n},v)\\
\label{cohomology-algebra-rep-V}&=&\sum_{1\le i<j\le n}(-1)^if_V(x_1,\cdots,\hat{x_i},\cdots,x_{j-1},[x_i,x_j]_\g,x_{j+1},\cdots,x_{n},v)+(-1)^{n-1}\rho(f_\g(x_1,\cdots,x_{n}))v\\
\nonumber&&+\sum_{i=1}^{n}(-1)^{i+1}\Big(\rho(x_i)f_V(x_1,\cdots,\hat{x}_i,\cdots,x_n,v)-f_V\big(x_1,\cdots,\hat{x}_i,\cdots,x_n,\rho(x_i)v\big)\Big),
\end{eqnarray*}
for all $x_1,\cdots,x_{n}\in\g$ and $v\in V.$

Now we are ready to define the cohomology of a Lie-Leibniz triple. Let $((\g,\mu),(V;\rho),T)$ be a Lie-Leibniz triple, i.e. $\rho:\g\lon\gl(V)$ is a representation of the Lie algebra $(\g,\mu)$ and $T:V\lon\g$ is an embedding tensor. Define the set of  $0$-cochains $\frkC^0(\g,\rho,T)$ to be $0$. For $n\geq 1$, define the space of    $n$-cochains $\frkC^n(\g,\rho,T)$ by
\begin{eqnarray*}
\frkC^n(\g,\rho,T)&:=&\frkC^n(\g,\rho)\oplus \frkC^n(T)\\
&=&\Big(\Hom(\wedge^n\g,\g)\oplus \Hom(\wedge^{n-1}\g\otimes V,V)\Big)\oplus\Hom(\otimes^{n-1}V,\g).
\end{eqnarray*}

Define the {\bf coboundary operator} $\huaD:\frkC^n(\g,\rho,T)\lon \frkC^{n+1}(\g,\rho,T)$ by
\begin{eqnarray}\label{cohomology-of-RB}
  \huaD(f,\theta)&=&(-1)^{n-2}\big(-[\mu\boxplus\rho,f]_{\B},[[\mu\boxplus\rho,T]_{\B},\theta]_{\B}+\frac{1}{n!}\underbrace{[\cdots[[}_nf,T]_{\B},T]_{\B},\cdots,T]_{\B}\big)\\
\label{eq:Dcomp} &=&(\delta f, \partial_T \theta+\Omega_Tf), \quad \forall f\in \frkC^n(\g,\rho), \theta\in \frkC^n(T),
\end{eqnarray}
where $\delta$ and $\partial_T$ are given by \eqref{eq:defdel} and \eqref{eq:defpt}, and $\Omega_T:\frkC^n(\g,\rho)\lon \frkC^{n+1}(T)$ is defined by
\begin{eqnarray}\label{key-cohomology-T-abstract}
\Omega_Tf:=(-1)^{n-2} \frac{1}{n!}\underbrace{[\cdots[[}_nf,T]_{\B},T]_{\B},\cdots,T]_{\B}.
\end{eqnarray}
The precise formula for $\Omega_T$ is given as follows.
\begin{lem}The operator $\Omega_T:\Hom(\wedge^n\g,\g)\oplus \Hom(\wedge^{n-1}\g\otimes V,V)\lon\Hom(\otimes^{n}V,\g)$ is given by
\begin{eqnarray}\label{key-cohomology-T}
 \Omega_T(f_\g,f_V)(v_1,\cdots,v_n)=(-1)^{n}\Big(f_\g(Tv_1,\cdots,Tv_n)-Tf_V (Tv_1,\cdots,Tv_{n-1}, v_n)\Big),
\end{eqnarray}
where  $f_\g\in \Hom(\wedge^n\g,\g),~f_V\in \Hom(\wedge^{n-1}\g\otimes V,V)$ and $v_1,\cdots,v_n\in V.$

\end{lem}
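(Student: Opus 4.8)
The plan is to unwind the abstract formula \eqref{key-cohomology-T-abstract} by computing the iterated Balavoine bracket $\underbrace{[\cdots[[}_n\hat f,\hat T]_{\B},\hat T]_{\B},\cdots,\hat T]_{\B}$ directly, where $\hat f:=\hat{f_\g}+\hat{f_V}\in\hat{C}^{n-1}(\g\oplus V,\g\oplus V)$ is the horizontal lift of $(f_\g,f_V)$ and $\hat T\in C^{0}(\g\oplus V,\g\oplus V)$ is the horizontal lift of $T:V\lon\g$ --- so $\hat T$ vanishes on $\g$ and sends $v\in V$ to $Tv\in\g$. The crucial simplification is that $\hat T$ has degree $0$: for every homogeneous $\hat g$ we have $[\hat g,\hat T]_{\B}=\hat g\,\bar{\circ}\,\hat T-\hat T\,\bar{\circ}\,\hat g$, and since $\hat T$ is unary all the shuffle sets that occur in these compositions are trivial, so that $\hat g\,\bar{\circ}\,\hat T=\sum_{k}\hat g\circ_{k}\hat T$ simply inserts $\hat T$ into the $k$-th argument of $\hat g$ with no sign, while $\hat T\,\bar{\circ}\,\hat g=\hat T\circ\hat g$ post-composes and therefore vanishes unless $\hat g$ takes values in $V$.

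Next I would set up a bookkeeping by \emph{bidegree}: one records, for each horizontal lift, how many of its arguments it expects in $\g$ and how many in $V$, and the Balavoine bracket is additive for this bigrading. Here $\hat{f_\g}$ and $\hat{f_V}$ both have bidegree $(n-1)\mid 0$ and $\hat T$ has bidegree $-1\mid 1$, so the $n$-fold bracket has bidegree $-1\mid n$, which already shows it is an element of $\Hom(\otimes^{n}V,\g)$, as asserted. Then I would determine which monomials survive: inserting $\hat T$ into a slot that expects a $V$-argument puts a $\g$-element there and kills the term (the slots of $f_\g$ and $f_V$, and any slot already occupied by an earlier copy of $\hat T$, all expect $V$ there), while post-composing $\hat T$ onto a map already valued in $\g$ gives $0$. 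Hence only two families of monomials remain: (a) the $n$ copies of $\hat T$ get inserted, one into each slot of $\hat{f_\g}$ --- no post-composition can occur, since $\hat{f_\g}$ is $\g$-valued --- producing $f_\g(Tv_1,\cdots,Tv_n)$; and (b) $n-1$ copies of $\hat T$ get inserted into the $n-1$ $\g$-slots of $\hat{f_V}$ while exactly one copy is post-composed --- possible because $f_V$ is $V$-valued, and necessarily exactly one, since after a post-composition the map is $\g$-valued --- producing $Tf_V(Tv_1,\cdots,Tv_{n-1},v_n)$. Because $\circ_{k}$ against the unary $\hat T$ never permutes arguments, every order in which these $n$ operations can be performed yields the same term within its family; there are $n!$ such orders in each family; family (a) carries overall sign $+1$ and family (b) the sign $-1$ of its single post-composition. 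Therefore the $n$-fold bracket equals $n!\bigl(f_\g(Tv_1,\cdots,Tv_n)-Tf_V(Tv_1,\cdots,Tv_{n-1},v_n)\bigr)$ on $(v_1,\cdots,v_n)\in V^{\otimes n}$, and multiplying by the prefactor $(-1)^{n-2}/n!=(-1)^{n}/n!$ yields precisely \eqref{key-cohomology-T}.

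The main work --- and the place where the argument needs most care --- is this bookkeeping: rigorously confirming that exactly the two families (a) and (b) survive and that each occurs $n!$ times, so that the factor $1/n!$ collapses the sum of $n!$ equal terms to a single representative, while every other way of distributing the $\hat T$'s is killed by the bidegree constraints. The cleanest way to make this precise is an induction on $j$, establishing an explicit formula for $\tfrac{1}{j!}\underbrace{[\cdots[[}_j\hat f,\hat T]_{\B},\cdots,\hat T]_{\B}$ as a sum of horizontal lifts of maps on the mixed tensor powers $\g^{\otimes(n-j)}\otimes V^{\otimes j}$ and $\g^{\otimes(n-1-j)}\otimes V^{\otimes j}\otimes V$; the base case $j=1$ is a short computation of $[\hat f,\hat T]_{\B}$, and the inductive step is exactly the insertion/post-composition dichotomy above.
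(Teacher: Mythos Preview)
Your argument is correct and follows essentially the same route as the paper: both compute the $n$-fold Balavoine bracket by tracking which insertion/post-composition patterns of $\hat T$ survive the constraints $\hat T\circ\hat T=0$ and $\hat T|_\g=0$, and both find exactly the two families with multiplicity $n!$. The only difference is packaging: the paper passes to coderivations $\bar f,\bar T$ on $\bar\Ten(\g\oplus V)$ and invokes the binomial expansion $\sum_{i=0}^{n}(-1)^i\binom{n}{i}\bar T^{\,i}\circ\bar f\circ\bar T^{\,n-i}$ of the iterated commutator, after which the vanishing of $\bar T^{\,i}$ on length-one tensors for $i\ge 2$ leaves only $i=0$ (your family (a)) and $i=1$ (your family (b)); you instead keep the Balavoine bracket explicit and use bidegree bookkeeping. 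One wording slip: your parenthetical ``the slots of $f_\g$ and $f_V$, and any slot already occupied by an earlier copy of $\hat T$, all expect $V$ there'' is backwards --- the slots of $\hat{f}_\g$ and the first $n-1$ slots of $\hat{f}_V$ expect $\g$, not $V$; it is precisely the \emph{last} slot of $\hat{f}_V$ and any slot already filled by $\hat T$ that expect $V$ and hence kill a further insertion. Your subsequent case analysis uses the correct statement, so this does not affect the proof.
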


\begin{proof}
  By Remark \ref{B-coder}, it is convenient to view the elements of $\oplus_{n=0}^{+\infty}C^{n}(\g\oplus V;\g\oplus V)$ as coderivations of $\bar{\Ten}(\g\oplus V)$. The coderivations corresponding to $f$ and $T$ will be denoted by $\bar{f}$ and $\bar{T}$ respectively.
 Then, by induction, we have
\begin{eqnarray*}
&&\underbrace{[\cdots[[}_nf,T]_{\B},T]_{\B},\cdots,T]_{\B}\big((x_1,v_1),\cdots,(x_{n},v_{n})\big)\\
&=&\sum_{i=0}^{n}(-1)^{i}{n\choose i}\big(\underbrace{\bar{T}\circ\cdots\circ\bar{T}}_i\circ (\bar{f}_\g+\bar{f}_V)\circ \underbrace{\bar{T}\cdots\circ \bar{T}}_{n-i}\big)\big((x_1,v_1),\cdots,(x_{n},v_{n})\big)\\
&=&\big(n!f_\g(Tv_1,\cdots,Tv_n),0\big)+\Big((-1)^1n(n-1)!Tf_V\big(Tv_1,\cdots,Tv_{n-1},v_n\big),0\Big)\\
&=&n!\Big (f_\g(Tv_1,\cdots,Tv_n)-Tf_V\big(Tv_1,\cdots,Tv_{n-1},v_n\big),0\Big),
\end{eqnarray*}
which implies that \eqref{key-cohomology-T} holds.
\end{proof}

\begin{thm}\label{cohomology-of-average}
  With the above notations,  $(\oplus _{n=0}^{+\infty}\frkC^n(\g,\rho,T),\huaD)$ is a cochain complex, i.e. $\huaD\circ \huaD=0.$
\end{thm}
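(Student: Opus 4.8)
The plan is to recognize $\huaD$ as a degreewise rescaling of the twisted differential of the $L_\infty$-algebra from Proposition \ref{cor:Linfty}, and then invoke Getzler's twisting result.

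Because $((\g,\mu),(V;\rho),T)$ is a Lie-Leibniz triple, Theorem \ref{deformation-rota-baxter} says that $\alpha:=(s^{-1}(\mu\boxplus\rho),T)$ is a Maurer-Cartan element of the $L_\infty$-algebra $(s^{-1}L'\oplus\h,\{l_i\}_{i=1}^{+\infty})$. By Proposition \ref{twisted-homotopy-lie}, the twisted collection $\{l_i^\alpha\}_{i=1}^{+\infty}$ is again an $L_\infty$-algebra, so its unary operation satisfies $l_1^\alpha\circ l_1^\alpha=0$ (the $n=1$ instance of the generalized Jacobi identity). I would then identify, for $n\ge 1$,
\[
\frkC^n(\g,\rho,T)=\bigl(\Hom(\wedge^n\g,\g)\oplus\Hom(\wedge^{n-1}\g\otimes V,V)\bigr)\oplus\Hom(\otimes^{n-1}V,\g)
\]
with the degree-$(n-2)$ part of $s^{-1}L'\oplus\h$ (the summand $\frkC^1(T)=\g$ matching the bottom of the Loday-Pirashvili complex of $(V,[\cdot,\cdot]_T)$), and show $\huaD=(-1)^{n-2}l_1^\alpha$ under this identification. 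Since $\huaD^2\big|_{\frkC^n}=(-1)^{2n-3}(l_1^\alpha)^2=0$, the theorem follows.

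Unwinding $l_1^\alpha=\sum_{k\ge0}\tfrac{1}{k!}l_{1+k}(\underbrace{\alpha,\dots,\alpha}_k,-)$ with the brackets of Proposition \ref{cor:Linfty} (the $k=0$ term vanishes, as $l_1=0$ on $s^{-1}L'\oplus\h$): on $(s^{-1}f,\theta)$ the term $l_2(\alpha,-)$ only pairs the $s^{-1}L'$-slots and yields $(-1)^{|\mu\boxplus\rho|}s^{-1}[\mu\boxplus\rho,f]_{\B}=-s^{-1}[\mu\boxplus\rho,f]_{\B}$; for $k\ge2$ the bracket $l_{1+k}$ is nonzero only on one $s^{-1}L'$-argument and $k$ $\h$-arguments, giving $P$ applied to iterated $[\cdot,\cdot]_{\B}$-brackets of $f$, or of $\mu\boxplus\rho$, with $k$ copies of $T$. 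Two facts cut this to the finite formula \eqref{cohomology-of-RB}: the Maurer-Cartan equation $[[\mu\boxplus\rho,T]_{\B},T]_{\B}=0$ kills every term containing it, leaving only $P[[\mu\boxplus\rho,T]_{\B},\theta]_{\B}=\pm\partial_T\theta$ (Proposition \ref{pro:danddT}); and the arity/coderivation computation already used for the closed form of $\Omega_T$ shows that $P$ applied to the $m$-fold bracket $[\cdots[[f,T]_{\B},T]_{\B},\cdots,T]_{\B}$ vanishes unless $m=n$ for $f\in\frkC^n(\g,\rho)$ (the maps $T$ only lower the $V$-content of an input, while $f$ demands a precise amount), leaving $\tfrac{1}{n!}$ times that bracket, which is $\pm\Omega_T f$. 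Reconciling the residual signs with the desuspension $s^{-1}$ matches \eqref{cohomology-of-RB}--\eqref{eq:Dcomp}.

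I expect the only substantive work to be that last sign bookkeeping through $s^{-1}$ and through the permutations hidden in $l_{1+k}(\alpha,\dots,\alpha,-)$. A more hands-on alternative avoids it: from \eqref{eq:Dcomp}, $\huaD^2(f,\theta)=(\delta^2 f,\ \partial_T^2\theta+\partial_T\Omega_Tf+\Omega_T\delta f)$, where $\delta^2=0$ by Theorem \ref{lem:MC-algrep}, $\partial_T^2=0$ since $\partial_T=\pm\Courant{T,-}$ with $\Courant{T,T}=0$ (Theorem \ref{average-MC}) --- equivalently, because $\partial_T$ is a Loday-Pirashvili coboundary --- and the mixed identity $\partial_T\Omega_Tf+\Omega_T\delta f=0$ unwinds, via $\delta,\partial_T,\Omega_T$ all being built from $[\cdot,\cdot]_{\B}$ with $\mu\boxplus\rho$ and $T$, to the graded Jacobi identity in $(\oplus_nC^n(\g\oplus V,\g\oplus V),[\cdot,\cdot]_{\B})$ together with $[\mu\boxplus\rho,\mu\boxplus\rho]_{\B}=0$ and $[[\mu\boxplus\rho,T]_{\B},T]_{\B}=0$. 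That mixed-term identity is the one genuinely non-formal point in either route, and is where I would expect to spend most of the effort.
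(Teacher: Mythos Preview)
Your proposal is correct and follows essentially the same route as the paper: identify $\huaD$ with $(-1)^{n-2}l_1^{(s^{-1}(\mu\boxplus\rho),T)}$ on the twisted $L_\infty$-algebra of Proposition~\ref{cor:Linfty} and invoke Proposition~\ref{twisted-homotopy-lie}; your finiteness arguments (the Maurer--Cartan relation $[[\mu\boxplus\rho,T]_\B,T]_\B=0$ for the $\theta$-terms, the arity count for the $\Omega_T$-term) are exactly what the paper uses to reduce the infinite sum to \eqref{cohomology-of-RB}. Your alternative hands-on route, proving $\partial_T\Omega_T+\Omega_T\delta=0$ directly from graded Jacobi in $[\cdot,\cdot]_\B$, is precisely the content of Corollary~\ref{cor:relation}, which the paper obtains instead as a consequence of the theorem.
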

\begin{proof} By Theorem \ref{deformation-rota-baxter},
      $(s^{-1}(\mu\boxplus\rho),T)$ is a Maurer-Cartan element of the   $L_\infty$-algebra $(s^{-1}L'\oplus \h,\{l_i\}_{i=1}^{+\infty})$ given in Proposition \ref{cor:Linfty}. By Proposition \ref{twisted-homotopy-lie}, there is a twisted  $L_\infty$-algebra $(s^{-1}L'\oplus \h,\{l_i^{(s^{-1}(\mu\boxplus\rho),T)}\}_{i=1}^{+\infty})$.
For any $(f,\theta)\in\frkC^n(\g,\rho,T)$, we have $(s^{-1}f,\theta)\in (s^{-1}L'\oplus \h)^{n-2}$ and
\begin{eqnarray*}
  l_1^{(s^{-1}(\mu\boxplus\rho),T)}(s^{-1}f,\theta)&=&\sum_{n=0}^{+\infty}\frac{1}{n!}l_{k+n}\Big(\underbrace{(s^{-1}(\mu\boxplus\rho),T),\cdots,(s^{-1}(\mu\boxplus\rho),T)}_n,(s^{-1}f,\theta)\Big)\\
 &=&l_2(s^{-1}(\mu\boxplus\rho),s^{-1}f)+l_3(s^{-1}(\mu\boxplus\rho),T,\theta)+\frac{1}{n!}l_{n+1}(f,\underbrace{T,\cdots,T}_n)\\
                        &=&\big(-s^{-1}[\mu\boxplus\rho,f]_{\B},[[\mu\boxplus\rho,T]_{\B},\theta]_{\B}+\frac{1}{n!}\underbrace{[\cdots[[}_nf,T]_{\B},T]_{\B},\cdots,T]_{\B}\big).
\end{eqnarray*}
By \eqref{cohomology-of-RB}, we deduce that
$$
\huaD(f,\theta)=(-1)^{n-2} l_1^{(s^{-1}(\mu+\rho),T)}(s^{-1}f,\theta).
$$
By  $ l_1^{(s^{-1}(\mu+\rho),T)}\circ  l_1^{(s^{-1}(\mu+\rho),T)}=0$, we obtain that $(\oplus _{n=0}^{+\infty}\frkC^n(\g,\rho,T),\huaD)$ is a cochain complex.
\end{proof}

About the relation between the operator $\delta$, $\partial_T$ and $\Omega_T$, we have
\begin{cor}\label{cor:relation}
With the above notations, we have $\Omega_T\circ \delta+\partial_T\circ \Omega_T=0$.
\end{cor}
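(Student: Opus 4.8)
The plan is to read the identity off from $\huaD\circ\huaD=0$ (Theorem~\ref{cohomology-of-average}) by comparing the two direct summands of the cochain spaces, rather than by grinding through the explicit formulas. Recall from \eqref{eq:Dcomp} that for $(f,\theta)\in\frkC^n(\g,\rho,T)=\frkC^n(\g,\rho)\oplus\frkC^n(T)$ one has $\huaD(f,\theta)=(\delta f,\ \partial_T\theta+\Omega_Tf)$, where $\delta f\in\frkC^{n+1}(\g,\rho)$ and $\partial_T\theta+\Omega_Tf\in\frkC^{n+1}(T)$.

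First I would apply $\huaD$ once more and collect the output according to the summands $\frkC^{n+2}(\g,\rho)$ and $\frkC^{n+2}(T)$, obtaining $\huaD^2(f,\theta)=\big(\delta^2 f,\ \partial_T^2\theta+\partial_T\Omega_Tf+\Omega_T\delta f\big)$. Since $\huaD^2=0$ by Theorem~\ref{cohomology-of-average}, both components vanish for every $(f,\theta)$. The $\frkC^{n+2}(\g,\rho)$-component merely recovers $\delta^2=0$. For the $\frkC^{n+2}(T)$-component I would first note that $\partial_T^2=0$: by Proposition~\ref{pro:danddT} the operator $\partial_T$ equals $\pm\,\Courant{T,\cdot}$, and $\Courant{T,T}=0$ because $T$ is an embedding tensor (Theorem~\ref{average-MC}), so $\Courant{T,\cdot}$ is a differential. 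Plugging $f=0$ into the vanishing second component re-confirms $\partial_T^2=0$, and then plugging $\theta=0$ isolates exactly $\partial_T\Omega_Tf+\Omega_T\delta f=0$, which is the assertion.

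I expect no real obstacle: the corollary is a formal consequence of $\huaD^2=0$ together with the square-zero property of the two ``pieces'' $\delta$ and $\partial_T$ that build $\huaD$; equivalently it expresses the compatibility of the two rows of the bicomplex-like structure underlying $\huaD$ (and, at the level of the proof of Theorem~\ref{cohomology-of-average}, it is the statement that $l_1^{(s^{-1}(\mu\boxplus\rho),T)}$ squares to zero, component by component). If one wanted an argument independent of Theorem~\ref{cohomology-of-average}, one could instead substitute the explicit formula \eqref{key-cohomology-T} for $\Omega_T$ together with \eqref{eq:defdel} and \eqref{eq:defpt} into $\partial_T\Omega_Tf+\Omega_T\delta f$ and check the cancellation by hand, with the terms built from $f_\g(Tv_1,\cdots,Tv_n)$ and $Tf_V(Tv_1,\cdots,v_n)$ pairing off via the Jacobi identity for $[\cdot,\cdot]_\g$, the representation property of $\rho$, and the quadratic constraint $[Tu,Tv]_\g=T(\rho(Tu)v)$; this is the longer route and I would at most relegate it to a remark.
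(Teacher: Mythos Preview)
Your proof is correct and follows essentially the same route as the paper: expand $\huaD^2(f,\theta)=0$ into its two components via \eqref{eq:Dcomp}, use $\delta^2=0$ and $\partial_T^2=0$, and read off the desired identity from the second component. The paper's version is slightly terser (it just invokes $\delta^2=\partial_T^2=0$ without justification), while you additionally explain why $\partial_T^2=0$ and mention the alternative direct computation, but the core argument is identical.
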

 \begin{proof}
   For all $(f,\theta) \in \frkC^n(\g,\rho,T)$, by the fact $\delta\circ \delta=\partial_T\circ\partial_T=0,$ we have
   $$
   0=(\huaD\circ \huaD)(f,\theta)=\huaD(\delta f, \partial_T \theta+\Omega_Tf)=(\delta(\delta f),\partial_T(\partial_T \theta+\Omega_Tf)+\Omega_T(\delta f)),
   $$
   which implies that $\Omega_T\circ \delta+\partial_T\circ \Omega_T=0$.
 \end{proof}

\begin{defi}
  The cohomology of the cochain complex $(\oplus _{n=0}^{+\infty}\frkC^n(\g,\rho,T),\huaD)$ is called the {\bf regular cohomology  of the Lie-Leibniz triple} $((\g,\mu),(V;\rho),T)$. We denote its $n$-th cohomology group by $\huaH^n_{\rm reg}(\g,\rho,T)$
\end{defi}

The formula of the coboundary operator $\huaD$ can be well-explained by the following diagram:
 \[
\small{ \xymatrix{
\cdots
\longrightarrow \frkC^n(\g,\rho)\ar[dr]^{\Omega_T} \ar[r]^{\qquad\delta} & \frkC^{n+1}(\g,\rho) \ar[dr]^{\Omega_T} \ar[r]^{\delta\qquad}  & \frkC^{n+2}(\g,\rho)\longrightarrow\cdots  \\
\cdots\longrightarrow \frkC^n(T) \ar[r]^{\qquad\partial_T} &\frkC^{n+1}(T)\ar[r]^{\partial_T\qquad}&\frkC^{n+2}(T)\longrightarrow \cdots.}
}
\]
\begin{thm}\label{cohomology-exact}
Let $((\g,\mu),(V;\rho),T)$ be a Lie-Leibniz triple. Then there is a short exact sequence of   cochain complexes:
$$
0\longrightarrow(\oplus_{n=0}^{+\infty}\frkC^n(T),\partial_T)\stackrel{\iota}{\longrightarrow}(\oplus _{n=0}^{+\infty}\frkC^n(\g,\rho,T),\huaD)\stackrel{p}{\longrightarrow} (\oplus_{n=0}^{+\infty}\frkC^n(\g,\rho),\delta)\longrightarrow 0,
$$
where $\iota$ and $p$ are the inclusion map and the projection map.

Consequently, there is a long exact sequence of the  cohomology groups:
$$
\cdots\longrightarrow\huaH^n(T)\stackrel{\huaH^n(\iota)}{\longrightarrow}\huaH^n_{\rm reg}(\g,\rho,T)\stackrel{\huaH^n(p)}{\longrightarrow} \huaH^n(\g,\rho)\stackrel{c^n}\longrightarrow \huaH^{n+1}(T)\longrightarrow\cdots,
$$
where the connecting map $c^n$ is defined by
$
c^n([f])=[\Omega_Tf],$  for all $[f]\in \huaH^n(\g,\rho).$
\end{thm}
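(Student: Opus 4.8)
The plan is to note that the short exact sequence is almost formal: by construction $\frkC^n(\g,\rho,T)=\frkC^n(\g,\rho)\oplus\frkC^n(T)$ as graded vector spaces, with $\iota(\theta)=(0,\theta)$ and $p(f,\theta)=f$, so in each degree $\iota$ is injective, $p$ is surjective, and $\ker p=\{(0,\theta)\}=\im\iota$. The content is then (a) that $\iota$ and $p$ intertwine the differentials, and (b) the explicit identification of the connecting map, which is exactly what Corollary \ref{cor:relation} provides.

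First I would verify that $\iota$ and $p$ are morphisms of cochain complexes. Using the triangular formula $\huaD(f,\theta)=(\delta f,\ \partial_T\theta+\Omega_T f)$ of \eqref{eq:Dcomp} together with the linearity of $\Omega_T$ (from \eqref{key-cohomology-T-abstract}, or its explicit formula), one computes $\huaD(\iota\theta)=\huaD(0,\theta)=(\delta 0,\ \partial_T\theta+\Omega_T 0)=(0,\partial_T\theta)=\iota(\partial_T\theta)$ and $p(\huaD(f,\theta))=\delta f=\delta(p(f,\theta))$. Since $\huaD\circ\huaD=0$ by Theorem \ref{cohomology-of-average}, and $\delta^2=0$, $\partial_T^2=0$, all three are cochain complexes and we obtain a short exact sequence of cochain complexes.

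Next I would feed this into the standard zig-zag (snake) lemma to get the long exact sequence $\cdots\longrightarrow\huaH^n(T)\longrightarrow\huaH^n_{\rm reg}(\g,\rho,T)\longrightarrow\huaH^n(\g,\rho)\stackrel{c^n}{\longrightarrow}\huaH^{n+1}(T)\longrightarrow\cdots$, and then identify $c^n$. Given a $\delta$-cocycle $f\in\frkC^n(\g,\rho)$, choose the lift $(f,0)\in\frkC^n(\g,\rho,T)$ along $p$; then $\huaD(f,0)=(\delta f,\ \Omega_T f)=(0,\ \Omega_T f)=\iota(\Omega_T f)$, so by the definition of the connecting homomorphism $c^n([f])=[\Omega_T f]$, as claimed. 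Here the class $[\Omega_T f]\in\huaH^{n+1}(T)$ makes sense because $\partial_T(\Omega_T f)=-\Omega_T(\delta f)=0$ by Corollary \ref{cor:relation}; and if $f=\delta g$ then $\Omega_T f=\Omega_T(\delta g)=-\partial_T(\Omega_T g)$ is a $\partial_T$-coboundary, which re-confirms that $c^n$ descends to cohomology (though this also follows automatically from the snake lemma).

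I do not expect a genuine obstacle: the subsection was arranged precisely so that $\huaD$ is an upper-triangular twist of $\delta\oplus\partial_T$ by the off-diagonal term $\Omega_T$, which makes the exactness automatic and lets one read the connecting map straight off $\Omega_T$. The only point demanding care is the degree bookkeeping --- the complex $\frkC^\bullet(T)$ is shifted, with $\frkC^0(T)=0$, $\frkC^1(T)=\g$ and $\frkC^n(T)=\Hom(\otimes^{n-1}V,\g)$ --- so before quoting the zig-zag lemma one should check that the three complexes sit in matching degrees as in \eqref{eq:Dcomp}, and inspect the bottom end ($n=0$, where all three groups vanish) so that the long exact sequence begins cleanly.
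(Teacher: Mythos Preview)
Your proposal is correct and follows essentially the same approach as the paper: both rely on the triangular decomposition \eqref{eq:Dcomp} of $\huaD$ to see that $\iota$ and $p$ are chain maps and to read off the connecting homomorphism as $[f]\mapsto[\Omega_T f]$. The paper's proof is in fact only two sentences, and your write-up simply spells out the standard zig-zag details (including the well-definedness check via Corollary \ref{cor:relation}) that the paper leaves implicit.
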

\begin{proof}
 By  \eqref{eq:Dcomp}, we have the short exact sequence  of cochain complexes which induces a long exact sequence of cohomology groups.   Also by \eqref{eq:Dcomp},   $c^n$ is given by
$
c^n([f])=[\Omega_Tf].$
\end{proof}

At the end of this section, we study infinitesimal deformations of Lie-Leibniz triples.
Let $\K[[t]]$ be the ring of formal power series in one variable
$t$. Let $((\g,\mu),(V;\rho),T)$ be a Lie-Leibniz triple. Let $\g[[t]]$ and $V[[t]]$ be the spaces of  formal power series in $t$ with coefficients in $\g$ and $V$ respectively.
\begin{defi}
If $[\cdot,\cdot]_t=[\cdot,\cdot]_\g+t\omega$ defines a Lie algebra structure on $\g[[t]]/(t^{2})$, $\rho_t=\rho+t\varrho$ defines a representation of the Lie algebra $(\g[[t]]/(t^{2}),[\cdot,\cdot]_t)$ on $ V[[t]]/(t^{2})$ and $T_t=T+t\huaT:V[[t]]/(t^{2})\lon\g[[t]]/(t^{2})$  satisfies
\begin{eqnarray}\label{A-operator1}
[T_tu,T_tv]_t=T_t\Big(\rho_t(T_tu)(v)\Big),\;\;\forall
u,v\in V,
\end{eqnarray} for $\omega\in\Hom(\wedge^2\g,\g),~\varrho\in\Hom(\g\otimes V,V)$ and $\huaT:V\lon\g$, we say that $(\omega,\varrho,\huaT)$ generates
  an {\bf  infinitesimal deformation}  of the Lie-Leibniz triple $((\g,\mu),(V;\rho),T)$.
\end{defi}

Let $(\omega,\varrho,\huaT)$ generate an   infinitesimal deformation. By the fact that $[\cdot,\cdot]_t=[\cdot,\cdot]_\g+t\omega$ defines a Lie algebra structure on $\g[[t]]/(t^{2})$, we get
\begin{equation}\label{eq:alg1}
  \dM_\CE\omega=0.
\end{equation}
   Then since $(V[[t]]/(t^{2});\rho_t)$ is a representation of $(\g[[t]]/(t^{2}),[\cdot,\cdot]_t)$, we obtain
   \begin{eqnarray}
    \label{eq:rep1} \rho(\omega(x,y))+\varrho([x,y]_\g)&=&[\rho(x),\varrho(y)]+[\varrho(x),\rho(y)].
   \end{eqnarray}
By \eqref{A-operator1}, we deduce that
   \begin{eqnarray}
    \label{eq:ro1} [\huaT u,Tv]_\g+[Tu,\huaT v]_\g+\omega(Tu,Tv)&=&T\Big(\rho(\huaT u)v+\varrho(Tu)v\Big)+\huaT\big(\rho(Tu)v\big).
   \end{eqnarray}

       \begin{pro}
   The triple $(\omega,\varrho,\huaT)$ generates an  infinitesimal  deformation    if and only if  $(\omega,\varrho,\huaT)$ is a $2$-cocycle of   the Lie-Leibniz triple $((\g,[\cdot,\cdot]_\g),(V;\rho),T)$.
    \end{pro}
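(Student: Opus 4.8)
The plan is to unwind the cocycle condition $\huaD(\omega,\varrho,\huaT)=0$ directly and match it, component by component, against the three equations \eqref{eq:alg1}, \eqref{eq:rep1} and \eqref{eq:ro1} that were just derived from the definition of an infinitesimal deformation. First I would note that $(\omega,\varrho,\huaT)$ lies in $\frkC^2(\g,\rho,T)=\big(\Hom(\wedge^2\g,\g)\oplus\Hom(\g\otimes V,V)\big)\oplus\Hom(V,\g)$, so it is a legitimate $2$-cochain and being a $2$-cocycle means precisely $\huaD(\omega,\varrho,\huaT)=0$. Writing $f=(\omega,\varrho)$ and $\theta=\huaT$, formula \eqref{eq:Dcomp} at $n=2$ (where the overall sign $(-1)^{n-2}$ is trivial) splits this into $\delta(\omega,\varrho)=0$ in $\frkC^3(\g,\rho)$ and $\partial_T\huaT+\Omega_T(\omega,\varrho)=0$ in $\frkC^3(T)$, and I would treat the two pieces separately.

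For the first piece I would use the explicit description of $\delta$ recorded after \eqref{eq:defdel}: its $\g$-component on $(\omega,\varrho)$ is $\dM_\CE\omega$, whose vanishing is exactly \eqref{eq:alg1}; and evaluating the stated formula for the $V$-component of $\delta$ at $n=2$, after collecting terms and recognizing the commutators $[\rho(x),\varrho(y)]$ and $[\varrho(x),\rho(y)]$, should reproduce precisely equation \eqref{eq:rep1}. For the second piece I would substitute $k=1$ into \eqref{eq:defpt} to obtain the four-term expression for $\partial_T\huaT(u,v)$, substitute $n=2$ into \eqref{key-cohomology-T} to obtain $\Omega_T(\omega,\varrho)(u,v)=\omega(Tu,Tv)-T(\varrho(Tu)v)$, add the two, and verify that the vanishing of the sum is just a rearrangement of \eqref{eq:ro1}. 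Together these show that $(\omega,\varrho,\huaT)$ is a $2$-cocycle if and only if \eqref{eq:alg1}, \eqref{eq:rep1} and \eqref{eq:ro1} all hold.

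Finally I would close the loop by observing that the computations carried out before the statement are equivalences, not merely implications: modulo $t^{2}$, the conditions that $[\cdot,\cdot]_t$ be a Lie bracket, that $\rho_t$ be a representation, and that $T_t$ satisfy \eqref{A-operator1} are equivalent to \eqref{eq:alg1}, to \eqref{eq:rep1} (given \eqref{eq:alg1}), and to \eqref{eq:ro1} respectively. Hence $(\omega,\varrho,\huaT)$ generating an infinitesimal deformation is equivalent to the conjunction of the three equations, which we have just identified with the $2$-cocycle condition, and the proposition follows.

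The argument carries no conceptual difficulty; the only step demanding care is the sign bookkeeping in the Balavoine bracket, namely checking that the $V$-component of $\delta(\omega,\varrho)$ and the combination $\partial_T\huaT+\Omega_T(\omega,\varrho)$ land, with the correct signs, on exactly \eqref{eq:rep1} and \eqref{eq:ro1}. That is the verification I would perform most carefully.
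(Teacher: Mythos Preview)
Your proposal is correct and follows exactly the approach the paper takes: the paper's proof simply states that by \eqref{eq:alg1}, \eqref{eq:rep1} and \eqref{eq:ro1} the deformation condition is equivalent to the $2$-cocycle condition, and your argument is precisely the explicit unwinding of that equivalence via $\huaD(\omega,\varrho,\huaT)=(\delta(\omega,\varrho),\partial_T\huaT+\Omega_T(\omega,\varrho))$. Your added care with the sign bookkeeping is appropriate, since the paper leaves that verification to the reader.
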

    \begin{proof}
      By \eqref{eq:alg1}, \eqref{eq:rep1} and \eqref{eq:ro1},  we deduce that  $(\omega,\varrho,\huaT)$ generates an  infinitesimal  deformation  of the Lie-Leibniz triple $((\g,[\cdot,\cdot]_\g),(V;\rho),T)$ if and only if    $(\omega,\varrho,\huaT)$ is a $2$-cocycle.
    \end{proof}

    In the sequel, we define equivalences between infinitesimal deformations of a Lie-Leibniz triple and show that infinitesimal deformations of a Lie-Leibniz triple are classified by its second cohomology group.

     \begin{defi}
    Two infinitesimal deformations of a Lie-Leibniz triple  $((\g,[\cdot,\cdot]_\g),(V;\rho),T)$ generated by $(\omega,\varrho,\huaT)$ and $(\omega',\varrho',\huaT')$ are said to be {\bf equivalent} if there exist $N\in\gl(\g)$,  $S\in\gl(V)$ and $x\in\g$ such that $(\Id_\g+tN+t\ad_x,\Id_V+tS+t\rho(x))$ is a homomorphism from $((\g[[t]]/(t^{2}),[\cdot,\cdot]_\g+t\omega'),(V[[t]]/(t^{2});\rho+t\varrho'),T+t\huaT')$ to $((\g[[t]]/(t^{2}),[\cdot,\cdot]_\g+t\omega),(V[[t]]/(t^{2}),\rho+t\varrho),T+t\huaT)$.
    \end{defi}
    Since $\Id_\g+tN+t\ad_x$ is a Lie algebra morphism from $(\g[[t]]/(t^{2}),[\cdot,\cdot]_\g+t\omega')$ to $(\g[[t]]/(t^{2}),[\cdot,\cdot]_\g+t\omega)$, we get
    \begin{equation}
      \label{eq:equmor1} \omega'-\omega=\dM_\CE N.
    \end{equation}
  By the equality $(\Id_V+tS+t\rho(x))(\rho+t\varrho')(y)u=(\rho+t\varrho)\big((\Id_\g+tN+t\ad_x)y)(\Id_V+tS+t\rho(x))u$, we deduce that
     \begin{equation}
      \label{eq:equmor2} \varrho'(y)u- \varrho(y)u= \rho(Ny)u+ \rho(y)Su- S\rho(y)u,\quad \forall y\in\g, u\in V.
    \end{equation}
    By the equality $(\Id_\g+tN+t\ad_x)\circ (T+t\huaT')=(T+t\huaT)\circ (\Id_V+tS+t\rho(x))$, we obtain
    \begin{equation}
     \label{eq:equmor3}  \huaT'-\huaT=T\circ \rho(x)-\ad_x\circ T-N\circ T+T\circ S.
    \end{equation}

    \begin{thm}
       Let $((\g,[\cdot,\cdot]_\g),\rho,T)$ be a Lie-Leibniz triple. If two infinitesimal  deformations   generated by $(\omega,\varrho,\huaT)$ and $(\omega',\varrho',\huaT')$ are   equivalent, then $(\omega,\varrho,\huaT)$ and $(\omega',\varrho',\huaT')$ are in the same cohomology class in $\huaH^2_{\rm reg}(\g,\rho,T)$.
    \end{thm}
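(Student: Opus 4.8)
The plan is to show the difference $(\omega'-\omega,\varrho'-\varrho,\huaT'-\huaT)$ is a coboundary in the complex $(\oplus_n\frkC^n(\g,\rho,T),\huaD)$; since both triples are $2$-cocycles by the previous proposition, this shows at once that they represent the same class in $\huaH^2_{\rm reg}(\g,\rho,T)$. The equivalence data is tailor-made to be the required $1$-cochain: by definition there are $N\in\gl(\g)$, $S\in\gl(V)$ and $x\in\g$, and since $\frkC^1(\g,\rho)=\Hom(\wedge^1\g,\g)\oplus\Hom(V,V)$ while $\frkC^1(T)=\Hom(\otimes^0V,\g)=\g$, the pair $((N,S),x)$ is a well-defined element of $\frkC^1(\g,\rho,T)$.

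First I would write out $\huaD((N,S),x)$ component by component via \eqref{eq:Dcomp}, i.e. $\huaD(f,\theta)=(\delta f,\partial_T\theta+\Omega_Tf)$, at degree $n=1$. Using the explicit formula for $\delta$ from this section, the $\g$-component of $\delta(N,S)$ is $\dM_\CE N$ and the $V$-component, evaluated at $(y,u)$, equals $\rho(Ny)u+\rho(y)Su-S\rho(y)u$; these are precisely the right-hand sides of \eqref{eq:equmor1} and \eqref{eq:equmor2}. For the last component I would evaluate $\partial_Tx$ directly from \eqref{eq:defpt} in the boundary case $k=0$ (where the cochain is just an element of $\g$), getting $\partial_Tx(u)=-[x,Tu]_\g+T(\rho(x)u)$, and $\Omega_T(N,S)$ from \eqref{key-cohomology-T} at $n=1$, getting $\Omega_T(N,S)(u)=-N(Tu)+T(Su)$. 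Adding, $\partial_Tx+\Omega_T(N,S)=T\circ\rho(x)-\ad_x\circ T-N\circ T+T\circ S$, which is exactly the right-hand side of \eqref{eq:equmor3}. Therefore $\huaD((N,S),x)=(\omega'-\omega,\varrho'-\varrho,\huaT'-\huaT)$, and the proof is complete.

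The step I expect to require the most care is the sign bookkeeping: one must keep track of the overall prefactor $(-1)^{n-2}$ in \eqref{cohomology-of-RB} at $n=1$, of the sign in the identification $\partial_T=(-1)^{k-1}\Courant{T,\cdot}$ of Proposition \ref{pro:danddT}, and of the $(-1)^n$ in \eqref{key-cohomology-T}, and check that they conspire so that the three equations \eqref{eq:equmor1}, \eqref{eq:equmor2}, \eqref{eq:equmor3} are matched on the nose rather than up to a global sign. The only genuinely non-formal input is reading $\partial_T$ off on $\frkC^1(T)=\g$, i.e. the degenerate $k=0$ instance of the Loday--Pirashvili differential; every other ingredient is a substitution into formulas already established in Sections \ref{sec:A} and \ref{sec:cohomology}.
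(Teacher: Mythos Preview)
Your proposal is correct and follows exactly the same approach as the paper: both show that $(\omega',\varrho',\huaT')-(\omega,\varrho,\huaT)=\huaD(N,S,x)$ by reading off the components of $\huaD$ at degree $1$ and matching them against \eqref{eq:equmor1}, \eqref{eq:equmor2}, \eqref{eq:equmor3}. You simply spell out the sign checks and the degenerate $k=0$ case of $\partial_T$ that the paper leaves implicit.
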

   \begin{proof} By \eqref{eq:equmor1}, \eqref{eq:equmor2} and \eqref{eq:equmor3}, we deduce that
    $$
    (\omega',\varrho',\huaT')-(\omega,\varrho,\huaT)=\huaD(N,S,x),
    $$
    which implies that $(\omega,\varrho,\huaT)$ and $(\omega',\varrho',\huaT')$ are in the same cohomology class.
    \end{proof}

\subsection{Cohomology  with arbitrary coefficients and central extensions}\label{sec:cohomology2}
In this subsection,    we introduce the cohomology of a Lie-Leibniz triple with coefficients in an arbitrary representation and classify  central extensions of a Lie-Leibniz triple using the second cohomology group.

With the help of the Lie-Leibniz triple given in Example \ref{example-3}, we give the notion of a representation of a Lie-Leibniz triple as follows.
\begin{defi} \label{defi:RepLieLei}
  A {\bf representation of a Lie-Leibniz triple}   $((\g,[\cdot,\cdot]_\g),(V;\rho),T)$ on a 2-term complex of vector spaces $W\stackrel{\frkT}{\lon}\h$ is a Lie-Leibniz triple  homomorphism $(\phi,\varphi)$ from   $(\g,V,T)$ to $(\End (W\stackrel{\frkT}{\lon}\h),\Hom(\h,W),\overline{\frkT})$. More precisely, it consists of a Lie algebra homomorphism
    $\phi:\g\longrightarrow\End (W\stackrel{\frkT}{\lon}\h)$ and a linear map $\varphi:V\longrightarrow \Hom(\h,W)$ such that
         \begin{eqnarray}
          \overline{\frkT}\circ \varphi&=&\phi\circ T,\label{defi:repcon1}\\
                \varphi\rho(x)(u)&=&[\phi(x),\varphi(u)]_C,\quad\forall x\in\g, u\in V.\label{defi:repcon2}
      \end{eqnarray}
\end{defi}

Usually we will denote a representation by $(W\stackrel{\frkT}{\lon}\h,\phi,\varphi).$ Since $\End (W\stackrel{\frkT}{\lon}\h)\subset \gl(\h)\oplus \gl(W)$, for any $x\in\g$, we will always write $\phi(x)=(\phi_\h(x),\phi_W(x))$ for $\phi_\h(x)\in\gl(\h)$ and $\phi_W(x)\in\gl(W)$. The following result follows from straightforward verifications.

\begin{pro}\label{pro:semiLieLeibniz}
  Let $((\g,[\cdot,\cdot]_\g),(V;\rho),T)$ be a Lie-Leibniz triple  and  $(W\stackrel{\frkT}{\lon}\h,\phi,\varphi)$ its representation. Then $((\g\oplus \h,[\cdot,\cdot]_{\phi_\h}), (V\oplus W,\rho+\phi_W+\varphi),T+\frkT)$ is a Lie-Leibniz triple, where $[\cdot,\cdot]_{\phi_\h}$ is the semidirect product Lie bracket given by
  \begin{eqnarray}\label{rep-1s}
  [x+\alpha,y+\beta]_{\phi_\h}=[x,y]_\g+{\phi_\h}(x)\beta-{\phi_\h}(y)\alpha,\quad\forall x,y\in\g, \alpha,\beta\in\h,
  \end{eqnarray}
  and the representation $\rho+\phi_W+\varphi$ of the Lie algebra $(\g\oplus \h,[\cdot,\cdot]_{\phi_\h})$ on $V\oplus W$ is given by
 \begin{eqnarray}\label{rep-2s}
\qquad  (\rho+\phi_W+\varphi)(x+\alpha)(u+\xi)=\rho(x)u+\phi_W(x)\xi-\varphi(u)\alpha,\quad\forall x\in\g, \alpha\in\h, u\in V, \xi\in W.
  \end{eqnarray}
\end{pro}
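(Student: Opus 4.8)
The plan is to verify directly the three defining properties of a Lie-Leibniz triple for $\big((\g\oplus\h,[\cdot,\cdot]_{\phi_\h}),(V\oplus W,\rho+\phi_W+\varphi),T+\frkT\big)$, reducing each one to the hypotheses on $(W\stackrel{\frkT}{\lon}\h,\phi,\varphi)$ in Definition \ref{defi:RepLieLei} together with the structure of $\End(W\stackrel{\frkT}{\lon}\h)$ recalled in Example \ref{example-3}. To begin with, $(\g\oplus\h,[\cdot,\cdot]_{\phi_\h})$ is the semidirect product of the Lie algebra $(\g,[\cdot,\cdot]_\g)$ with the abelian Lie algebra $\h$ along $\phi_\h$; writing $\phi=(\phi_\h,\phi_W):\g\to\End(W\stackrel{\frkT}{\lon}\h)$ and using that $[\cdot,\cdot]_C$ is the componentwise commutator, the hypothesis that $\phi$ is a homomorphism of Lie algebras says exactly that $\phi_\h:\g\to\gl(\h)$ and $\phi_W:\g\to\gl(W)$ are representations of $\g$, so \eqref{rep-1s} is a genuine Lie bracket (the representation property of $\phi_W$ will be reused below).

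Next I would check that $\Theta:=\rho+\phi_W+\varphi$ from \eqref{rep-2s} is a representation of $(\g\oplus\h,[\cdot,\cdot]_{\phi_\h})$ on $V\oplus W$, by expanding $\Theta([x+\alpha,y+\beta]_{\phi_\h})(u+\xi)$ and $\big(\Theta(x+\alpha)\Theta(y+\beta)-\Theta(y+\beta)\Theta(x+\alpha)\big)(u+\xi)$ and comparing the $V$- and $W$-components. The $V$-component is precisely the statement that $\rho$ is a representation of $\g$. In the $W$-component, the summands involving only $\xi$ reproduce the statement that $\phi_W$ is a representation of $\g$, while all the remaining cross terms collapse to the single identity $\varphi(\rho(x)u)=\phi_W(x)\circ\varphi(u)-\varphi(u)\circ\phi_\h(x)$ for $x\in\g,\ u\in V$; since $[\phi(x),\varphi(u)]_C=\phi_W(x)\circ\varphi(u)-\varphi(u)\circ\phi_\h(x)$ by the $\End(W\stackrel{\frkT}{\lon}\h)$-action of Example \ref{example-3}, this is exactly \eqref{defi:repcon2}.

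Finally I would verify the quadratic constraint for $T+\frkT$: for $u+\xi,\ v+\eta\in V\oplus W$ one expands both sides of $[(T+\frkT)(u+\xi),(T+\frkT)(v+\eta)]_{\phi_\h}=(T+\frkT)\big(\Theta((T+\frkT)(u+\xi))(v+\eta)\big)$ using \eqref{rep-1s} and \eqref{rep-2s}. Comparing $\g$-components gives $[Tu,Tv]_\g=T(\rho(Tu)v)$, the embedding tensor equation for $T$, together with $\phi_\h(Tu)\circ\frkT=\frkT\circ\phi_W(Tu)$, which holds because $\phi(Tu)\in\End(W\stackrel{\frkT}{\lon}\h)$. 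Comparing $\h$-components gives $\phi_\h(Tv)(\frkT\xi)=\frkT\big(\varphi(v)(\frkT\xi)\big)$, which follows from \eqref{defi:repcon1}: comparing the $\gl(\h)$-valued components there yields $\phi_\h(Tv)=\frkT\circ\varphi(v)$, and one evaluates this on $\frkT\xi$.

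The proposition is of ``split square-zero extension'' type, and every identity needed is built into the definition of a representation of a Lie-Leibniz triple, of $\End(W\stackrel{\frkT}{\lon}\h)$, or of an embedding tensor, so I do not expect a conceptual obstacle. The step most likely to cause slips is the $W$-component of the representation identity in the second paragraph, where one must keep careful track of the sign in front of $\varphi(u)\alpha$ and of which summand each term lands in ($\g$ versus $\h$, $V$ versus $W$); the rest is routine. A more conceptual alternative, which I would only fall back on if the direct computation proved unwieldy, is to note that a representation is by definition a morphism into the Lie-Leibniz triple of Example \ref{example-3}, that the claimed structure is the associated semidirect product, and then to check that $\big(s^{-1}((\mu\boxplus\rho)\ltimes\phi_\h),\,T+\frkT\big)$ is again a Maurer-Cartan element of the controlling $L_\infty$-algebra of Proposition \ref{cor:Linfty} via Theorem \ref{deformation-rota-baxter}; but the component-wise check above is shorter.
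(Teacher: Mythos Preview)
Your proposal is correct and matches the paper's approach, which simply records the result as following ``from straightforward verifications'' without further detail; your sketch spells out precisely those verifications. One small bookkeeping correction: in the embedding-tensor step, the identity $\phi_\h(Tu)\circ\frkT=\frkT\circ\phi_W(Tu)$ arises from the $\eta$-terms of the $\h$-component (not the $\g$-component), so the $\h$-component yields both this identity and the one you list, while the $\g$-component yields only $[Tu,Tv]_\g=T(\rho(Tu)v)$.
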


This Lie-Leibniz triple is called the {\bf semidirect product} of $((\g,[\cdot,\cdot]_\g),(V;\rho),T)$ and the representation $(W\stackrel{\frkT}{\lon}\h,\phi,\varphi)$.

Let $(W\stackrel{\frkT}{\lon}\h,\phi,\varphi)$ be a representation of  a Lie-Leibniz triple    $((\g,[\cdot,\cdot]_\g),(V;\rho),T)$. Define the set of  $0$-cochains $\frkC^0(\g,\rho,T,\phi,\varphi,\frkT)$ to be $0$. For $n\geq 1$, define the space of     $n$-cochains $\frkC^n(\g,\rho,T,\phi,\varphi,\frkT)$ by
\begin{eqnarray*}
\frkC^n(\g,\rho,T,\phi,\varphi,\frkT)
&=&\Big(\Hom(\wedge^n\g,\h)\oplus \Hom(\wedge^{n-1}\g\otimes V,W)\Big)\oplus\Hom(\otimes^{n-1}V,\h).
\end{eqnarray*}

Define the {\bf coboundary operator} $\huaD_R:\frkC^n(\g,\rho,T,\phi,\varphi,\frkT)\lon \frkC^{n+1}(\g,\rho,T,\phi,\varphi,\frkT)$ by
\begin{eqnarray*}
  \huaD_R(f,\theta)=(\delta f, \partial \theta+\Omega f),
\end{eqnarray*}
for  all $f=(f_\g,f_V)\in \Hom(\wedge^n\g,\h)\oplus \Hom(\wedge^{n-1}\g\otimes V,W),~ \theta\in \Hom(\otimes^{n-1}V,\h)$. Here $\delta,\partial$ and $\Omega$ are given as follows:
\begin{itemize}
\item Write $\delta f=((\delta f)_\g,(\delta f)_V)\in \Hom(\wedge^{n+1}\g,\h)\oplus \Hom(\wedge^{n}\g\otimes V,W)$. Then $(\delta f)_\g=\dM_\CE f_\g$, where $\dM_\CE:\Hom(\wedge^n\g,\h)\lon \Hom(\wedge^{n+1}\g,\h)$ is the
  Chevalley-Eilenberg coboundary operator  of the Lie algebra $(\g,[\cdot,\cdot]_\g)$ with coefficients in $(\h,{\phi_\h})$, and $(\delta f)_V$  is given by
\begin{eqnarray*}
\nonumber&&(\delta f)_V(x_1,\cdots,x_{n},v)\\
&=&\sum_{1\le i<j\le n}(-1)^if_V(x_1,\cdots,\hat{x_i},\cdots,x_{j-1},[x_i,x_j]_\g,x_{j+1},\cdots,x_{n},v)+(-1)^{n}\varphi(v)f_\g(x_1,\cdots,x_{n}) \\
\nonumber&&+\sum_{i=1}^{n}(-1)^{i+1}\Big(\phi_W(x_i)f_V(x_1,\cdots,\hat{x}_i,\cdots,x_n,v)-f_V\big(x_1,\cdots,\hat{x}_i,\cdots,x_n,\rho(x_i)v\big)\Big),
\end{eqnarray*}
for all $x_1,\cdots,x_{n}\in\g$ and $v\in V.$

\item $\partial: \Hom(\otimes^{n-1}V,\h)\longrightarrow   \Hom(\otimes^{n}V,\h)$ is given by
               \begin{eqnarray*}
              \partial \theta(u_1,\cdots,u_{n})&=&\sum_{i=1}^{n}(-1)^{i+1}\phi_\h(Tu_i)\theta(u_1,\cdots,\hat{u_i},\cdots, u_{n})\\
                 \nonumber&&+(-1)^{n}\frkT(\varphi(u_n)(\theta(u_1,\cdots, u_{n-1})) )\\
                \nonumber &&+\sum_{1\le i<j\le n}(-1)^{i}\theta(u_1,\cdots,\hat{u_i},\cdots,u_{j-1},\rho(Tu_i)(u_j),u_{j+1},\cdots, u_{n}).
               \end{eqnarray*}

\item $\Omega:\Hom(\wedge^n\g,\h)\oplus \Hom(\wedge^{n-1}\g\otimes V,W)\lon \Hom(\otimes^{n}V,\h)$ is defined by
\begin{eqnarray*}
 \Omega(f_\g,f_V)(u_1,\cdots,u_n)=(-1)^{n}\Big(f_\g(Tu_1,\cdots,Tu_n)-\frkT f_V (Tu_1,\cdots,Tu_{n-1}, u_n)\Big),\quad \forall u_1,\cdots,u_n\in V.
\end{eqnarray*}
\end{itemize}
\begin{thm}\label{cohomology-of-LLT}
  With the above notations,  $(\oplus _{n=0}^{+\infty}\frkC^n(\g,\rho,T,\phi,\varphi,\frkT),\huaD_R)$ is a cochain complex, i.e. $\huaD_R\circ \huaD_R=0.$
\end{thm}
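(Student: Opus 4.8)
The plan is to reduce the statement to Theorem~\ref{cohomology-of-average} by passing to a semidirect product, exactly in the spirit of the way the cohomology of a Lie algebra with coefficients in a module is obtained from the adjoint cohomology of the corresponding semidirect product. Starting from the representation $(W\stackrel{\frkT}{\lon}\h,\phi,\varphi)$ of $((\g,[\cdot,\cdot]_\g),(V;\rho),T)$, Proposition~\ref{pro:semiLieLeibniz} supplies a genuine Lie-Leibniz triple
$$
\big((\g\oplus\h,[\cdot,\cdot]_{\phi_\h}),\,(V\oplus W;\rho+\phi_W+\varphi),\,T+\frkT\big),
$$
whose regular coboundary operator I will denote $\widetilde{\huaD}$. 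By Theorem~\ref{cohomology-of-average} we already know $\widetilde{\huaD}\circ\widetilde{\huaD}=0$, so it suffices to exhibit $\huaD_R$ as the restriction of $\widetilde{\huaD}$ to a subcomplex of $\oplus_n\frkC^n(\g\oplus\h,\rho+\phi_W+\varphi,T+\frkT)$.

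The relevant subspace consists of those cochains all of whose arguments lie in $\g\oplus V$ and whose value lies in $\h\oplus W$; inside $\frkC^n(\g\oplus\h,\rho+\phi_W+\varphi,T+\frkT)$ this is precisely the summand
$$
\Hom(\wedge^n\g,\h)\oplus\Hom(\wedge^{n-1}\g\otimes V,W)\oplus\Hom(\otimes^{n-1}V,\h)=\frkC^n(\g,\rho,T,\phi,\varphi,\frkT).
$$
The key point is that $\widetilde{\huaD}$ preserves this subspace: each of the three structure maps of the semidirect product --- the bracket $[\cdot,\cdot]_{\phi_\h}$, the action $\rho+\phi_W+\varphi$, and the map $T+\frkT$ --- behaves like a derivation along $\h\oplus W$, in the sense that it turns one $\h\oplus W$-slot into an $\h\oplus W$-slot, sends purely $\g\oplus V$-arguments to $\g\oplus V$-values, and vanishes as soon as two arguments lie in $\h\oplus W$. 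Plugging $(\mu\boxplus\rho)_{\rm semi}$ and $T+\frkT$ into the defining formulas for $\delta$, $\partial_T$ and $\Omega_T$ then shows, term by term, that applying $\widetilde{\huaD}$ to a cochain of the above type again produces a cochain of that type, and that the three pieces collapse exactly to the operators $\delta,\partial,\Omega$ defining $\huaD_R$: the Chevalley--Eilenberg part becomes $\dM_\CE$ with coefficients in $(\h;\phi_\h)$, the Loday--Pirashvili part picks up $\phi_\h,\varphi,\frkT$ in place of the adjoint data, and the mixing term $\Omega_{T+\frkT}$ restricts to $\Omega$. Hence $\huaD_R$ is the restriction of $\widetilde{\huaD}$ to this subcomplex, and $\huaD_R\circ\huaD_R=0$.

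Alternatively, and entirely in parallel with the proof of Theorem~\ref{cohomology-of-average}, one can stay in the $L_\infty$-picture of Section~\ref{sec:control}: applying Proposition~\ref{cor:Linfty} and Theorem~\ref{deformation-rota-baxter} to the pair of vector spaces $\g\oplus\h$, $V\oplus W$ shows that $\big(s^{-1}((\mu\boxplus\rho)_{\rm semi}),\,T+\frkT\big)$ is a Maurer-Cartan element of the corresponding controlling $L_\infty$-algebra, so by Proposition~\ref{twisted-homotopy-lie} the twisted operator $l_1^{(s^{-1}((\mu\boxplus\rho)_{\rm semi}),\,T+\frkT)}$ squares to zero; one then recognizes $\huaD_R$, up to the overall sign $(-1)^{n-2}$, as the restriction of this operator to the subcomplex described above. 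Either way, the only genuine work is the bookkeeping of the middle paragraph --- matching, term by term and sign by sign, the restrictions of the semidirect-product formulas with the definitions of $\delta$, $\partial$ and $\Omega$; nothing conceptually new is required beyond Proposition~\ref{pro:semiLieLeibniz} and Theorem~\ref{cohomology-of-average}. I expect the signs in the $\Omega$-term and in the non-adjoint Chevalley--Eilenberg part to be the fiddliest point, but no new identity among $\delta,\partial,\Omega$ beyond those already used for $\huaD$ is needed.
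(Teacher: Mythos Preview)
Your proposal is correct and follows essentially the same approach as the paper: form the semidirect product Lie-Leibniz triple via Proposition~\ref{pro:semiLieLeibniz}, invoke Theorem~\ref{cohomology-of-average} for its regular complex, and identify $(\oplus_n\frkC^n(\g,\rho,T,\phi,\varphi,\frkT),\huaD_R)$ as a subcomplex. The paper merely sketches this and declares the subcomplex verification ``straightforward''; your middle paragraph spells out exactly the bookkeeping the paper omits.
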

\begin{proof}
 We only give a sketch of the proof and leave details to readers. Consider the semidirect product Lie-Leibniz triple $ ((\g\oplus \h,[\cdot,\cdot]_{\phi_\h}), (V\oplus W,\rho+\phi_W+\varphi),T+\frkT)$ given in Proposition \ref{pro:semiLieLeibniz}, and the associated cochain complex $(\oplus_{n=0}^{+\infty}\frkC^n(\g\oplus\h,\rho+\phi_W+\varphi, T+\frkT),\huaD)$ given in Theorem \ref{cohomology-of-average}. It is straightforward to deduce that
  $(\oplus _{n=0}^{+\infty}\frkC^n(\g,\rho,T,\phi,\varphi,\frkT),\huaD_R)$ is a subcomplex of $(\oplus_{n=0}^{+\infty}\frkC^n(\g\oplus\h,\rho+\phi_W+\varphi, T+\frkT),\huaD)$. Thus, $\huaD_R\circ \huaD_R=0.$
\end{proof}

\begin{defi}
  The cohomology of the cochain complex $(\oplus _{n=0}^{+\infty}\frkC^n(\g,\rho,T,\phi,\varphi,\frkT),\huaD_R)$ is called the {\bf cohomology  of the Lie-Leibniz triple} $((\g,\mu),(V;\rho),T)$ with coefficients in the representation $(W\stackrel{\frkT}{\lon}\h,\phi,\varphi)$. We denote its $n$-th cohomology group by $\huaH^n(\g,\rho,T,\phi,\varphi,\frkT)$.
\end{defi}

In particular, $(W\stackrel{\frkT}{\lon}\h,\phi=0,\varphi=0)$ is naturally a representation of the Lie-Leibniz triple $((\g,\mu),(V;\rho),T)$, which is called the {\bf trivial representation}, and the corresponding $n$-th cohomology group is denoted by $\huaH^n_{\rm tri}(\g,\rho,T,\frkT)$.

At the end of this subsection we study central extensions of Lie-Leibniz triples and show that central extensions are classified by the second cohomology group $\huaH^2_{\rm tri}(\g,\rho,T,\frkT)$ as applications.
\begin{defi}
Let $((\g,[\cdot,\cdot]_\g),(V;\rho),T)$ and $((\h,[\cdot,\cdot]_\h),(W,\varrho),\frkT)$ be two  Lie-Leibniz triples. An {\bf   extension} of  $((\g,[\cdot,\cdot]_\g),(V;\rho),T)$ by $((\h,[\cdot,\cdot]_\h),(W,\varrho),\frkT)$ is a short exact sequence of Lie-Leibniz triple homomorphisms:
$$\xymatrix{
  0 \ar[r] &W \ar[d]_{\frkT}\ar[r]^{i}& \hat{V}\ar[d]_{\hat{T}}\ar[r]^{p}&V\ar[d]_{T}\ar[r]&0\\
     0\ar[r] &\h \ar[r]^{ \frki} &\hat{\g}\ar[r]^{\frkp} &\g\ar[r]&0,              }$$
where $((\hat{\g},[\cdot,\cdot]_{\hat{\g}}),(\hat{V},\hat{\rho}),\hat{T})$ is a Lie-Leibniz triple.

It is called a {\bf central extension} if  $ [\alpha,\hat{x}]_{\hat{\g}}=0$, $\hat{\rho}(\hat{x})\xi=0$ and $\hat{\rho}(\alpha)\hat{u}=0$ for all   $\alpha \in \h,~\xi\in W,~\hat{x}\in\hat{\g}$ and $\hat{u}\in\hat{V}$.
\end{defi}

In the sequel, we only consider central extensions.

\begin{defi}
A {\bf section} of a central extension $((\hat{\g},[\cdot,\cdot]_{\hat{\g}}),(\hat{V},\hat{\rho}),\hat{T})$ of a Lie-Leibniz triple $((\g,[\cdot,\cdot]_\g),(V;\rho),T)$  by   $(\h ,W ,\frkT)$ consists of  linear maps $\frks:\g\longrightarrow \hat{\g}$ and $s:V\longrightarrow \hat{V}$ such that
$$p\circ s=\Id_V,\quad \frkp\circ \frks=\Id_\g.$$
\end{defi}

\emptycomment{
Let $(\frks,s)$ be a section. Define
$$
\phi_0:\g\lon\gl(\h),\quad \phi_1:\g\lon\gl(W),\quad \varphi:V\lon\Hom(\h,W)
$$
by
\begin{eqnarray*}
  \phi_0(x)(\alpha)&=&[\frks(x),\alpha]_{\hat{\g}},\quad\forall x\in\g,~\alpha\in\h,\\
   \phi_1(x)(\xi)&=&\hat{\rho}(\frks(x))(\xi) ,\quad\forall x\in\g,~\xi\in W,\\
   \varphi(u)(\alpha)&=& \hat{\rho}(\alpha)(s(u)),\quad\forall u\in V,~\alpha\in\h.
\end{eqnarray*}

\begin{pro}
  With above notations, $(\phi=(\phi_0,\phi_1),\varphi)$ is a representation of the Lie-Leibniz triple $((\g,[\cdot,\cdot]_\g),(V;\rho),T)$ on the $2$-term complex $W\stackrel{\frkT}{\lon}\h$. Moreover, this representation does not depend on the choice of sections.
\end{pro}

\begin{proof}
  First we prove that the image of $\phi$ is in $\End (W\stackrel{\frkT}{\lon}\h)$, i.e. $\phi$ satisfies $\phi_0(x)\circ\frkT=\frkT\circ \phi_1(x).$ In fact, by we have
  $$
  \phi_0(x)(\frkT \xi)=[\frks(x),\frkT \xi]_{\hat{\g}}
  $$
  \yh{It does not hold. So it seems that we can only consider central extensions.}
\end{proof}
}

Let $(\frks,s)$ be a section of a central extension  $((\hat{\g},[\cdot,\cdot]_{\hat{\g}}),(\hat{V},\hat{\rho}),\hat{T})$. We further define
$$
\omega\in\Hom(\wedge^2\g,\h),\quad \varpi\in\Hom(\g\otimes V,W),\quad \huaT\in\Hom(V,\h)
$$
by
\begin{eqnarray*}
  \omega(x,y)&=&[\frks(x),\frks(y)]_{\hat{\g}}-\frks[x,y]_\g,\quad\forall x,y\in\g,\\
  \varpi(x,u)&=&\hat{\rho}(\frks(x))s(u)-s(\rho(x)u),\quad\forall x\in\g, u\in V,\\
  \huaT(u)&=&\hat{T}(s(u))-\frks(T(u)),\quad\forall  u\in V.
\end{eqnarray*}
Via the section $(\frks,s)$, $\hat{\g}\cong \g\oplus \h$ and $\hat{V}\cong V\oplus W$. Transfer the Lie-Leibniz triple structure on $\hat{\g}$ and $\hat{V}$ to that on $  \g\oplus \h$ and $ V\oplus W$, we obtain
\begin{eqnarray*}
  [x+\alpha,y+\beta]_{\hat{\g}}&=&[x,y]_\g +\omega(x,y),\quad\forall x,y\in\g, \alpha,\beta\in\h,\\
  \hat{\rho}(x+\alpha)(u+\xi)&=&\rho(x)u +\varpi(x,u),\quad\forall x\in\g, u\in V,\alpha\in\h,\xi\in W,\\
  \hat{T}(u+\xi)&=&Tu+\frkT\xi+\huaT u,\quad\forall  u\in V,\xi\in W.
\end{eqnarray*}

\emptycomment{
\begin{eqnarray*}
  [x+\alpha,y+\beta]_{\hat{\g}}&=&[x,y]_\g+\phi_0(x)\beta-\phi_0(y)\alpha+\omega(x,y),\quad\forall x,y\in\g, \alpha,\beta\in\h,\\
  \hat{\rho}(x+\alpha)(u+\xi)&=&\rho(x)u+\phi_1(x)\xi+\varphi(u)\alpha+\varpi(x,u),\quad\forall x\in\g, u\in V,\alpha\in\h,\xi\in W,\\
  \hat{T}(u+\xi)&=&Tu+\frkT\xi+\huaT u,\quad\forall  u\in V,\xi\in W.
\end{eqnarray*}
}

\begin{thm}\label{thm:cohomologicalclass}
  With the above notations, $(\omega,\varpi,\huaT)$ is a $2$-cocycle of the Lie-Leibniz triple $((\g,[\cdot,\cdot]_\g),(V;\rho),T)$  with  the trivial coefficients in  $W\stackrel{\frkT}{\lon}\h$. Moreover, its cohomological class does not depend on the choice of sections.
\end{thm}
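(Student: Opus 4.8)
The plan is to extract the cocycle conditions directly from the Lie-Leibniz triple axioms for the extension $((\hat{\g},[\cdot,\cdot]_{\hat{\g}}),(\hat{V},\hat{\rho}),\hat{T})$, carried along the section $(\frks,s)$, and then to compare the cochains produced by two different sections.

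For the first assertion, recall that in the trivial representation $\phi=0$ and $\varphi=0$, so in the coboundary operator $\huaD_R$ of Theorem \ref{cohomology-of-LLT} the map $\dM_\CE$ is the Chevalley--Eilenberg differential of $\g$ with values in the trivial module $\h$, and the operators $\partial$, $\Omega$ and $(\delta\,\cdot\,)_V$ lose every term containing $\phi$ or $\varphi$. Using $(\frks,s)$ to identify $\hat{\g}\cong\g\oplus\h$ and $\hat{V}\cong V\oplus W$, the structure maps take the form displayed just above the theorem, $[x+\alpha,y+\beta]_{\hat{\g}}=[x,y]_\g+\omega(x,y)$, $\hat{\rho}(x+\alpha)(u+\xi)=\rho(x)u+\varpi(x,u)$, $\hat{T}(u+\xi)=Tu+\frkT\xi+\huaT u$. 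I would then write out the three defining identities of a Lie-Leibniz triple for this structure: the Jacobi identity of $[\cdot,\cdot]_{\hat{\g}}$, the representation identity of $\hat{\rho}$, and the embedding tensor constraint $[\hat{T}\hat{u},\hat{T}\hat{v}]_{\hat{\g}}=\hat{T}(\hat{\rho}(\hat{T}\hat{u})\hat{v})$. The centrality hypotheses $[\alpha,\hat{x}]_{\hat{\g}}=0$, $\hat{\rho}(\hat{x})\xi=0$, $\hat{\rho}(\alpha)\hat{u}=0$ do the work: every summand that is quadratic in $(\omega,\varpi,\huaT)$, or in which $\hat{\rho}$ is applied to an element of $\h$ or of $W$, vanishes. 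What survives are three \emph{linear} identities which one checks coincide term by term with the three components of $\huaD_R(\omega,\varpi,\huaT)=0$ --- namely $\dM_\CE\omega=0$ from Jacobi, $(\delta(\omega,\varpi))_V=0$ from the representation identity, and $\partial\huaT+\Omega(\omega,\varpi)=0$ from the embedding tensor constraint, the last using the original relation $[Tu,Tv]_\g=T(\rho(Tu)v)$ to cancel the $\g$- and $V$-valued remainders. Since $\omega$ is visibly skew-symmetric, $(\omega,\varpi,\huaT)$ is a genuine $2$-cochain, hence a $2$-cocycle. Equivalently, the transferred structure is the semidirect product of Proposition \ref{pro:semiLieLeibniz} perturbed by $(\omega,\varpi,\huaT)$; because $\h$ and $W$ are square-zero, being a Lie-Leibniz triple is equivalent to the cocycle equation, just as in the deformation analysis of Subsection \ref{sec:cohomology1}.

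For independence of the section, let $(\frks',s')$ be another section. From $\frkp\circ(\frks'-\frks)=0$ and $p\circ(s'-s)=0$ one gets linear maps $b_\g:\g\to\h$ and $b_V:V\to W$ with $\frks'=\frks+b_\g$, $s'=s+b_V$, so that $(b_\g,b_V,0)$ is a $1$-cochain. Substituting into the definitions of $\omega',\varpi',\huaT'$ and again discarding the terms killed by centrality, I expect to obtain $\omega'-\omega=\dM_\CE b_\g$, $\varpi'-\varpi=(\delta(b_\g,b_V))_V$, and $\huaT'-\huaT=\Omega(b_\g,b_V)$. Since for the trivial representation the $\h$-slot of $\huaD_R$ on a $1$-cochain produces nothing ($\partial$ on that slot sees only $\phi_\h$ and $\varphi$), these three equalities assemble to $(\omega',\varpi',\huaT')-(\omega,\varpi,\huaT)=\huaD_R(b_\g,b_V,0)$, so the two cocycles represent the same class in $\huaH^2_{\rm tri}(\g,\rho,T,\frkT)$.

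I do not expect a conceptual obstacle here; the only real labour is sign bookkeeping --- reconciling the Balavoine and Chevalley--Eilenberg sign conventions with the explicit formulas for $\delta$, $\partial$ and $\Omega$ --- together with being disciplined about which terms drop out by centrality. The one point worth highlighting is that the mixed term $\Omega$ in $\huaD_R$ is exactly what accommodates the embedding-tensor datum $\huaT$ of the extension.
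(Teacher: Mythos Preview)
Your proposal is correct and follows essentially the same approach as the paper: you extract the three cocycle conditions $\dM_\CE\omega=0$, $(\delta(\omega,\varpi))_V=0$, and $\partial\huaT+\Omega(\omega,\varpi)=0$ from the Jacobi identity, the representation identity, and the embedding tensor constraint of the extension respectively, and for the second part you write the difference of sections as $(b_\g,b_V)$ (the paper calls these $N,S$) and verify component by component that the difference of cocycles is $\huaD_R(b_\g,b_V,0)$. The paper's proof is identical in structure and differs only in notation.
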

\begin{proof}
 First by the fact that $[\cdot,\cdot]_{\hat{\g}}$ satisfies the Jacobi identity, we deduce that $\omega$ is $2$-cocycle of the Lie algebra $(\g,[\cdot,\cdot]_\g)$, i.e. $\dM_\CE\omega=0$. Then since $\hat{\rho}$ is a representation of the Lie algebra  $(\hat{\g},[\cdot,\cdot]_{\hat{\g}})$ on $V\oplus W$, we obtain
 \begin{eqnarray*}
 0&=&  \hat{\rho}([x,y]_{\hat{\g}})u-[ \hat{\rho}(x), \hat{\rho}(y)](u)\\
 &=&\rho([x,y]_\g)u+\varpi([x,y]_\g,u)-\rho(x)\rho(y)u-\varpi(x,\rho(y)u)+\rho(y)\rho(x)u+\varpi(y,\rho(x)u)\\
 &=&\varpi([x,y]_\g,u)+\varpi(y,\rho(x)u)-\varpi(x,\rho(y)u).
 \end{eqnarray*}
 Therefore, $\delta(\omega,\varpi)=0.$

 Finally since $\hat{T}$ is an embedding tensor, we obtain
 \begin{eqnarray*}
   0&=& [\hat{T}(u+\xi),\hat{T}(v+\eta)]_{\hat{\g}}-\hat{T}(\hat{\rho}(\hat{T}(u+\xi))(v+\eta))\\
   &=&[Tu,Tv]_\g+\omega(Tu,Tv)-T(\rho(Tu)v)-\frkT \varpi(Tu,v)-\huaT \rho(Tu)v\\
   &=&\omega(Tu,Tv)-\frkT \varpi(Tu,v)-\huaT \rho(Tu)v,
 \end{eqnarray*}
which implies that $ \Omega(\omega,\varpi)+\partial \huaT=0$. Therefore, $\huaD_R(\omega,\varpi,\huaT)=0,$ i.e. $(\omega,\varpi,\huaT)$ is a 2-cocycle.

Let $(\frks',s')$ be another section and $(\omega',\varpi',\huaT')$ be the associated  2-cocycle. Assume that $\frks'=\frks+N$ and $s'=s+S$ for $N\in\Hom(\g,\h)$ and $S\in\Hom(V,W)$. Then we have
\begin{eqnarray*}
  (\omega'-\omega)(x,y)&=&[\frks'(x),\frks'(y)]_{\hat{\g}}-\frks'[x,y]_\g-[\frks(x),\frks(y)]_{\hat{\g}}+\frks[x,y]_\g=-N([x,y]_\g)=\dM_\CE N(x,y),\\
  (\varpi'-\varpi)(x,u)&=&\hat{\rho}(\frks'(x))s'(u)-s'(\rho(x)u)-\hat{\rho}(\frks(x))s(u)+s(\rho(x)u)=-S(\rho(x)u),\\
 ( \huaT'-\huaT)u&=&\hat{T}(s'(u))-\frks'(T(u))-\hat{T}(s(u))+\frks(T(u))=\frkT Su-NTu,
\end{eqnarray*}
which implies that $(\omega',\varpi',\huaT')-(\omega,\varpi,\huaT)=\huaD_R(N,S,0)$. Thus, $(\omega',\varpi',\huaT')$ and $(\omega,\varpi,\huaT)$ are in the same cohomology class.
\end{proof}

Isomorphisms between central extensions can be obviously defined as follows.
\begin{defi}
Let $((\hat{\g},[\cdot,\cdot]_{\hat{\g}}),(\hat{V},\hat{\rho}),\hat{T})$ and $((\tilde{\g},[\cdot,\cdot]_{\tilde{\g}}),(\tilde{V},\tilde{\rho}),\tilde{T})$ be two central extensions of a Lie-Leibniz triple $((\g,[\cdot,\cdot]_\g),(V;\rho),T)$  by   $(\h ,W ,\frkT)$. They are said to be {\bf isomorphic} if there exists an isomorphism of Lie-Leibniz triples $(\kappa,\lambda)$ such that the following diagram commutes:
  \begin{equation*}
\xymatrix@!0{0\ar@{->} [rr]&& W \ar@{->} [rr] \ar'[d] [dd] \ar@{=} [rd] && \tilde{V}\ar'[d] [dd]\ar@ {->} [rr] \ar@{->} [rd]^{\lambda}&& V\ar@{=} [rd]\ar'[d] [dd]\ar@{->} [rr]&&0&\\
&0\ar@{->} [rr]&& W\ar@{->} [rr]\ar@{->} [dd]&&\hat{V}\ar@{->} [dd]\ar@{->} [rr]&&V\ar@ {->} [dd]\ar@{->} [rr]&&0\\
0\ar@{->} [rr]&&\h\ar'[r] [rr] \ar@{=} [rd]&&\tilde{\g}\ar@{->} [rd]^{\kappa}\ar'[r] [rr]&&\g\ar@{=} [rd] \ar'[r] [rr]&&0&\\
&0\ar@{->} [rr]&& \h\ar@{->} [rr]&&\hat{\g}\ar@{->} [rr]&&\g\ar@{->} [rr]&&0.}
\end{equation*}
\end{defi}

\begin{thm}
  Central extensions of a Lie-Leibniz triple $((\g,[\cdot,\cdot]_\g),(V;\rho),T)$  by   $(\h ,W ,\frkT)$ are classified by the second cohomology group $\huaH^2_{\rm tri}(\g,\rho,T,\frkT)$.
\end{thm}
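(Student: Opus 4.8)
The plan is to exhibit a bijection between the set of isomorphism classes of central extensions of $((\g,[\cdot,\cdot]_\g),(V;\rho),T)$ by $(\h,W,\frkT)$ and the group $\huaH^2_{\rm tri}(\g,\rho,T,\frkT)$, following the classical pattern for extensions of algebraic structures. One direction is essentially already in place: given a central extension, choose a section $(\frks,s)$; Theorem~\ref{thm:cohomologicalclass} produces a $2$-cocycle $(\omega,\varpi,\huaT)$ with trivial coefficients whose class in $\huaH^2_{\rm tri}(\g,\rho,T,\frkT)$ is independent of the section. I would then observe that an isomorphism $(\kappa,\lambda)$ of central extensions, being the identity on $\h$, $W$, $\g$, $V$ and compatible with all structure maps, carries a section of one extension to a section of the other inducing the very same cocycle; hence isomorphic extensions give the same cohomology class, and the assignment $\text{extension}\mapsto[(\omega,\varpi,\huaT)]$ is well defined.

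For the converse, given a $2$-cocycle $(\omega,\varpi,\huaT)$ with $\omega\in\Hom(\wedge^2\g,\h)$, $\varpi\in\Hom(\g\otimes V,W)$, $\huaT\in\Hom(V,\h)$, I would equip $\hat\g:=\g\oplus\h$ and $\hat V:=V\oplus W$ with
\[
[x+\alpha,y+\beta]_{\hat\g}=[x,y]_\g+\omega(x,y),\quad \hat\rho(x+\alpha)(u+\xi)=\rho(x)u+\varpi(x,u),\quad \hat T(u+\xi)=Tu+\frkT\xi+\huaT u,
\]
and claim that $\huaD_R(\omega,\varpi,\huaT)=0$ is \emph{equivalent} to the statement that $((\hat\g,[\cdot,\cdot]_{\hat\g}),(\hat V,\hat\rho),\hat T)$ is a Lie-Leibniz triple: unwinding the trivial-coefficient formulas for $\delta$, $\partial$, $\Omega$, the vanishing of the three components of $\huaD_R(\omega,\varpi,\huaT)$ reads respectively as $\dM_\CE\omega=0$ (Jacobi identity for $[\cdot,\cdot]_{\hat\g}$), the cocycle identity for $\varpi$ (so that $\hat\rho$ is a representation of $(\hat\g,[\cdot,\cdot]_{\hat\g})$), and $\Omega(\omega,\varpi)+\partial\huaT=0$ (so that $\hat T$ is an embedding tensor) --- this is precisely the computation in the proof of Theorem~\ref{thm:cohomologicalclass} run in reverse. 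The obvious maps $W\hookrightarrow\hat V\twoheadrightarrow V$ and $\h\hookrightarrow\hat\g\twoheadrightarrow\g$ then make this a central extension, centrality being automatic since $\h$ and $W$ enter only as the targets of $\omega,\varpi,\huaT$ and are acted on trivially.

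It then remains to see that the two constructions are mutually inverse on the relevant equivalence classes. Starting from a cocycle and building $\hat\g,\hat V$ as above, the canonical section $\frks(x)=x$, $s(u)=u$ returns the same cocycle, so one composite is the identity. Starting from an extension with a chosen section $(\frks,s)$ and forming the extension associated to its cocycle, the maps $\kappa(x+\alpha)=\frks(x)+\frki(\alpha)$ and $\lambda(u+\xi)=s(u)+i(\xi)$ constitute an isomorphism of Lie-Leibniz triples over the identities, hence an isomorphism of extensions. Finally, two sections of a fixed extension differ by $(N,S)\in\Hom(\g,\h)\oplus\Hom(V,W)$, and the computation at the end of the proof of Theorem~\ref{thm:cohomologicalclass} shows the associated cocycles differ by $\huaD_R(N,S,0)$; conversely, if $(\omega',\varpi',\huaT')-(\omega,\varpi,\huaT)=\huaD_R(N,S,0)$, then $\kappa=\Id_{\g\oplus\h}+N\circ\pr_\g$ and $\lambda=\Id_{V\oplus W}+S\circ\pr_V$ give an isomorphism between the two constructed extensions. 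Assembling these observations yields the claimed bijection.

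The main obstacle I expect is the careful bookkeeping in the equivalence ``$\huaD_R(\omega,\varpi,\huaT)=0$ $\iff$ Lie-Leibniz triple axioms on $\g\oplus\h$'': one must match the $\Hom(\wedge^2\g,\h)$, $\Hom(\g\otimes V,W)$ and $\Hom(V,\h)$ components of $\huaD_R$ with, respectively, the Jacobi identity, the representation condition, and the quadratic embedding-tensor constraint, and similarly verify that section changes correspond precisely to isomorphisms of extensions. None of this is conceptually deep --- and the centrality hypothesis is exactly what forces the $\phi$- and $\varphi$-terms in $\huaD_R$ to drop out, keeping everything inside $\frkC^\bullet(\g,\rho,T,\frkT)$ with trivial coefficients --- but it is where the bulk of the verification lies.
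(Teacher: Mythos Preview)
Your proposal is correct and follows essentially the same approach as the paper: transport sections through an isomorphism to show isomorphic extensions yield the same cocycle, and build an extension from a cocycle by equipping $\g\oplus\h$ and $V\oplus W$ with the twisted structures. In fact you supply considerably more detail than the paper, which only writes out the ``isomorphic extensions $\Rightarrow$ same class'' direction explicitly and dismisses the converse with ``can be easily checked and we omit details''; your verification that cohomologous cocycles yield isomorphic extensions via $\kappa=\Id+N\circ\pr_\g$, $\lambda=\Id+S\circ\pr_V$ is exactly the omitted content (and your restriction to coboundaries of the form $\huaD_R(N,S,0)$ is harmless, since with trivial coefficients $\partial\alpha=0$ for any $\alpha\in\h$).
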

\begin{proof}
 Let $((\hat{\g},[\cdot,\cdot]_{\hat{\g}}),(\hat{V},\hat{\rho}),\hat{T})$ and $((\tilde{\g},[\cdot,\cdot]_{\tilde{\g}}),(\tilde{V},\tilde{\rho}),\tilde{T})$ be two isomorphic central extensions. Let $(\tilde{\frks},\tilde{s})$ be a section of $((\tilde{\g},[\cdot,\cdot]_{\tilde{\g}}),(\tilde{V},\tilde{\rho}),\tilde{T})$, and $(\tilde{\omega},\tilde{\varpi},\tilde{\huaT})$ be the corresponding 2-cocycle. Define $(\hat{\frks},\hat{s})$ by
 $$
 \hat{\frks}=\kappa\circ\tilde{\frks},\quad \hat{s}=\lambda\circ s.
 $$
 Then it is obvious that $(\hat{\frks},\hat{s})$ is a section of $((\hat{\g},[\cdot,\cdot]_{\hat{\g}}),(\hat{V},\hat{\rho}),\hat{T})$. Let $(\hat{\omega},\hat{\varpi},\hat{\huaT})$ be the corresponding 2-cocycle. Then we have
 $$
 \hat{\omega}(x,y)=[\hat{\frks}(x),\hat{\frks}(y)]_{\hat{\g}}-\hat{\frks}[x,y]_\g=[\lambda\tilde{\frks}(x),\lambda\tilde{\frks}(y)]_{\hat{\g}}-\lambda\tilde{\frks}[x,y]_\g=\lambda([ \tilde{\frks}(x), \tilde{\frks}(y)]_{\hat{\g}}- \tilde{\frks}[x,y]_\g)=\tilde{\omega}(x,y).
 $$
 Similarly, we have $$\tilde{\varpi}=\hat{\varpi},\quad \tilde{\huaT}=\hat{\huaT}.$$
 By Theorem \ref{thm:cohomologicalclass},  isomorphic central extensions gives rise to the same cohomological class in $\huaH^2_{\rm tri}(\g,\rho,T,\frkT)$.

 The converse part can be easily checked and we omit details.
\end{proof}

\section{Homotopy embedding tensors and higher structures}\label{sec:homotopy}

In this section, we define homotopy embedding tensors and establish
various relations between homotopy embedding tensors,
Leibniz$_\infty$-algebras, $A_\infty$-algebras and
$L_\infty$-algebras. In particular, we construct $L_\infty$-algebras
from homotopy embedding tensors via Leibniz$_\infty$-algebras, which
generalizes the construction of $L_\infty$-algebras from embedding
tensors given by Kotov and Strobl in \cite{KS}. A hidden aim of us to build up homotopy
theory for embedding tensors is to try to find possible equivalence
between them to provide equivalence of the corresponding physical
models. To build up weak equivalence between homotopy embedding
tensors is our next aim which we postpone to the future. If the physical model is topological, then
weak equivalence should definitely provide a suitable equivalence. But quite possibly, weak
equivalence might not provide non-trivial equivalence between
embedding tensors themselves, just like, weak equivalences for Lie
algebras viewed as $L_\infty$-algebras are simply isomorphisms.

\subsection{Homotopy embedding tensors and  Leibniz$_\infty$-algebras}

In this subsection, we introduce the notion of a homotopy embedding tensor on an $L_\infty$-algebra and show that it induces a Leibniz$_\infty$-algebra generalizing Proposition \ref{average-Leibniz}.

Let $V^\bullet$ be a graded vector space. Denote by $\Hom^n(\bar{\Ten}(V^\bullet),V^\bullet)$ the space of degree $n$ linear maps from the graded vector space $\bar{\Ten}(V^\bullet)$ to the graded vector space $V^\bullet$. Obviously, an element $f\in\Hom^n(\bar{\Ten}(V^\bullet),V^\bullet)$ is the sum of $f_i:{{\otimes^ i}V^\bullet}\lon V^\bullet$. We will write  $f=\sum_{i=1}^{+\infty} f_i$.
 Set $\CV^n(V^\bullet,V^\bullet):=\Hom^n(\bar{\Ten}(V^\bullet),V^\bullet)$ and
$
\CV^\bullet(V^\bullet,V^\bullet):=\oplus_{n\in\mathbb Z}\CV^n(V^\bullet,V^\bullet).
$
As the graded version of the Balavoine bracket given in \cite{Bal}, the {\bf graded Balavoine bracket} $[\cdot,\cdot]_{\B}$ on the graded vector space $\CV^\bullet(V^\bullet,V^\bullet)$ is given
by:
\begin{eqnarray}
[f,g]_{\B}:=f\bar{\circ} g-(-1)^{mn}g\bar{\circ} f,\,\,\,\,\forall f=\sum_{i=1}^{+\infty} f_i\in \CV^m(V^\bullet,V^\bullet),~g=\sum_{j=1}^{+\infty}g_j\in \CV^n(V^\bullet,V^\bullet),
\label{eq:gfgcirc-B}
\end{eqnarray}
where $f\bar{\circ} g\in \CV^{m+n}(V^\bullet,V^\bullet)$ is defined by
 \begin{eqnarray}\label{graded-NR-circ-B}
f\bar{\circ} g&=&\Big(\sum_{i=1}^{+\infty}f_i\Big)\bar{\circ}\Big(\sum_{j=1}^{+\infty}g_j\Big):=\sum_{s=1}^{+\infty}\Big(\sum_{i+j=s+1}f_i\bar{\circ} g_j\Big),
\end{eqnarray}
while $f_i\bar{\circ} g_j\in \Hom({{\otimes ^s}V^\bullet},V^\bullet)$ is defined by
$f_i\bar{\circ} g_j=\sum_{k=1}^{i}f_i\bar{\circ}_k g_j
$
and $f_i\bar{\circ}_k g_j$ is defined by
\begin{equation*}
\begin{split}
&(f_i\bar{\circ}_k g_j)(v_1,\cdots,v_{s})\\
={}& \sum_{\sigma\in\mathbb S_{(k-1,j-1)}}(-1)^{\beta_k}\varepsilon(\sigma)f_i(v_{\sigma(1)},\cdots,v_{\sigma(k-1)},g_j(v_{\sigma(k)},\cdots,v_{\sigma(k+j-2)},v_{k+j-1}),v_{k+j},\cdots,v_{s}),
\end{split}
\end{equation*}
where $\beta_k=n(v_{\sigma(1)}+v_{\sigma(2)}+\cdots+v_{\sigma(k-1)})$.

Similar as the classical case, $(\CV^\bullet(V^\bullet,V^\bullet),[\cdot,\cdot]_{\B})$ is a graded Lie algebra.

The notion of a Leibniz$_\infty$-algebra was introduced in \cite{livernet}, and further studied in  \cite{ammardefiLeibnizalgebra,SL,Uchino-1}.

 \begin{defi}
A {\bf  Leibniz$_\infty$-algebra} is a $\mathbb Z$-graded vector space $\la^\bullet=\oplus_{k\in\mathbb Z}\la^k$ equipped with a collection $(k\ge 1)$ of linear maps $\oprn_k:\otimes^k\la^\bullet\lon\la^\bullet$ of degree $1$ such that  $\sum_{k=1}^{+\infty}\oprn_k$ is a Maurer-Cartan element of the graded Lie algebra $(\CV^\bullet(\la^\bullet,\la^\bullet),[\cdot,\cdot]_{\B})$. More precisely, for any homogeneous elements $x_1,\cdots,x_n\in \la^\bullet$, the following equality holds:
\begin{eqnarray*}
\sum_{i=1}^{n}\sum_{k=1}^{n-i+1}\sum_{\sigma\in \mathbb S_{(k-1,i-1)} }(-1)^{\gamma_{k}}\varepsilon(\sigma)\oprn_{n-i+1}(x_{\sigma(1)},\cdots,x_{\sigma(k-1)},\oprn_i(x_{\sigma(k)},\cdots,x_{\sigma(k+i-2)},x_{k+i-1}),x_{k+i},\cdots,x_{n})=0,
\end{eqnarray*}
where $\gamma_{k}=x_{\sigma(1)}+\cdots+x_{\sigma(k-1)}$.
\end{defi}
It is obvious that   an $L_\infty$-algebra  is naturally a Leibniz$_\infty$-algebra.

\begin{defi}
Let $(\la^\bullet,\{\oprn_k\}_{k=1}^{+\infty})$ and $({\la'}^\bullet,\{\oprn_k'\}_{k=1}^{+\infty})$ be two Leibniz$_\infty$-algebras.
A Leibniz$_\infty$-algebra {\bf homomorphism} from $(\la^\bullet,\{\oprn_k\}_{k=1}^{+\infty})$ to $({\la'}^\bullet,\{\oprn_k'\}_{k=1}^{+\infty})$ consists of a collection of degree $0$ graded multilinear maps $f_k:{\otimes ^k}\la^\bullet\lon {\la'}^\bullet,~ k\ge 1$ with the property that,
for any $n\geq 1$ and homogeneous elements $x_1,\cdots,x_n\in \la^\bullet$,
the following equality holds:
\begin{eqnarray*}
&&\sum_{i=1}^{n}\sum_{k=1}^{n-i+1}\sum_{\sigma\in \mathbb S_{(k-1,i-1)} }(-1)^{\gamma_{k}}\varepsilon(\sigma)f_{n-i+1}(x_{\sigma(1)},\cdots,x_{\sigma(k-1)},\oprn_i(x_{\sigma(k)},\cdots,x_{\sigma(k+i-2)},x_{k+i-1}),x_{k+i},\cdots,x_{n})\\
&&=\sum_{p=1}^n\sum_{\sigma\in  \mathbb E_{(k_1,\cdots,k_p)} \atop k_1+\cdots+k_p=n}\varepsilon(\sigma)\oprn_p'(f_{k_1}(x_{\sigma(1)},\cdots,x_{\sigma(k_1)}),\cdots,f_{k_p}(x_{\sigma(k_1+\cdots+k_{p-1}+1)},\cdots,x_{\sigma(n)})),
\end{eqnarray*}
where $\gamma_{k}=x_{\sigma(1)}+\cdots+x_{\sigma(k-1)}$ and $\mathbb E_{(k_1,\cdots,k_p)}$ denotes the set of shuffles $\sigma\in\mathbb S_{(k_1,\cdots,k_p)}$ such that $\sigma(k_1)<\sigma(k_1+k_2)<\cdots<\sigma(k_1+k_2+\cdots+k_p)$.
\end{defi}

\begin{propdef}Let $\g^\bullet$ and $V^\bullet$ be graded vector spaces.
Let $l_k:\Sym^{k}(\g^\bullet)\lon\g^\bullet$ and $\rho_k:\Sym^{k-1}(\g^\bullet)\otimes V^\bullet\lon V^\bullet,~ k\ge 1$ be linear maps of degree $1$. We define $l_k\boxplus\rho_k:\otimes^k(\g^\bullet\oplus V^\bullet)\lon \g^\bullet\oplus V^\bullet$ as follows
\begin{eqnarray}\label{hemisemidirect}
(l_k\boxplus\rho_k)\big((x_1,v_1),\cdots,(x_k,v_k)\big)=\big(l_k(x_1,\cdots,x_k),\rho_k(x_1,\cdots,x_{k-1},v_k)\big).
\end{eqnarray}
Then $(\g^\bullet\oplus V^\bullet,\{l_k\boxplus\rho_k\}_{k=1}^{+\infty})$ is a Leibniz$_\infty$-algebra if and only if $(\g^\bullet ,\{l_k\}_{k=1}^{+\infty})$ is an $L_\infty$-algebra and $(V^\bullet,\{\rho_k\}_{k=1}^{+\infty})$ is its  representation. This Leibniz$_\infty$-algebra is called the {\bf hemisemidirect product} of $(\g^\bullet,\{l_k\}_{k=1}^{+\infty})$ and $(V^\bullet,\{\rho_k\}_{k=1}^{+\infty})$.
\end{propdef}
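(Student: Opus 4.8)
\smallskip
\noindent\textbf{Proof strategy.} The plan is to recognize the hemisemidirect product as a Maurer--Cartan element and then split the Maurer--Cartan equation into a ``$\g^\bullet$-part'' and a ``$V^\bullet$-part''. By the definition of a Leibniz$_\infty$-algebra, $(\g^\bullet\oplus V^\bullet,\{l_k\boxplus\rho_k\}_{k\ge1})$ is a Leibniz$_\infty$-algebra precisely when $\Omega:=\sum_{k\ge1}(l_k\boxplus\rho_k)$ is a Maurer--Cartan element of the graded Balavoine algebra $(\CV^\bullet(\g^\bullet\oplus V^\bullet,\g^\bullet\oplus V^\bullet),[\cdot,\cdot]_\B)$, i.e. $[\Omega,\Omega]_\B=0$. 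Writing $L:=\sum_{k\ge1}l_k$ and $R:=\sum_{k\ge1}\rho_k$ and using the horizontal-lift notation of Section~\ref{sec:A}, formula \eqref{hemisemidirect} says exactly that $\Omega=\hat L+\hat R$, where $\hat L$ is supported on tensor words all of whose entries lie in $\g^\bullet$ (with value in $\g^\bullet$) and $\hat R$ is supported on words whose last entry lies in $V^\bullet$ and whose other entries lie in $\g^\bullet$ (with value in $V^\bullet$). This is the homotopy analogue of the observation in Theorem~\ref{average-MC} that $\mu\boxplus\rho$ is a Maurer--Cartan element of the Balavoine algebra of $\g\oplus V$.

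First I would establish the bookkeeping lemma that $[\cdot,\cdot]_\B$ respects the bigrading of $\CV^\bullet(\g^\bullet\oplus V^\bullet,\g^\bullet\oplus V^\bullet)$ by the number of $V^\bullet$-inputs and by the target summand; this is the graded analogue of the subalgebra assertion in Theorem~\ref{lem:MC-algrep}, proved in the same way. Its consequences are: $\hat L\,\bar\circ\,\hat R=0$ (because $\hat L$ discards the $V^\bullet$-components of its arguments while $\hat R$ takes values in $V^\bullet$), and both $\hat R\,\bar\circ\,\hat L$ and $\hat R\,\bar\circ\,\hat R$ are supported on words with exactly one $V^\bullet$-entry, occupying the last slot, with value in $V^\bullet$. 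Since $\hat L,\hat R$ have degree $1$, one gets $\tfrac12[\Omega,\Omega]_\B=\hat L\,\bar\circ\,\hat L+\hat R\,\bar\circ\,\hat L+\hat R\,\bar\circ\,\hat R$, whose ``value in $\g^\bullet$'' component is $\hat L\,\bar\circ\,\hat L$ and whose ``value in $V^\bullet$'' component is $\hat R\,\bar\circ\,\hat L+\hat R\,\bar\circ\,\hat R$. These lie in complementary summands, so $[\Omega,\Omega]_\B=0$ if and only if both vanish.

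Next I would identify the two halves. The vanishing of $\hat L\,\bar\circ\,\hat L$ is the vanishing of $L\,\bar\circ\,L$ in $\CV^\bullet(\g^\bullet,\g^\bullet)$, i.e. $\{l_k\}$ is a Leibniz$_\infty$-structure on $\g^\bullet$; since each $l_k$ factors through $\Sym^k(\g^\bullet)$, the Leibniz higher Jacobi identities collapse to the generalized Jacobi identities of an $L_\infty$-algebra (the two relevant families of shuffle sums agreeing by a hockey-stick count), so this is equivalent to $(\g^\bullet,\{l_k\})$ being an $L_\infty$-algebra. The vanishing of $\hat R\,\bar\circ\,\hat L+\hat R\,\bar\circ\,\hat R$, evaluated on words $(x_1,0)\otimes\cdots\otimes(x_{n-1},0)\otimes(0,v)$ and read off in the $V^\bullet$-component, is (after fixing signs) precisely the system of identities relating $\{\rho_k\}$ and $\{l_k\}$ that expresses $\{\rho_k\}$ as a representation of the $L_\infty$-algebra $(\g^\bullet,\{l_k\})$ on $V^\bullet$. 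Combining the two equivalences proves the Proposition-definition.

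The genuinely laborious but routine step will be this last sign check --- verifying that the ``value in $V^\bullet$'' component of $\tfrac12[\Omega,\Omega]_\B$ reproduces the representation axioms verbatim, tracking all Koszul and desuspension signs --- together with recording the bijection between Leibniz-shuffles and $L_\infty$-shuffles used for the symmetric brackets on $\g^\bullet$. The conceptual core, the bigrading compatibility of $[\cdot,\cdot]_\B$ (the graded counterpart of Theorem~\ref{lem:MC-algrep}), is where I would begin and holds no surprises.
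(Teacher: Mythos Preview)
The paper states this proposition-definition without proof; it is left to the reader as a routine graded analogue of the ungraded hemisemidirect product and of Theorem~\ref{lem:MC-algrep}. Your strategy is exactly the expected one and is correct: write $\Omega=\hat L+\hat R$, use the bigrading by number of $V^\bullet$-inputs and by target summand to see that $\hat L\,\bar\circ\,\hat R=0$ and that $\tfrac12[\Omega,\Omega]_\B$ splits into a $\g^\bullet$-valued piece $\hat L\,\bar\circ\,\hat L$ and a $V^\bullet$-valued piece $\hat R\,\bar\circ\,\hat L+\hat R\,\bar\circ\,\hat R$, then identify the vanishing of the former with the $L_\infty$-axioms (using that each $l_k$ is graded-symmetric, so the Leibniz$_\infty$ identities for $\{l_k\}$ collapse to the $L_\infty$ ones) and the vanishing of the latter with the representation axioms. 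There is nothing to add beyond the sign bookkeeping you already flagged.
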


A representation of an $L_\infty$-algebra will give rise to V-data.

\begin{pro}
Let $(\g^\bullet,\{l_k\}_{k=1}^{+\infty})$ be an $L_\infty$-algebra and $(V^\bullet,\{\rho_k\}_{k=1}^{+\infty})$ a representation of $(\g^\bullet,\{l_k\}_{k=1}^{+\infty})$. Then the following quadruple form V-data:
\begin{itemize}
\item[$\bullet$] the graded Lie algebra $(L,[\cdot,\cdot])$ is given by $(\CV^\bullet(\g^\bullet\oplus V^\bullet,\g^\bullet\oplus V^\bullet),[\cdot,\cdot]_\B)$;
\item[$\bullet$] the abelian graded Lie subalgebra $\h$ is given by $\h:=\oplus_{n\in\mathbb Z}\Hom^n(\bar{\Ten}(V^\bullet),\g^\bullet);$
\item[$\bullet$] $P:L\lon L$ is the projection onto the subspace $\h$;
\item[$\bullet$] $\Delta=\sum_{k=1}^{+\infty}(l_k\boxplus\rho_k)$.
\end{itemize}

Consequently, $(\h,\{\frkl_k\}_{k=1}^{+\infty})$ is an $L_\infty$-algebra, where $\frkl_k$ is given by
\eqref{V-shla}.
\end{pro}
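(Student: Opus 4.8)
The plan is to verify the four conditions in the definition of $V$-data for the quadruple $(L,\h,P,\Delta)$ and then invoke Theorem \ref{thm:db-big-homotopy-lie-algebra} (in its last incarnation, the statement about $(\h,\{l_k\}_{k=1}^{+\infty})$) to produce the $L_\infty$-structure. First I would observe that $(L,[\cdot,\cdot])=(\CV^\bullet(\g^\bullet\oplus V^\bullet,\g^\bullet\oplus V^\bullet),[\cdot,\cdot]_\B)$ is a graded Lie algebra by the graded analogue of Balavoine's theorem already recalled right before the statement. Next, $\h=\oplus_{n}\Hom^n(\bar\Ten(V^\bullet),\g^\bullet)$ sits inside $L$ via the horizontal lift (take maps whose inputs are all in $V^\bullet$ and whose output lies in $\g^\bullet$), and I would check that $\h$ is abelian: for $f,g\in\h$, in $f\bar\circ g$ one would have to feed the $\g^\bullet$-valued output of $g$ into one of the slots of $f$, but $f$ only sees $V^\bullet$-inputs, so $f\bar\circ g=0$ and likewise $g\bar\circ f=0$, hence $[f,g]_\B=0$. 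This is the same degree-reasons argument used in the proof of Theorem \ref{average-MC} and Proposition \ref{pro:VdataL}.

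Then I would treat $P$: it is the projection of $L$ onto $\h$ along the complementary graded subspace $L''$ spanned by all the other horizontal components (maps with at least one $\g^\bullet$-input, or with $V^\bullet$-valued output). Clearly $P\circ P=P$ and $\Img P=\h$. The content is that $\Ker P=L''$ is a graded Lie subalgebra of $(L,[\cdot,\cdot]_\B)$: one checks that the Balavoine bracket of two maps, each of which either has a $\g^\bullet$-input or a $V^\bullet$-output, again has a $\g^\bullet$-input or a $V^\bullet$-output — this is the standard ``bidegree'' bookkeeping and is immediate from the composition formula for $\bar\circ_k$. Finally, $\Delta=\sum_{k\ge1}(l_k\boxplus\rho_k)$ lies in $\CV^1$ since each $l_k$ and $\rho_k$ has degree $1$, and by the preceding Proposition-definition $(\g^\bullet\oplus V^\bullet,\{l_k\boxplus\rho_k\})$ being a Leibniz$_\infty$-algebra (equivalently: $(\g^\bullet,\{l_k\})$ an $L_\infty$-algebra with representation $(V^\bullet,\{\rho_k\})$) says exactly that $\Delta$ is a Maurer-Cartan element, i.e. $[\Delta,\Delta]_\B=0$; moreover $\Delta\in L''=\Ker P$ because $l_k\boxplus\rho_k$ has a $\g^\bullet$-output only through $l_k$ (which has $\g^\bullet$-inputs) and a $V^\bullet$-output through $\rho_k$, so $P(\Delta)=0$. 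Hence $\Delta\in\Ker(P)^1$ with $[\Delta,\Delta]=0$, completing the verification of the $V$-data axioms.

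With the $V$-data in hand, Theorem \ref{thm:db-big-homotopy-lie-algebra} gives directly that $(\h,\{\frkl_k\}_{k=1}^{+\infty})$ is an $L_\infty$-algebra, where $\frkl_k(a_1,\dots,a_k)=P[\cdots[[\Delta,a_1]_\B,a_2]_\B,\cdots,a_k]_\B$, which is formula \eqref{V-shla}; this is the asserted conclusion. I expect no genuine obstacle here — the only place requiring care is the bookkeeping that $\Ker P$ is closed under $[\cdot,\cdot]_\B$ and that $\h$ is abelian, both of which are the graded versions of the degree/bidegree arguments already used repeatedly in Sections \ref{sec:A} and \ref{sec:control}; I would state them as a short lemma on horizontal components of $\CV^\bullet(\g^\bullet\oplus V^\bullet,\g^\bullet\oplus V^\bullet)$ and refer back to the analogous finite-dimensional computation rather than rewriting it in full.
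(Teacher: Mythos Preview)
Your proposal is correct and follows essentially the same approach as the paper: verify that $\h$ is abelian, that $\Ker P$ is a graded Lie subalgebra, that $\Delta$ is a Maurer-Cartan element in $\Ker(P)^1$ via the hemisemidirect product Proposition-definition, and then invoke Theorem~\ref{thm:db-big-homotopy-lie-algebra}. The paper's proof is much terser (declaring these facts ``obvious''), whereas you spell out the bidegree bookkeeping, but the logic is identical.
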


\begin{proof}
 It is obvious that $\Img P=\h$ is an abelian graded Lie subalgebra of the g.l.a. $(\CV^\bullet(\g^\bullet\oplus V^\bullet,\g^\bullet\oplus V^\bullet),[\cdot,\cdot]_\B)$. Moreover,    $\ker P$ is also a  graded Lie subalgebra. Since $\Delta=\sum_{k=1}^{+\infty}(l_k\boxplus\rho_k)$ is the hemisemidirect product  Leibniz$_\infty$-algebra structure on $\g^\bullet\oplus V^\bullet$, we have $[\Delta,\Delta]_\B=0$ and $P(\Delta)=0$. Thus $(L,\h,P,\Delta)$ are V-data. Hence by Theorem \ref{thm:db-big-homotopy-lie-algebra}, we obtain the higher derived brackets $\{{\frkl_k}\}_{k=1}^{+\infty}$ on the abelian graded Lie subalgebra $\h$.
\end{proof}

Now we are ready to define a homotopy embedding tensor, which is the  main object   in this section. A homotopy embedding tensor on an  $L_\infty$-algebra is a generalization of an  embedding tensor on a Lie algebra.

\begin{defi}\label{homotopy-embedding-tensors-homotopy-lie}
With above notations, a degree $0$ element  $\Theta=\sum_{k=1}^{+\infty}\Theta_k\in \Hom(\bar{\Ten}(V^\bullet),\g^\bullet)$ is called a {\bf homotopy embedding tensor} on $(\g^\bullet,\{l_k\}_{k=1}^{+\infty})$ with respect to the representation $(V^\bullet,\{\rho_k\}_{k=1}^{+\infty})$ if   $\Theta=\sum_{k=1}^{+\infty}\Theta_k$ is a Maurer-Cartan element of the $L_\infty$-algebra $(\h,\{{\frkl_k}\}_{k=1}^{+\infty})$\footnote{It is a filtered  $L_\infty$-algebra \cite{Dolgushev-Rogers}. The condition of being filtered ensures convergence of the series figuring in the definition of Maurer-Cartan elements.}, that is,
\begin{eqnarray}\label{homotopy-embedding-tensors}
P\Big(e^{[\cdot,\Theta]_\B}\sum_{k=1}^{+\infty}(l_k\boxplus\rho_k)\Big)=0.
\end{eqnarray}
\end{defi}

  \begin{defi}\label{defi:isoO}
  Let $\Theta=\sum_{k=1}^{+\infty}\Theta_k\in \Hom(\bar{\Ten}(V^\bullet),\g^\bullet)$ and $\Theta'=\sum_{k=1}^{+\infty}\Theta_k'\in \Hom(\bar{\Ten}(V^\bullet),\g^\bullet)$ be homotopy embedding tensors on  $(\g^\bullet,\{l_k\}_{k=1}^{+\infty})$ with respect to the representation $(V^\bullet,\{\rho_k\}_{k=1}^{+\infty})$. A {\bf strict  homomorphism} from $\Theta'$ to $\Theta$ consists of an $L_\infty$-algebra strict homomorphism  $\phi_{\g^\bullet}:\g^\bullet\longrightarrow\g^\bullet$ and a graded linear map $\phi_{V^\bullet}:V^\bullet\longrightarrow V^\bullet$ of degree $0$ such that for any $n\geq 1$ and homogeneous elements $x_1,\cdots,x_{n-1}\in \g^\bullet,~v_1,\cdots,v_n,u\in V^\bullet$,
the following equalities hold:
      \begin{eqnarray}
       \phi_{\g^\bullet}\big(\Theta'_n(v_1,\cdots,v_n)\big)&=& \Theta_n\big(\phi_{V^\bullet}(v_1),\cdots,\phi_{V^\bullet}(v_n)\big),\label{defi:isocon11}\\
        \phi_{V^\bullet}\big(\rho_n(x_1,\cdots,x_{n-1},u)\big)&=&\rho_n\big(\phi_{\g^\bullet}(x_1),\cdots,\phi_{\g^\bullet}(x_{n-1}),\phi_{V^\bullet}(u)\big).\label{defi:isocon21}
      \end{eqnarray}
    \end{defi}

An embedding tensor induces a Leibniz algebra (Proposition \ref{average-Leibniz}). Similarly, we have the following result.

\begin{thm}\label{twist-homotopy-lie}
Let $\Theta=\sum_{k=1}^{+\infty}\Theta_k\in \Hom(\bar{\Ten}(V^\bullet),\g^\bullet)$ be a homotopy embedding tensor on $(\g^\bullet,\{l_k\}_{k=1}^{+\infty})$ with respect to the representation $(V^\bullet,\{\rho_k\}_{k=1}^{+\infty})$.
\begin{itemize}
  \item[\rm(i)] $e^{[\cdot,\Theta]_\B}\sum_{k=1}^{+\infty}(l_k\boxplus\rho_k)$ is a Maurer-Cartan element of the g.l.a. $(\CV^\bullet(\g^\bullet\oplus V^\bullet,\g^\bullet\oplus V^\bullet),[\cdot,\cdot]_\B)$;
  \item[\rm(ii)] there is a Leibniz$_\infty$-algebra structure on $V^\bullet$  given by
\begin{eqnarray}\label{double-homotopy-lie}
&&\theta_k(v_1,\cdots,v_{k})=\Big(e^{[\cdot,\Theta]_\B}\sum_{k=1}^{+\infty}(l_k\boxplus\rho_k)\Big)(v_1,\cdots,v_{k}).
\end{eqnarray}
\item[\rm(iii)] the association in (ii) gives rise to a functor $S$
  from the category of homotopy embedding tensors to that of
  Leibniz$_\infty$-algebras.
\end{itemize}
\end{thm}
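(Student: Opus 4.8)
The plan is to verify the three statements in turn, using the V-data machinery already set up. For (i), I would invoke Proposition~\ref{twisted-homotopy-lie}: since $\Theta$ is a Maurer-Cartan element of the $L_\infty$-algebra $(\h,\{\frkl_k\}_{k=1}^{+\infty})$ coming from the V-data $(L,\h,P,\Delta)$ with $\Delta=\sum_{k=1}^{+\infty}(l_k\boxplus\rho_k)$, the twisted element $\Delta^\Theta:=e^{[\cdot,\Theta]_\B}\Delta$ is again a Maurer-Cartan element of the ambient g.l.a. $(L,[\cdot,\cdot]_\B)=(\CV^\bullet(\g^\bullet\oplus V^\bullet,\g^\bullet\oplus V^\bullet),[\cdot,\cdot]_\B)$. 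Concretely one checks that $[\Delta^\Theta,\Delta^\Theta]_\B = e^{[\cdot,\Theta]_\B}[\Delta,\Delta]_\B = 0$ because conjugation by $e^{[\cdot,\Theta]_\B}$ is a g.l.a. automorphism (here $\Theta$ has degree $0$, so $[\cdot,\Theta]_\B$ is a degree $0$ derivation and its exponential converges in the filtered setting, as noted in the footnote to Definition~\ref{homotopy-embedding-tensors-homotopy-lie}). This is essentially bookkeeping once the filtration/convergence is granted.

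For (ii), the key observation is a bidegree/projection argument analogous to the one in the proof of Theorem~\ref{average-MC}: one shows that the component of $\Delta^\Theta$ landing in $\Hom(\bar{\Ten}(V^\bullet),V^\bullet)$, call it $\sum_k\theta_k$, is itself a Maurer-Cartan element of the g.l.a. $(\CV^\bullet(V^\bullet,V^\bullet),[\cdot,\cdot]_\B)$, hence a Leibniz$_\infty$-algebra structure on $V^\bullet$ by the very definition of Leibniz$_\infty$-algebra. The point is that $\Delta^\Theta$ has a triangular form with respect to the splitting $\g^\bullet\oplus V^\bullet$ --- its ``$\g^\bullet$-output, all-$V^\bullet$-input'' part is killed precisely by the homotopy embedding tensor equation \eqref{homotopy-embedding-tensors}, $P(\Delta^\Theta)=0$, so the only surviving piece with all inputs in $V^\bullet$ has output in $V^\bullet$. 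Then restricting the Maurer-Cartan equation $[\Delta^\Theta,\Delta^\Theta]_\B=0$ to arguments in $V^\bullet$ and projecting to $V^\bullet$ yields $[\sum_k\theta_k,\sum_k\theta_k]_\B=0$ in $\CV^\bullet(V^\bullet,V^\bullet)$; one must check that the cross terms (involving the $\g^\bullet$-valued components of $\Delta^\Theta$) do not contribute, which follows because every such component, when fed only $V^\bullet$-inputs, lands in $\g^\bullet$ and is then discarded by the projection. This mild case-analysis of where components land is the one genuinely fiddly step.

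For (iii), functoriality, I would first check that a strict homomorphism $(\phi_{\g^\bullet},\phi_{V^\bullet})$ of homotopy embedding tensors in the sense of Definition~\ref{defi:isoO} intertwines the twisting operations: conditions \eqref{defi:isocon11} and \eqref{defi:isocon21} say exactly that $\phi_{\g^\bullet}\oplus\phi_{V^\bullet}$ commutes with $\Delta=\sum(l_k\boxplus\rho_k)$ and with $[\cdot,\Theta]_\B$ on the relevant components, hence with $e^{[\cdot,\Theta]_\B}\Delta$. Restricting to $V^\bullet$ then shows $\phi_{V^\bullet}$ is a strict Leibniz$_\infty$-algebra homomorphism $(V^\bullet,\{\theta_k\})\to({V'}^\bullet,\{\theta'_k\})$. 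Identity and composition are preserved on the nose, so $S$ is a functor. I expect the main obstacle to be purely organizational: setting up the bidegree conventions carefully enough that the ``which slot lands where'' argument in (ii) is airtight, and confirming that all infinite sums converge in the filtered $L_\infty$ framework of \cite{Dolgushev-Rogers} so that $e^{[\cdot,\Theta]_\B}$ and the Maurer-Cartan equation make sense; the algebra itself is a direct transcription of the classical (ungraded) picture in Section~\ref{sec:A}.
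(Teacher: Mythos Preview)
Your proposal is correct and follows essentially the same route as the paper: for (i) the paper also argues that $e^{[\cdot,\Theta]_\B}$ is a g.l.a.\ automorphism (via local nilpotency of $[\cdot,\Theta]_\B$) so that $[\Delta^\Theta,\Delta^\Theta]_\B=e^{[\cdot,\Theta]_\B}[\Delta,\Delta]_\B=0$; for (ii) the paper is even terser, simply observing that \eqref{homotopy-embedding-tensors} says $P(\Delta^\Theta)=0$, hence $V^\bullet$ is closed under the twisted operations and inherits a Leibniz$_\infty$-structure---your bidegree/cross-term discussion just spells this out; and for (iii) the paper does the same direct computation you outline using \eqref{defi:isocon11}--\eqref{defi:isocon21}. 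Two minor cosmetic points: the citation of Proposition~\ref{twisted-homotopy-lie} in (i) is not quite the right hook (that proposition twists $L_\infty$-algebras rather than exponentiating a g.l.a.\ derivation), though your subsequent concrete argument is exactly what is needed; and in (iii) both Leibniz$_\infty$-structures live on the \emph{same} $V^\bullet$, so the target should be $(V^\bullet,\{\theta_k\})$ rather than $({V'}^\bullet,\{\theta'_k\})$.
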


\begin{proof}
(i) Since, $[\cdot,\Theta]_\B$ is a locally nilpotent derivation of $(\CV^\bullet(\g^\bullet\oplus V^\bullet,\g^\bullet\oplus V^\bullet),[\cdot,\cdot]_\B)$, we deduce that $e^{[\cdot,\Theta]_\B}$ is an automorphism of $(\CV^\bullet(\g^\bullet\oplus V^\bullet,\g^\bullet\oplus V^\bullet),[\cdot,\cdot]_\B)$. Moreover, we have
\begin{eqnarray*}
[e^{[\cdot,\Theta]_\B}\Big(\sum_{k=1}^{+\infty}(l_k\boxplus\rho_k)\Big),e^{[\cdot,\Theta]_\B}\Big(\sum_{k=1}^{+\infty}(l_k\boxplus\rho_k)\Big)]_\B=e^{[\cdot,\Theta]_\B}[\sum_{k=1}^{+\infty}(l_k\boxplus\rho_k),\sum_{k=1}^{+\infty}(l_k\boxplus\rho_k)]_\B=0,
\end{eqnarray*}
which implies that $e^{[\cdot,\Theta]_\B}\Big(\sum_{k=1}^{+\infty}(l_k\boxplus\rho_k)\Big)$ is a Maurer-Cartan element of the graded Lie algebra $(\CV^\bullet(\g^\bullet\oplus V^\bullet,\g^\bullet\oplus V^\bullet),[\cdot,\cdot]_\B)$.

(ii)  By \eqref{homotopy-embedding-tensors},   $e^{[\cdot,\Theta]_\B}\sum_{k=1}^{+\infty}(l_k\boxplus\rho_k)|_{V^\bullet}$ is a
Leibniz$_\infty$-algebra structure on $V^\bullet$.

(iii) Let $\Theta\in \Hom(\bar{\Ten}(V^\bullet),\g^\bullet)$ and
$\Theta'\in \Hom(\bar{\Ten}(V^\bullet),\g^\bullet)$ be homotopy
embedding tensors and
$(\phi_{\g^\bullet},\phi_{V^\bullet})$ a strict homomorphism from
$\Theta'$ to $\Theta$.  For any $n\geq 1$ and homogeneous elements $v_1,\cdots,v_n\in V^\bullet$, we
have
\begin{eqnarray*}
\phi_{V^\bullet}\big(\oprn_n'(v_1,\cdots,v_n)\big)&=&\phi_{V^\bullet}\Big(e^{[\cdot,\Theta']_\B}\sum_{k=1}^{+\infty}(l_k\boxplus\rho_k)\Big)(v_1,\cdots,v_{k})\\
                                              &\stackrel{\eqref{defi:isocon11},\eqref{defi:isocon21}}{=}&\Big(e^{[\cdot,\Theta]_\B}\sum_{k=1}^{+\infty}(l_k\boxplus\rho_k)\Big)(\phi_{V^\bullet}(v_1),\cdots,\phi_{V^\bullet}(v_n))\\
                                              &=&\oprn_n\big(\phi_{V^\bullet}(v_1),\cdots,\phi_{V^\bullet}(v_n)\big).
\end{eqnarray*}
Therefore,  $\phi_{V^\bullet}$ is a strict
homomorphism from the Leibniz$_\infty$-algebra
$(V^{\bullet},\{\oprn_k'\}_{k=1}^{+\infty})$ to
$(V^{\bullet},\{\oprn_k\}_{k=1}^{+\infty})$. Then it is straightforward to see that it is actually a functor.
\end{proof}

\subsection{Leibniz$_\infty$-algebras and $A_\infty$-algebras}

In this subsection, first we recall the B\"{o}rjeson products on graded associative algebras which is a useful tool to construct $A_\infty$-algebras. Then by the bar construction, we show that a Leibniz$_\infty$-algebra $\la^\bullet$ gives rise to an  $A_\infty$-algebra $\bar{\Ten}(\la^\bullet).$
\begin{defi}{\rm (\cite{Sta63})}
An {\bf $A_{\infty}$-algebra} is a $\mathbb Z$-graded vector space $ A^\bullet=\oplus_{k\in\mathbb Z} A^k$ endowed with a family of graded maps
$
m_i:\otimes^i A^\bullet\lon A^\bullet,\,deg(m_i)=1,\, i\ge 1
$
satisfying the {\bf Stasheff identities}
$$
\sum_{i=1}^{n}\sum_{k=1}^{n-i+1}(-1)^{a_1+\cdots+a_{k-1}}m_{n-i+1}(a_1,\cdots,a_{k-1}, m_i(a_{k},\cdots,a_{k+i-1}),a_{k+i},\cdots,a_n)=0,
$$
for $n\ge 1$ and any homogeneous elements $a_1,\cdots,a_n\in A^\bullet$.
\end{defi}

We recall the definition of B\"{o}rjeson products on graded associative algebras, which give rise to $A_\infty$-algebras.

\begin{defi}{\rm (\cite{Borjeson,DSV,Markl})}
Let $A^\bullet$ be a graded associative algebra, and let $\nabla:A^\bullet\lon A^\bullet$ be a degree $1$ linear map such that $\nabla\circ\nabla=0$. The sequence of {\bf B\"{o}rjeson products} $b_k^{\nabla}:{\otimes ^ n}A^\bullet\lon A^\bullet$ are defined as follows:
\begin{eqnarray*}
b_1^{\nabla}(a_1)&=&\nabla(a_1),\\
b_2^{\nabla}(a_1,a_2)&=&\nabla(a_1a_2)-\nabla(a_1)a_2-(-1)^{a_1}a_1\nabla(a_2),\\
b_3^{\nabla}(a_1,a_2,a_3)&=&\nabla(a_1a_2a_3)-\nabla(a_1a_2)a_3-(-1)^{a_1}a_1\nabla(a_2a_3)+(-1)^{a_1}a_1\nabla(a_2)a_3,\\
&\vdots&\\
b_k^{\nabla}(a_1,\cdots,a_k)&=&\nabla(a_1\cdots a_k)-\nabla(a_1\cdots a_{k-1})a_k-(-1)^{a_1}a_1\nabla(a_2\cdots a_{k})+(-1)^{a_1}a_1\nabla(a_2\cdots a_{k-1})a_k,
\end{eqnarray*}
for any homogeneous elements $a_1,\cdots,a_k\in A^\bullet$.
\end{defi}

\begin{thm}{\rm (\cite{Borjeson,DSV,Markl})}\label{thm:BA}
With the  above notations, $(A^\bullet,\{b_k^{\nabla}\}_{k=1}^{+\infty})$ is an $A_\infty$-algebra.
\end{thm}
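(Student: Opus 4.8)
The plan is to deduce the Stasheff identities from $\nabla\circ\nabla=0$ by recognising the B\"orjeson products as the structure maps of a \emph{conjugate} of a manifestly square-zero coderivation. Recall (Remark~\ref{B-coder}) that an $A_\infty$-structure on $A^\bullet$ is the same datum as a degree $1$ coderivation $\mathsf{D}$ of the cofree conilpotent coassociative coalgebra $\bar{\Ten}(A^\bullet)$ with $\mathsf{D}\circ\mathsf{D}=0$, the products being recovered as $m_k=\pi\circ\mathsf{D}|_{\otimes^k A^\bullet}$ with $\pi\colon\bar{\Ten}(A^\bullet)\to A^\bullet$ the canonical projection; equivalently $\sum_k m_k$ is a Maurer--Cartan element of the Gerstenhaber graded Lie algebra of such coderivations. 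Thus it is enough to exhibit a square-zero coderivation of $\bar{\Ten}(A^\bullet)$ whose corestriction is $\sum_{k\ge 1}b_k^{\nabla}$.

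To that end I would introduce two auxiliary endomorphisms of $\bar{\Ten}(A^\bullet)$. The iterated multiplications $m^{(k)}\colon\otimes^k A^\bullet\to A^\bullet$, which are well defined by associativity and satisfy $m^{(1)}=\Id$, are the corestriction of a coalgebra morphism $\mathsf{M}\colon\bar{\Ten}(A^\bullet)\to\bar{\Ten}(A^\bullet)$, explicitly $\mathsf{M}(a_1\otimes\cdots\otimes a_n)=\sum_{0=i_0<i_1<\cdots<i_r=n}(a_{i_0+1}\cdots a_{i_1})\otimes\cdots\otimes(a_{i_{r-1}+1}\cdots a_{i_r})$, the sum over all ways of cutting the word into consecutive nonempty blocks. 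Since its linear part is $\Id$, $\mathsf{M}$ is an isomorphism, and a short telescoping computation using associativity shows that the corestriction of $\mathsf{M}^{-1}$ is $\sum_k(-1)^{k-1}m^{(k)}$---this rests only on the identity $\sum_{j=1}^{k}(-1)^{j-1}\binom{k-1}{j-1}=\delta_{k,1}$, the signed count of the ordered partitions of $k$ into $j$ blocks. Next, the degree $1$ map $\nabla$ is the corestriction of the coderivation $\mathsf{N}$ of $\bar{\Ten}(A^\bullet)$ acting slot-wise by $\nabla$; since $\mathsf{N}$ preserves tensor length, $\pi\circ\mathsf{N}^2$ vanishes on words of length $\ge 2$ and equals $\nabla^2=0$ on length $1$, whence $\mathsf{N}^2=0$. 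Now set $\mathsf{D}:=\mathsf{M}^{-1}\circ\mathsf{N}\circ\mathsf{M}$. Conjugating a coderivation by a coalgebra isomorphism again yields a coderivation, and $\mathsf{D}\circ\mathsf{D}=\mathsf{M}^{-1}\mathsf{N}^2\mathsf{M}=0$; so the corestriction of $\mathsf{D}$ is an $A_\infty$-structure on $A^\bullet$ automatically.

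It then remains to compute this corestriction and check it is $b_k^{\nabla}$. Pushing $a_1\otimes\cdots\otimes a_k$ successively through $\mathsf{M}$, $\mathsf{N}$ and $\mathsf{M}^{-1}$ and projecting, each surviving term has the shape $(\pm)\,(a_1\cdots a_s)\,\nabla(a_{s+1}\cdots a_{s+t})\,(a_{s+t+1}\cdots a_k)$; grouping terms according to the pair $(s,t)$ recording the letters to the left of, and inside, the factor hit by $\nabla$, the numerical coefficient factors as $\bigl(\sum_{p}(-1)^{p}\binom{s-1}{p-1}\bigr)\bigl(\sum_{q}(-1)^{q}\binom{k-s-t-1}{q-1}\bigr)$ times the appropriate Koszul sign, and by the same binomial identity this vanishes unless $s\in\{0,1\}$ and $k-s-t\in\{0,1\}$. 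The four surviving cases $(s,t)=(0,k),(0,k-1),(1,k-1),(1,k-2)$ give exactly $\nabla(a_1\cdots a_k)$, $-\nabla(a_1\cdots a_{k-1})a_k$, $-(-1)^{a_1}a_1\nabla(a_2\cdots a_k)$ and $(-1)^{a_1}a_1\nabla(a_2\cdots a_{k-1})a_k$, i.e.\ B\"orjeson's formula. The one genuinely delicate point---and where I expect to have to be most careful---is the precise bookkeeping of the Koszul signs generated by the slot-wise action of $\mathsf{N}$ and by the comultiplication as they pass through the conjugation $\mathsf{M}^{-1}(-)\mathsf{M}$; a more elementary but heavier alternative is to substitute the four-term expression for each $b^{\nabla}$ directly into the Stasheff sum $\sum_{i,k}(\pm)\,b^{\nabla}_{n-i+1}(a_1,\dots,b^{\nabla}_i(a_k,\dots,a_{k+i-1}),\dots,a_n)$ and to collect the cancellations by the relative position of the two occurrences of $\nabla$, using associativity (cf.\ \cite{Borjeson,DSV,Markl}).
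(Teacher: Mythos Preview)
The paper does not give its own proof of this theorem; it is quoted from the references \cite{Borjeson,DSV,Markl}. Your argument is correct and is essentially the conjugation/twisting approach in \cite{Markl} (see also \cite{DSV}): one transports the manifestly square-zero coderivation $\mathsf{N}$ induced by $\nabla$ along the coalgebra automorphism $\mathsf{M}$ coming from the associative product, and reads off the B\"orjeson products as the corestriction. Your verification that $\mathsf{M}^{-1}$ has corestriction $\sum_k(-1)^{k-1}m^{(k)}$ via the alternating-sum-of-compositions identity, and the subsequent $(s,t)$-bookkeeping that isolates the four surviving terms, are both accurate. The only cosmetic point is that Remark~\ref{B-coder} is really about the Balavoine/Leibniz setting; the fact you invoke (that $A_\infty$-structures are square-zero coderivations of the cofree conilpotent \emph{coassociative} coalgebra) is mentioned there only in passing with a reference to \cite{stasheff-93}, so you may wish to cite that directly.
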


Let $(\la^\bullet,\{\theta_k\}_{k=1}^{+\infty})$ be a
Leibniz$_\infty$-algebra. By the bar construction
\cite{ammardefiLeibnizalgebra} of a Leibniz$_\infty$-algebra (see also
\cite[Sect.11.4.3, Sect.13.5]{LV}),  we have a codifferential cofree conilpotent
coZinbiel coalgebra $
(\bar{\Ten}(\la^\bullet),\triangle,d=\sum_{k=1}^{+\infty}d_k)
$
 as following:
\begin{eqnarray*}
\triangle(x_1\otimes\cdots\otimes x_n)=\begin{cases}
0,&n=1,\\
\sum_{i=1}^{n-1}\sum_{\sigma\in \mathbb S_{(i,n-i-1)}}\varepsilon(\sigma)(x_{\sigma(1)}\otimes\cdots\otimes x_{\sigma(i)})\otimes(x_{\sigma(i+1)}\otimes\cdots\otimes x_{\sigma(n-1)}\otimes x_n) ,&n\geq2,
\end{cases}
\end{eqnarray*}
for $n<k$,~$d_k(x_1\otimes\cdots\otimes x_n)=0$ and for $n\geq k$

\begin{eqnarray*}
d_k(x_1\otimes\cdots\otimes x_n)&=&
\sum_{j=1}^{n+1-k}\sum_{\sigma\in \mathbb S_{(j-1,k-1)} }(-1)^{x_{\sigma(1)+\cdots+x_{\sigma(j-1)}}}\varepsilon(\sigma;x_1,\cdots, x_{j+k-2})\\
&&x_{\sigma(1)}\otimes\cdots \otimes x_{\sigma(j-1)}\otimes \theta_k(x_{\sigma(j)},\cdots,x_{\sigma(j+k-2)},x_{j+k-1})\otimes x_{j+k}\otimes\cdots\otimes x_n.
\end{eqnarray*}

We recall that the graded vector space $\bar{\Ten}(\la^\bullet)$ was equipped with the tensor  product $\otimes:\bar{\Ten}(\la^\bullet)\otimes \bar{\Ten}(\la^\bullet)\lon \bar{\Ten}(\la^\bullet)$ given, for $ x_1,\cdots,x_{m+n}\in\g$, by
\begin{eqnarray*}
(x_1\otimes \cdots\otimes x_n)\otimes(x_{n+1}\otimes \cdots\otimes x_{m+n})=x_1\otimes \cdots\otimes x_{m+n}.
\end{eqnarray*}
Moreover, $(\bar{\Ten}(\la^\bullet),\otimes)$ is a free graded nonunital associative algebra. Thus,  $(\bar{\Ten}(\la^\bullet),\otimes,d)$ is a graded associative algebra with a linear map $d$ such that $d\circ d=0$. By Theorem \ref{thm:BA}, we have

\begin{thm}
Let $(\la^\bullet,\{\theta_k\}_{k=1}^{+\infty})$ be a Leibniz$_\infty$-algebra. Then  $(\bar{\Ten}(\la^\bullet),\{b_k^{d}\}_{k=1}^{+\infty})$ is an $A_\infty$-algebra.
\end{thm}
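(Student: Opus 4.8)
The plan is to read this off directly from Theorem~\ref{thm:BA}: we only need to exhibit $\bar{\Ten}(\la^\bullet)$ as a graded associative algebra carrying a square-zero linear operator of degree~$1$, and then the B\"orjeson products attached to that operator furnish the desired $A_\infty$-structure.

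First I would recall, as in the discussion preceding the statement, that $(\bar{\Ten}(\la^\bullet),\otimes)$ is the free graded nonunital associative algebra on $\la^\bullet$; associativity of concatenation of tensors is immediate. Next, the operator $d=\sum_{k=1}^{+\infty}d_k$ produced by the bar construction of the Leibniz$_\infty$-algebra $(\la^\bullet,\{\theta_k\}_{k=1}^{+\infty})$ is, by construction, a codifferential on the cofree conilpotent coZinbiel coalgebra $(\bar{\Ten}(\la^\bullet),\triangle)$. In particular $d$ is a linear map of degree~$1$ — each structure map $\theta_k$ has degree~$1$, and the explicit formula for $d_k$ shows that $d_k$ raises the internal degree by~$1$ — and $d\circ d=0$. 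It is precisely the identity $d\circ d=0$ that encodes the generalized Leibniz$_\infty$ identities for $\{\theta_k\}_{k=1}^{+\infty}$, so this is the only place where the hypothesis enters.

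With these data in hand, I would apply Theorem~\ref{thm:BA} with $A^\bullet=\bar{\Ten}(\la^\bullet)$, associative product $\otimes$, and $\nabla=d$. The conclusion is that $(\bar{\Ten}(\la^\bullet),\{b_k^{d}\}_{k=1}^{+\infty})$ is an $A_\infty$-algebra, where each B\"orjeson product $b_k^{d}$ is obtained from the general formula by replacing juxtaposition with $\otimes$ and $\nabla$ with $d$.

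The one point requiring genuine care is the assertion $d\circ d=0$ if one wishes to see it by hand rather than by invoking the coalgebra formalism: this is a shuffle-sign bookkeeping that unwinds to the defining equations of a Leibniz$_\infty$-algebra, and it is exactly the statement that the bar construction of a Leibniz$_\infty$-algebra is a differential graded (co)algebra, which we take as established in the cited references. Beyond that, one should only check that the degree conventions used in the B\"orjeson setup and in the bar differential agree, but this is routine and presents no real obstacle.
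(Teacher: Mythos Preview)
Your proposal is correct and follows essentially the same approach as the paper: the paper's proof consists precisely of observing that $(\bar{\Ten}(\la^\bullet),\otimes)$ is a graded associative algebra, that the bar differential $d$ satisfies $d\circ d=0$, and then invoking Theorem~\ref{thm:BA}. Your write-up is in fact more detailed than what the paper provides, but the logical content is identical.
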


\begin{rmk}
  If the Leibniz$_\infty$-algebra $(\la^\bullet,\{\theta_k\}_{k=1}^{+\infty})$  reduces to a Leibniz algebra $(\la,[\cdot,\cdot]_\la)$, there is an $A_\infty$-algebra structure on $\bar{\Ten}(\la)$.  More precisely, the linear maps $m_k$ are given by
\begin{eqnarray*}
m_1(x_1\otimes\cdots\otimes x_n)=\begin{cases}
0,&n=1,\\
\sum_{1\le i<j\le n}(-1)^{i}x_1\otimes\cdots \otimes x_{i-1}\otimes x_{i+1}\otimes\cdots\otimes x_{j-1}\otimes [x_i,x_j]_\g\otimes\cdots\otimes x_n,&n\geq2,
\end{cases}
\end{eqnarray*}
\emptycomment{
\begin{eqnarray*}
&&m_2(x_1\otimes\cdots\otimes x_n,x_{n+1}\otimes \cdots\otimes x_{n+m})\\
&=&\sum_{1\le i\le n\atop 1\le j\le m}(-1)^{i}x_1\otimes\cdots \otimes x_{i-1}\otimes x_{i+1}\otimes\cdots\otimes x_{n}\otimes x_{n+1}\otimes\cdots\otimes x_{n+j-1}\otimes [x_i,x_{n+j}]_\g\otimes x_{n+j+1}\otimes \cdots \otimes x_{n+m},
\end{eqnarray*}
}
and for $k\geq 2$

\begin{eqnarray*}
&&m_k(x_1\otimes\cdots\otimes x_{n_1},x_{n_1+1}\otimes\cdots\otimes x_{n_1+n_2},\cdots,x_{n_1+\cdots+n_{k-1}+1}\otimes\cdots\otimes x_{n_1+\cdots+n_{k-1}+n_k})\\
&=&\sum_{1\le i\le n_1\atop 1\le j\le n_k}(-1)^{i}x_1\otimes\cdots \otimes x_{i-1}\otimes x_{i+1}\otimes\cdots\otimes x_{n_1}\otimes x_{n_1+1}\otimes\cdots\otimes x_{n_1+\cdots+n_{k-1}+j-1}\otimes \\
&&[x_i,x_{n_1+\cdots+n_{k-1}+j}]_\g\otimes x_{n_1+\cdots+n_{k-1}+j+1}\otimes \cdots \otimes x_{n_1+\cdots+n_{k-1}+n_k}.
\end{eqnarray*}

\end{rmk}

\emptycomment{

Let $(\g,[\cdot,\cdot]_\g)$ be a Leibniz algebra. By  bar construction of a Leibniz algebra, we have a codifferential cofree conilpotent coZinbiel coalgebra $(\bar{\Ten}\g,\triangle,d)$ as following:
\begin{eqnarray*}
\triangle(x_1\otimes\cdots\otimes x_n)=\begin{cases}
0,&n=1,\\
\sum_{i=1}^{n-1}\sum_{\sigma\in \mathbb S_{(i,n-i-1)}}(-1)^{\sigma}(x_{\sigma(1)}\otimes\cdots\otimes x_{\sigma(i)})\otimes(x_{\sigma(i+1)}\otimes\cdots\otimes x_{\sigma(n-1)}\otimes x_n) ,&n\geq2,
\end{cases}
\end{eqnarray*}
and
\begin{eqnarray*}
d(x_1\otimes\cdots\otimes x_n)=\begin{cases}
0,&n=1,\\
\sum_{1\le i<j\le n}(-1)^{i}x_1\otimes\cdots \otimes x_{i-1}\otimes\hat{x}_i\otimes x_{i+1}\otimes\cdots\otimes x_{j-1}\otimes [x_i,x_j]_\g\otimes\cdots\otimes x_n,&n\geq2.
\end{cases}
\end{eqnarray*}
Moreover, we have $(\Id\otimes \triangle)\circ \triangle=(\triangle\otimes\Id )\circ \triangle+(\tau_{12}\otimes\Id)\circ(\triangle\otimes\Id )\circ \triangle$\footnote{This is the definition of a coZinbiel coalgebra.}, where  $\tau_{12}:\g^{\otimes m}\otimes\g^{\otimes n}\lon\g^{\otimes n}\otimes\g^{\otimes m}$ is the exchange operator defined by
$$
\tau_{12}\big((x_1\otimes \cdots\otimes x_m)\otimes (y_1\otimes \cdots\otimes y_n)\big)=(-1)^{mn}(y_1\otimes \cdots\otimes y_n)\otimes (x_1\otimes \cdots\otimes x_m).
$$
We have $d\circ d=0$ and $\triangle\circ d=(\Id\otimes d+d\otimes \Id)\circ \triangle$. We note that the graded vector space $\bar{\Ten}\g$ was equipped with the concatenation product $\ast:\bar{\Ten}\g\otimes \bar{\Ten}\g\lon \bar{\Ten}\g$ given by
\begin{eqnarray*}
(x_1\otimes \cdots\otimes x_n)\ast(x_{n+1}\otimes \cdots\otimes x_{m+n})=x_1\otimes \cdots\otimes x_{m+n},~x_1,\cdots,x_{m+n}\in\g.
\end{eqnarray*}
It is a graded associative algebra. Thus, we obtain that $(\bar{\Ten}\g,\ast,d)$ is a graded associative algebra with a differential $d$. By B\"{o}rjeson products, we have
\begin{eqnarray*}
m_1(x_1\otimes\cdots\otimes x_n)=\begin{cases}
0,&n=1,\\
\sum_{1\le i<j\le n}(-1)^{i}x_1\otimes\cdots \otimes x_{i-1}\otimes x_{i+1}\otimes\cdots\otimes x_{j-1}\otimes [x_i,x_j]_\g\otimes\cdots\otimes x_n,&n\geq2,
\end{cases}
\end{eqnarray*}

}

\emptycomment{
\begin{eqnarray*}
&&m_2(x_1\otimes\cdots\otimes x_n,x_{n+1}\otimes \cdots\otimes x_{n+m})\\
&=&\sum_{1\le i\le n\atop 1\le j\le m}(-1)^{i}x_1\otimes\cdots \otimes x_{i-1}\otimes x_{i+1}\otimes\cdots\otimes x_{n}\otimes x_{n+1}\otimes\cdots\otimes x_{n+j-1}\otimes [x_i,x_{n+j}]_\g\otimes x_{n+j+1}\otimes \cdots \otimes x_{n+m},
\end{eqnarray*}

and for $k\geq 2$
{\footnotesize
\begin{eqnarray*}
&&m_k(x_1\otimes\cdots\otimes x_{n_1},x_{n_1+1}\otimes\cdots\otimes x_{n_1+n_2},\cdots,x_{n_1+\cdots+n_{k-1}+1}\otimes\cdots\otimes x_{n_1+\cdots+n_{k-1}+n_k})\\
&=&\sum_{1\le i\le n_1\atop 1\le j\le n_k}(-1)^{i}x_1\otimes\cdots \otimes x_{i-1}\otimes x_{i+1}\otimes\cdots\otimes x_{n_1}\otimes x_{n_1+1}\otimes\cdots\otimes x_{n_1+\cdots+n_{k-1}+j-1}\otimes [x_i,x_{n_1+\cdots+n_{k-1}+j}]_\g\otimes x_{n_1+\cdots+n_{k-1}+j+1}\otimes \cdots \otimes x_{n_1+\cdots+n_{k-1}+n_k}.
\end{eqnarray*}
}
}

\emptycomment{
\begin{thm}
Let $(\g,[\cdot,\cdot]_\g)$ be a Leibniz algebra. Then $(\bar{\Ten}\g,\{m_k\}_{k=1}^{+\infty})$ is an $A_\infty$-algebra.
\end{thm}

\begin{rmk}
Let $(A,\{m_k\}_{k=1}^{+\infty})$ be an $A_\infty$-algebra. For $k\geq 1$, we denote $l_k$ the symmetrization of $m_k$, that is,
\begin{eqnarray*}
l_k(a_1,\cdots,a_k)=\sum_{\sigma\in\mathbb S_{k}}\varepsilon(\sigma)m_k(a_{\sigma(1)},\cdots,a_{\sigma(k)}).
\end{eqnarray*}
Then $(A,\{l_k\}_{k=1}^{+\infty})$ is an $L_\infty$-algebra \cite{LM}. The symmetrization of $(\bar{\Ten}\g,\{m_k\}_{k=1}^{+\infty})$ is an $L_\infty$-algebra.
\end{rmk}
}

\subsection{Leibniz$_\infty$-algebras and $L_\infty$-algebras}
There is a procedure to associate an
$L_\infty$-algebra to a Leibniz algebra \cite{KS}.  In this section, we extend this
construction to a functor from the category of
Leibniz$_\infty$-algebras to that of $L_\infty$-algebras. Thus
we arrive at a functor from the category of homotopy embedding tensors
to that of $L_\infty$-algebras.

Let $V^\bullet$ be a $\mathbb Z$-graded vector space. Then the tensor
algebra $(\Ten(V^\bullet),\otimes)$ is a free graded unital associative algebra.
The freeness implies the uniqueness of the graded unital algebra morphism $\triangle^{\co}:\Ten(V^\bullet)\lon\Ten(V^\bullet)\otimes \Ten(V^\bullet)$ such that
$$
\triangle^{\co}(x)=1\otimes x+x\otimes 1,\quad\forall x\in V^\bullet.
$$
More precisely, $\triangle^{\co}$ is given by
\begin{eqnarray*}
\bigtriangleup^{\co}(x_1\otimes\cdots\otimes x_n)=\sum_{i=0}^n\sum_{\sigma\in\mathbb S_{(i,n-i)}}\varepsilon(\sigma;x_1,\cdots, x_n)\big(x_{\sigma(1)} \otimes\cdots\otimes x_{\sigma(i)}\big)\otimes \big(x_{\sigma(i+1)} \otimes\cdots\otimes x_{\sigma(n)}\big).
\end{eqnarray*}
It is immediate to check that it is coassociative
and counital. Hence $(\Ten(V^\bullet),\otimes,\triangle^{\co})$ is a
graded bialgebra. We call $(\Ten(V^\bullet),\otimes,\triangle^{\co})$
the (graded) coshuffle bialgebra.

Let $\Lie(V^\bullet)$ be the free graded Lie algebra generated by the graded
vector space $V^\bullet$. In fact, $\Lie(V^\bullet)$ is  the
intersection of all the Lie subalgebras of the (graded) commutator Lie algebra
$\Ten(V^\bullet)_\Lie=(\Ten(V^\bullet), [\cdot, \cdot]_C)$ containing $V^\bullet$. Note that the space of
primitive elements of  $(\Ten(V^\bullet),\otimes,\triangle^{\co})$  is
$\Lie(V^\bullet)$. Thus the grading on $\Ten(V^\bullet)$ induces a natural grading  on
$\Lie(V^\bullet)$ making it a graded Lie algebra. We have an embedding $V^\bullet\subset\Lie(V^\bullet)$ of
graded vector spaces. This induces $\frki:\Ten(V^\bullet)\lon
\Ten(\Lie(V^\bullet))$ a grading preserving inclusion of graded
coshuffle bialgebras. Recall that the universal enveloping algebra
$\U(\Lie(V^\bullet)) = \Ten(\Lie(V^\bullet))/I_\U$, where the two sided ideal $I_\U$ is generated
by $a\otimes b - b\otimes a-[a, b]$. Moreover, we have
$$\triangle^{\co}(I_\U)\subset I_\U\otimes \Ten(\Lie(V^\bullet))+\Ten(\Lie(V^\bullet))\otimes  I_\U.$$
Thus $I_\U$  is a homogeneous ideal of the bialgebra $(\Ten(V^\bullet),\otimes,\triangle^{\co})$.
Then the natural projection $\frkp:\Ten(\Lie(V^\bullet))\lon \U(\Lie(V^\bullet))$  is a surjective homomorphism of graded
bialgebras.

\begin{lem}\label{coiso}
With the above notations, $\Phi=\frkp\circ \frki:\Ten(V^\bullet)\lon\U(\Lie(V^\bullet))$ is an isomorphism of graded bialgebras. More precisely, the isomorphism $\Phi$ is given by
\begin{eqnarray*}
\Phi(1)=1,\quad\Phi(v_1\otimes\cdots\otimes v_n)=v_1 * \cdots * v_n,~\forall v_1,\cdots,v_n\in V^\bullet,
\end{eqnarray*} where $*$ denotes the multiplication in
$\U(\Lie(V^\bullet))$.
\end{lem}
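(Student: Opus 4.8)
The plan is to exploit the classical Poincar\'e--Birkhoff--Witt philosophy together with the universal properties of the three objects $\Ten(V^\bullet)$, $\Lie(V^\bullet)$ and $\U(\Lie(V^\bullet))$, all of which carry compatible bialgebra structures coming from the shuffle comultiplication $\triangle^{\co}$. First I would record that $\Phi=\frkp\circ\frki$ is, by construction, a morphism of graded bialgebras, since $\frki$ is a grading-preserving inclusion of coshuffle bialgebras and $\frkp$ is a surjective bialgebra morphism (both facts are established in the lines preceding the statement). The only thing to prove is that $\Phi$ is bijective. The explicit formula $\Phi(v_1\otimes\cdots\otimes v_n)=v_1*\cdots*v_n$ follows immediately: $\frki$ sends $v_1\otimes\cdots\otimes v_n$ to the same word viewed inside $\Ten(\Lie(V^\bullet))$, and $\frkp$ turns concatenation in the tensor algebra into the product $*$ of $\U(\Lie(V^\bullet))$; since each $v_i\in V^\bullet\subset\Lie(V^\bullet)$, this is exactly $v_1*\cdots*v_n$.

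For surjectivity, I would argue that $\U(\Lie(V^\bullet))$ is generated as a graded algebra by the image of $\Lie(V^\bullet)$, hence by the brackets of elements of $V^\bullet$; but every iterated bracket $[v_{i_1},[v_{i_2},\dots]]$ lies in the commutator Lie algebra $\Ten(V^\bullet)_\Lie$, so it is hit by $\Phi$, and products of such elements are hit as well. Thus $\Img\Phi$ is a subalgebra of $\U(\Lie(V^\bullet))$ containing a generating set, so $\Phi$ is onto. For injectivity I would pass to associated graded objects with respect to a suitable filtration (the word-length filtration on $\Ten(V^\bullet)$ and the PBW filtration on $\U(\Lie(V^\bullet))$). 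On associated graded, $\Phi$ induces the canonical map from $\Ten(V^\bullet)$ onto the free graded-commutative algebra $\Sym(V^\bullet)$ factored appropriately; more precisely, one checks that $\mathrm{gr}\,\Phi$ agrees with the symmetrization/PBW map, which is an isomorphism by the graded Poincar\'e--Birkhoff--Witt theorem applied to the free graded Lie algebra $\Lie(V^\bullet)$. Since a filtered map whose associated graded is an isomorphism is itself an isomorphism, $\Phi$ is injective, and combined with surjectivity we conclude $\Phi$ is a bialgebra isomorphism.

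An alternative, perhaps cleaner route I would keep in reserve: invoke directly that the coshuffle bialgebra $(\Ten(V^\bullet),\otimes,\triangle^{\co})$ is, over a field of characteristic zero, the universal enveloping algebra of its primitive part (Milnor--Moore / Cartier), and its primitive part is exactly $\Lie(V^\bullet)$; uniqueness in the Milnor--Moore theorem then forces $\Ten(V^\bullet)\cong\U(\Lie(V^\bullet))$ as bialgebras, and one identifies this abstract isomorphism with $\Phi$ by checking it on generators $v\in V^\bullet$, where both send $v\mapsto v$. The main obstacle in either approach is the bookkeeping of Koszul signs: one must be careful that $\triangle^{\co}$, the commutator bracket $[\cdot,\cdot]_C$, and the PBW symmetrization all carry the signs consistently, so that ``free graded Lie algebra'', ``primitive elements'', and ``universal envelope'' line up on the nose rather than up to a sign twist. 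I expect the graded PBW theorem and the identification of primitives with $\Lie(V^\bullet)$ to be the substantive inputs; everything else is a diagram chase on generators.
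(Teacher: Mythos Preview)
Your proposal is correct, but it takes a heavier route than the paper. The paper's proof is a one-line universal-property argument: given any associative algebra $A$ and any Lie algebra map $f:\Lie(V^\bullet)\to A_{\Lie}$, the restriction $f|_{V^\bullet}:V^\bullet\to A$ extends uniquely to an algebra map $\bar f:\Ten(V^\bullet)\to A$ (freeness of the tensor algebra), and $\bar f\circ i=f$ since both sides are Lie maps agreeing on the generating set $V^\bullet$. Thus $\Ten(V^\bullet)$ itself satisfies the universal property of $\U(\Lie(V^\bullet))$, so the canonical comparison map $\Phi$ is an isomorphism. No PBW, no filtrations, no Milnor--Moore.

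Your first approach (surjectivity by generation, injectivity via associated graded and PBW) works but is overkill, and the filtration step is a bit delicate: the word-length grading on $\Ten(V^\bullet)$ and the PBW filtration on $\U(\Lie(V^\bullet))$ have associated gradeds $\Ten(V^\bullet)$ and $\Sym(\Lie(V^\bullet))$ respectively, and identifying these is essentially equivalent to the lemma itself, so one must be careful not to argue circularly. Your alternative via Cartier--Milnor--Moore is exactly what the paper acknowledges in the remark following the lemma (it cites \cite{Milnor}), but the authors deliberately give the direct universal-property proof instead. The gain of the paper's argument is that it uses nothing beyond the definitions of ``free'' and ``universal enveloping''; the gain of your approaches is that they make explicit the connection to PBW and to the structure theory of cocommutative Hopf algebras, which is conceptually illuminating even if not needed here.
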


\begin{proof}
Since $\Phi$ is a homomorphism of graded bialgebras, we only need to prove that $\Phi$ is an isomorphism. Let $A$ be a unital associative algebra, $A_\Lie$ the commutator Lie algebra of $A$ and $f:\Lie(V^\bullet)\lon A_\Lie$ a Lie algebra homomorphism. Since $\Ten(V^\bullet)$ is a free graded unital associative algebra, we have a unique associative algebra homomorphism $\bar{f}:\Ten(V^\bullet)\lon A$ such that $f=\bar{f}\circ i$, where $i:\Lie(V^\bullet)\lon \Ten(V^\bullet)_\Lie$ is the inclusion of graded Lie algebras. Thus $\Ten(V^\bullet)$ satisfies the universal property of the universal enveloping algebra of the free graded Lie algebra $\Lie(V^\bullet)$. Set $A=\U(\Lie(V^\bullet))$, by the universal property, we deduce that $\frkp\circ \frki:\Ten(V^\bullet)\lon\U(\Lie(V^\bullet))$ is an isomorphism of graded associative algebras. Thus,
$\Phi$  is an isomorphism of bialgebras.
\end{proof}

\begin{rmk}
Since $(\Ten(V^\bullet),\otimes,\triangle^{\co})$ is a conilpotent
cocommutative bialgebra, Lemma \ref{coiso} is a special case of the
Cartier-Milnor-Moore theorem \cite{Milnor} (see also \cite[Sect.1.3.2]{LV}). For generalized bialgebras and an operadical version of Cartier-Milnor-Moore theorem, please see the monograph \cite{Loday}.
\end{rmk}

By Lemma \ref{coiso}, $\Phi$ is an isomorphism from the graded coaugmented coalgebra
$(\Ten^c(V^\bullet),\triangle^{\co})$ to the graded coaugmented
coalgebra $\U^c(\Lie(V^\bullet))$. Thus $\Phi$ is also an isomorphism from graded noncounital coalgebra
$(\bar{\Ten}^c(V^\bullet),\triangle^{\co})$ to the graded noncounital
coalgebra $\bar{\U}^c(\Lie(V^\bullet))$.

\begin{thm}{\rm (Poincar\'e-Birkhoff-Witt)}\label{coiso-1}~Let $(\g,[\cdot,\cdot]_\g)$ be a graded Lie algebra. Then
the symmetrization map $\Psi:\Sym^c(\g)\lon \U^c(\g)$
\begin{eqnarray*}
\Psi(1)=1,\quad\Psi(x_1\odot\cdots\odot x_m)=\frac{1}{m!}\sum_{\sigma\in\mathbb S_{m}}\varepsilon(\sigma;x_1,\cdots,
x_m)x_{\sigma(1)} * \cdots * x_{\sigma(m)},~\forall x_1,\cdots,x_m\in \g
\end{eqnarray*}
is an isomorphism of  graded coaugmented  coalgebras.
\end{thm}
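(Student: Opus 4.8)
The plan is to establish the three assertions packaged in the statement: that the formula defining $\Psi$ is well posed, that $\Psi$ respects the coproducts (as well as the counits and coaugmentations), and that $\Psi$ is bijective. Well-definedness is immediate: for $\tau\in\mathbb S_m$, applying the substitution $x_i\mapsto x_{\tau(i)}$ to the sum $\frac{1}{m!}\sum_{\sigma\in\mathbb S_m}\varepsilon(\sigma;x_1,\cdots,x_m)\,x_{\sigma(1)}*\cdots*x_{\sigma(m)}$ multiplies it by $\varepsilon(\tau;x_1,\cdots,x_m)$ (reindex $\sigma\mapsto\sigma\tau$ and use multiplicativity of the Koszul sign), so $\Psi$ factors through $\Sym^m(\g)$ and, together with $\Psi(1)=1$, assembles to a degree-preserving linear map $\Sym^c(\g)\lon\U^c(\g)$; compatibility with the counits and with the coaugmentations $\K\hookrightarrow\Sym^c(\g)$ and $\K\cdot 1\hookrightarrow\U^c(\g)$ is then trivial.

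For the coproduct I recall that the coproduct $\triangle^{\U}$ on $\U^c(\g)$ is the unique graded-algebra morphism with $\triangle^{\U}(x)=x\otimes 1+1\otimes x$ for $x\in\g$, and that the coproduct on $\Sym^c(\g)$ is the unshuffle coproduct $\triangle^{\co}(x_1\odot\cdots\odot x_m)=\sum_{i=0}^{m}\sum_{\sigma\in\mathbb S_{(i,m-i)}}\varepsilon(\sigma;x_1,\cdots,x_m)(x_{\sigma(1)}\odot\cdots\odot x_{\sigma(i)})\otimes(x_{\sigma(i+1)}\odot\cdots\odot x_{\sigma(m)})$. I would compute $\triangle^{\U}(\Psi(x_1\odot\cdots\odot x_m))$ by expanding each $\triangle^{\U}(x_{\sigma(1)}*\cdots*x_{\sigma(m)})=\prod_{j=1}^{m}(x_{\sigma(j)}\otimes 1+1\otimes x_{\sigma(j)})$ into its $2^m$ summands and regrouping them by the ordered subset of factors that lands in the left tensor slot. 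Writing an arbitrary $\sigma\in\mathbb S_m$ as an $(i,m-i)$-shuffle followed by permutations of the two blocks and tracking Koszul signs through their multiplicativity, one sees that the $\binom{m}{i}$ distributions contributing to a given configuration, each weighted by $\frac{1}{m!}$, add up exactly to the weight the corresponding term of $(\Psi\otimes\Psi)\circ\triangle^{\co}$ inherits from its two internal symmetrizations. Hence $\triangle^{\U}\circ\Psi=(\Psi\otimes\Psi)\circ\triangle^{\co}$, and $\Psi$ is a morphism of graded coaugmented coalgebras.

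Bijectivity I would reduce to the classical graded Poincar\'e--Birkhoff--Witt theorem over the ground field, which we take to have characteristic zero (see e.g. \cite{LV}). Filter $\U(\g)$ by the subspaces $F_n\U(\g)$ spanned by products of at most $n$ elements of $\g$, and $\Sym(\g)$ by $F_n\Sym(\g)=\oplus_{k\le n}\Sym^k(\g)$; then $\Psi$ is a filtered map. On associated graded, reordering a product of elements of $\g$ changes it only by commutators, hence by terms of lower filtration degree, so all $n!$ summands of $\Psi(x_1\odot\cdots\odot x_n)$ agree modulo $F_{n-1}\U(\g)$ up to Koszul sign; the normalization $\frac{1}{n!}$ then identifies $\mathrm{gr}\,\Psi$ with the canonical morphism of graded algebras $\Sym(\g)\lon\mathrm{gr}\,\U(\g)$, which PBW asserts to be an isomorphism. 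Since both filtrations are exhaustive and bounded below, a filtered map inducing an isomorphism on associated graded is an isomorphism; thus $\Psi$ is bijective, and together with the previous paragraph it is an isomorphism of graded coaugmented coalgebras.

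I expect the only genuinely delicate point to be the sign bookkeeping in the coproduct computation --- matching the Koszul signs produced by expanding $\triangle^{\U}(x_{\sigma(1)}*\cdots*x_{\sigma(m)})$ against those in the unshuffle coproduct and in the symmetrizations defining $\Psi\otimes\Psi$. The bijectivity is a black-box application of the classical PBW theorem via the filtration argument, and the well-definedness and counit/coaugmentation checks are routine.
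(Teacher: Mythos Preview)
The paper does not supply a proof of this theorem: it is stated as the classical (graded) Poincar\'e--Birkhoff--Witt theorem and then immediately applied. So there is no ``paper's own proof'' to compare against; the authors treat it as a black box, and your proposal amounts to unpacking that black box.

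Your argument is correct and follows the standard route. Well-definedness is fine. For the coalgebra-morphism step your strategy is the right one, and the sign check you flag as delicate really does go through: decomposing $\sigma\in\mathbb S_m$ as an $(i,m-i)$-shuffle composed with a pair of block permutations, together with multiplicativity of the Koszul sign, matches the two sides term by term. For bijectivity you invoke the classical PBW isomorphism $\Sym(\g)\cong\mathrm{gr}\,\U(\g)$ and lift it through the filtration; this is exactly how the coalgebraic PBW is usually derived from the algebraic one (cf.\ \cite{LV}, which the paper itself cites). The only thing worth making explicit is that the ground field has characteristic zero so that the $\frac{1}{m!}$ normalizations are available --- this is implicit in the paper's setup but should be stated when you actually write out the argument.
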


\emptycomment{
\begin{thm}{\rm (Poincar\'e-Birkhoff-Witt)}\label{coiso-1}
The symmetrization map $\Psi:\Sym^c(\Lie(V^\bullet))\lon \U^c(\Lie(V^\bullet))$
\begin{eqnarray*}
\Psi(1)=1,\quad\Psi(x_1\odot\cdots\odot x_m)=\frac{1}{m!}\sum_{\sigma\in\mathbb S_{m}}\varepsilon(\sigma;x_1,\cdots,
x_m)x_{\sigma(1)} * \cdots *x_{\sigma(m)},~\forall x_1,\cdots,x_m\in \g
\end{eqnarray*}
is an isomorphism of  graded coaugmented  coalgebras.
\end{thm}
}

By Theorem \ref{coiso-1},  $\Psi$ is an isomorphism from the graded coaugmented coalgebra
$\Sym^c(\Lie(V^\bullet))$ to the graded coaugmented
coalgebra $\U^c(\Lie(V^\bullet))$.
 Thus $\Psi$ is  an isomorphism from the graded noncounital coalgebra
$\bar{\Sym}^c(\Lie(V^\bullet))$ to the graded noncounital
coalgebra $\bar{\U}^c(\Lie(V^\bullet))$. Let $(\la^\bullet,\{\theta_k\}_{k=1}^{+\infty})$ be a Leibniz$_\infty$-algebra. The bar construction of a
Leibniz$_\infty$-algebra gives us a codifferential cofree conilpotent coZinbiel coalgebra $(\bar{\Ten}(\la^\bullet),\triangle,d=\sum_{k=1}^{+\infty}d_k)$. Moreover, we have

\begin{lem}
Let $(\la^\bullet,\{\theta_k\}_{k=1}^{+\infty})$ be a Leibniz$_\infty$-algebra. Then $(\bar{\Ten}(\la^\bullet),\triangle+\tau_{12}\circ\triangle,d=\sum_{k=1}^{+\infty}d_k)$ is a codifferential coshuffle coalgebra.
\end{lem}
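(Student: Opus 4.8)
The plan is to reduce the statement to the combinatorial identity $\triangle+\tau_{12}\circ\triangle=\bar\triangle^{\co}$, where $\bar\triangle^{\co}$ is the reduced coshuffle coproduct on $\bar{\Ten}(\la^\bullet)$ (the part of $\triangle^{\co}$ not involving the unit), and then to transport the codifferential property of $d$ along this identity. Since the bar construction already gives that $(\bar{\Ten}(\la^\bullet),\triangle,d)$ is a codifferential coZinbiel coalgebra, I only need to check two things: that $\triangle+\tau_{12}\circ\triangle$ is a coshuffle coproduct, and that $d$ remains a coderivation for it.

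For the first point I would argue combinatorially: in $\triangle(x_1\otimes\cdots\otimes x_n)$ the last argument $x_n$ always sits at the end of the right tensor factor, while $\tau_{12}$ moves it (up to the exchange sign) to the end of the left factor; hence, once the Koszul signs in $\tau_{12}$ are chosen to match the signs $\varepsilon(\sigma)$ appearing in the definition of $\triangle^{\co}$, every reduced $(i,n-i)$-unshuffle of $x_1,\dots,x_n$ with $1\le i\le n-1$ occurs exactly once in $\triangle+\tau_{12}\circ\triangle$ with the correct sign, so $\triangle+\tau_{12}\circ\triangle=\bar\triangle^{\co}$. Coassociativity and cocommutativity of $\bar\triangle^{\co}$ are then inherited from the coshuffle bialgebra $(\Ten(V^\bullet),\otimes,\triangle^{\co})$ introduced above (equivalently, they follow by symmetrizing the coZinbiel coassociativity of $\triangle$), so $(\bar{\Ten}(\la^\bullet),\triangle+\tau_{12}\circ\triangle)$ is a non-counital coshuffle coalgebra.

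For the second point, recall from the bar construction that $d\circ d=0$ and $\triangle\circ d=D\circ\triangle$, where $D:=d\otimes\Id+\Id\otimes d$ is the degree-$1$ derivation induced on the tensor square (with the Koszul sign $D(a\otimes b)=d(a)\otimes b+(-1)^{|a|}a\otimes d(b)$). A short sign check shows the exchange map intertwines $D$ with itself, $\tau_{12}\circ D=D\circ\tau_{12}$, and then
\[
(\triangle+\tau_{12}\circ\triangle)\circ d
=D\circ\triangle+\tau_{12}\circ D\circ\triangle
=D\circ\triangle+D\circ\tau_{12}\circ\triangle
=D\circ(\triangle+\tau_{12}\circ\triangle),
\]
so $d$ is a square-zero degree-$1$ coderivation of $(\bar{\Ten}(\la^\bullet),\triangle+\tau_{12}\circ\triangle)$, which is exactly the assertion. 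I expect the only real difficulty to be sign bookkeeping: one must pin down mutually compatible conventions for the Koszul sign in $\tau_{12}$, for the operator $D$, and for the desuspensions implicit in the degree-$1$ operations $\theta_k$, so that both $\triangle+\tau_{12}\circ\triangle=\bar\triangle^{\co}$ and $\tau_{12}\circ D=D\circ\tau_{12}$ hold exactly; these verifications are routine and can be cross-checked on arity at most $3$.
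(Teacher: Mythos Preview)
Your proof is correct and follows essentially the same approach as the paper: identify $\triangle+\tau_{12}\circ\triangle$ with the reduced coshuffle coproduct $\triangle^{\co}$, and then use the coZinbiel coderivation property $\triangle\circ d=(d\otimes\Id+\Id\otimes d)\circ\triangle$ together with the intertwining $\tau_{12}\circ(d\otimes\Id+\Id\otimes d)=(d\otimes\Id+\Id\otimes d)\circ\tau_{12}$ to conclude. The paper's proof is slightly terser (it asserts $\triangle^{\co}=\triangle+\tau_{12}\circ\triangle$ directly from the definition of $\triangle$ and leaves the intertwining step implicit), but the argument is the same.
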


\begin{proof}
By the definition of $\triangle$, we deduce that $\triangle^{\co}=\triangle+\tau_{12}\circ\triangle$. Since $d=\sum_{k=1}^{+\infty}d_k$ is a
codifferential of the cofree conilpotent coZinbiel coalgebra $(\bar{\Ten}(\la^\bullet),\triangle)$, we have
\begin{eqnarray*}
\triangle^{\co}\circ d&=&(\triangle+\tau_{12}\circ\triangle)\circ d\\
                       &=&(d\otimes\Id+\Id\otimes d)\circ\triangle+\tau_{12}\circ(d\otimes\Id+\Id\otimes d)\circ\triangle\\
                       &=&(d\otimes\Id+\Id\otimes d)\circ\triangle+(d\otimes\Id+\Id\otimes d)\circ(\tau_{12}\circ\triangle)\\
                       &=&(d\otimes\Id+\Id\otimes d)\circ\triangle^{\co}.
\end{eqnarray*}
Therefore, $(\bar{\Ten}(\la^\bullet),\triangle+\tau_{12}\circ\triangle,d=\sum_{k=1}^{+\infty}d_k)$ is a codifferential coshuffle coalgebra.
\end{proof}

\begin{thm}\label{object}
With the above notations, $\Psi^{-1}\circ\Phi\circ d\circ\Phi^{-1}\circ\Psi$ is a codifferential of $\bar{\Sym}^c(\Lie(\la^\bullet))$. It gives an $L_\infty$-algebra structure on the graded
vector space $\Lie(\la^\bullet)$.
\end{thm}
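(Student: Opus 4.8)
The plan is to transport the codifferential $d=\sum_{k\ge1}d_k$ on $\bar{\Ten}(\la^\bullet)$ to $\bar{\Sym}^c(\Lie(\la^\bullet))$ along the chain of bialgebra (resp.\ coalgebra) isomorphisms already assembled, and then to recognize a codifferential of a cofree conilpotent cocommutative coalgebra as exactly an $L_\infty$-structure. First I would record what has been established: by the preceding lemma, $d$ is a codifferential of the \emph{coshuffle} coalgebra $(\bar{\Ten}(\la^\bullet),\triangle^{\co})$, not merely of the coZinbiel coalgebra $(\bar{\Ten}(\la^\bullet),\triangle)$; by Lemma \ref{coiso}, $\Phi:\bar{\Ten}^c(V^\bullet)\lon\bar{\U}^c(\Lie(V^\bullet))$ is an isomorphism of graded (noncounital) coalgebras, applied with $V^\bullet=\la^\bullet$; and by Theorem \ref{coiso-1} (PBW), $\Psi:\bar{\Sym}^c(\Lie(\la^\bullet))\lon\bar{\U}^c(\Lie(\la^\bullet))$ is an isomorphism of graded coalgebras. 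Hence $\Psi^{-1}\circ\Phi:\bar{\Ten}^c(\la^\bullet)\lon\bar{\Sym}^c(\Lie(\la^\bullet))$ is an isomorphism of graded coalgebras.

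Next I would set $D:=\Psi^{-1}\circ\Phi\circ d\circ\Phi^{-1}\circ\Psi$ and check two things. \textbf{(a)} $D$ is a coderivation of $\bar{\Sym}^c(\Lie(\la^\bullet))$: this is formal, since conjugating a coderivation by a coalgebra isomorphism yields a coderivation — one verifies $\bigtriangleup^{\mathrm{cocomm}}\circ D=(D\otimes\Id+\Id\otimes D)\circ\bigtriangleup^{\mathrm{cocomm}}$ by substituting the intertwining relations for $\Phi$ and $\Psi$ and using that $d$ is a coderivation of $\triangle^{\co}$. \textbf{(b)} $D\circ D=0$: immediate, since $D^2=\Psi^{-1}\Phi\,d^2\,\Phi^{-1}\Psi=0$. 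Also $D$ has degree $1$ since every map in the composite is degree-preserving except $d$, which has degree $1$. Finally I would invoke the standard dictionary (stated in the excerpt's remark after the definition of $L_\infty$-algebra, and in \cite{LV,LM}): a degree $1$ codifferential on the cofree conilpotent cocommutative coalgebra $\bar{\Sym}^c(W)$ is precisely an $L_\infty$-algebra structure on $W$; applying this with $W=\Lie(\la^\bullet)$ gives the claimed $L_\infty$-algebra on $\Lie(\la^\bullet)$, whose brackets $\{l_k\}$ are read off from the Taylor components of $D$ (i.e.\ $l_k$ is the composite $\Sym^k(\Lie(\la^\bullet))\hookrightarrow\bar{\Sym}^c(\Lie(\la^\bullet))\xrightarrow{D}\bar{\Sym}^c(\Lie(\la^\bullet))\twoheadrightarrow\Lie(\la^\bullet)$).

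The main obstacle I anticipate is not $D^2=0$ but the coassociativity/cocommutativity bookkeeping in step (a): one must be careful that $\Phi$ is an isomorphism of \emph{coshuffle} bialgebras (so it intertwines $\triangle^{\co}$, the cocommutative coproduct) and that $d$ was promoted, via the previous lemma, to a codifferential for $\triangle^{\co}=\triangle+\tau_{12}\circ\triangle$ — it is this promotion, rather than the coZinbiel structure, that makes the transport land in the \emph{cocommutative} world where $\Psi$ and PBW live. A secondary point to handle with care is conilpotency: $\Phi$ and $\Psi$ preserve the exhaustive filtrations by $\Ten^{\le n}$ and $\Sym^{\le n}$, so $\bar{\Sym}^c(\Lie(\la^\bullet))$ is conilpotent and $D$ is locally nilpotent, which is what makes the correspondence ``codifferential $\leftrightarrow$ $L_\infty$-structure'' literally applicable. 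Once these compatibilities are in place, the rest is a formal diagram chase. One could additionally remark that functoriality in $\la^\bullet$ is immediate, since all of $\Phi$, $\Psi$, and the bar construction $d$ are natural, giving the promised functor from Leibniz$_\infty$-algebras to $L_\infty$-algebras and hence, composing with the functor $S$ of Theorem \ref{twist-homotopy-lie}, the homotopic extension of $\KS$.
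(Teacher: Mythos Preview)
Your proposal is correct and follows essentially the same approach as the paper: transport the codifferential $d$ along the coalgebra isomorphisms $\Phi$ and $\Psi$, and then invoke the equivalence between codifferentials on $\bar{\Sym}^c(W)$ and $L_\infty$-structures on $W$. The paper's proof is a two-sentence version of exactly this argument; your additional remarks on why one needs the coshuffle (rather than coZinbiel) compatibility and on conilpotency are accurate elaborations of points the paper leaves implicit.
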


\begin{proof}
Since $\Phi$ and $\Psi$ are coalgebra isomorphisms, we transfer the codifferential $d$ on $\bar{\Ten}^c(\la^\bullet)$  to $\bar{\Sym}^c(\Lie(\la^\bullet))$. Thus, $\Psi^{-1}\circ\Phi\circ d\circ\Phi^{-1}\circ\Psi$ is a codifferential of $\bar{\Sym}^c(\Lie(\la^\bullet))$, which gives an $L_\infty$-algebra structure on the graded
vector space $\Lie(\la^\bullet)$.
\end{proof}

\begin{rmk}
  This generalizes Kotov and Strobl's construction of an $L_\infty$-algebra from a Leibniz algebra \cite{KS}. In particular, by truncation, one can obtain a Lie 2-algebra, which was applied  to the supergravity theory. See also \cite{SL16} for a direct construction of a Lie 2-algebra form a Leibniz algebra.
\end{rmk}

We denote the category  of Leibniz$_\infty$-algebras and  the category of $L_\infty$-algebras by $\Leib_\infty\mbox{-}\Alg$ and $\Lie_\infty\mbox{-}\Alg$ respectively. We show that the above construction is actually a functor.

Let $f=\{f_k\}_{k=1}^{+\infty}$ be a Leibniz$_\infty$-algebra homomorphism from $(\la^\bullet,\{\oprn_k\}_{k=1}^{+\infty})$ to $({\la'}^\bullet,\{\oprn_k'\}_{k=1}^{+\infty})$.  By the bar construction of a Leibniz$_\infty$-algebra, we have a homomorphism of the codifferential cofree conilpotent coZinbiel coalgebras
$$
F:(\bar{\Ten}(\la^\bullet),\triangle,d=\sum_{k=1}^{+\infty}d_k)\lon(\bar{\Ten}({\la'}^\bullet),\triangle,d'=\sum_{k=1}^{+\infty}d_k'),
$$
which is defined by
$$
F(x_1\otimes\cdots\otimes x_n)=\sum_{p=1}^n\sum_{\sigma\in \mathbb E_{(k_1,\cdots,k_p)} \atop k_1+\cdots+k_p=n}\varepsilon(\sigma)f_{k_1}(x_{\sigma(1)},\cdots,x_{\sigma(k_1)})\otimes\cdots\otimes f_{k_p}(x_{\sigma(k_1+\cdots+k_{p-1}+1)},\cdots,x_{\sigma(n)}).
$$

\begin{lem}
With the above natation, then $F$ is a homomorphism from the codifferential coshuffle coalgebra $(\bar{\Ten}(\la^\bullet),\triangle+\tau_{12}\circ\triangle,d=\sum_{k=1}^{+\infty}d_k)$ to $(\bar{\Ten}({\la'}^\bullet),\triangle+\tau_{12}\circ\triangle,d'=\sum_{k=1}^{+\infty}d_k')$.
\end{lem}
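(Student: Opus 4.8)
The plan is to reduce the statement to two facts already at our disposal: that $F$ is a morphism of \emph{codifferential coZinbiel coalgebras} $(\bar{\Ten}(\la^\bullet),\triangle,d)\to(\bar{\Ten}({\la'}^\bullet),\triangle,d')$, and that the exchange operator $\tau_{12}$ is natural with respect to degree-$0$ maps. The first fact is precisely the content of the bar construction of a Leibniz$_\infty$-algebra homomorphism \cite{ammardefiLeibnizalgebra,LV}: the displayed formula defining $F$ is, by the universal property of the cofree conilpotent coZinbiel coalgebra, automatically a morphism for the half-shuffle coproduct $\triangle$ (it is the unique such morphism with corestriction $\sum_k f_k$), while the Leibniz$_\infty$-homomorphism identity for $\{f_k\}$ recorded in the preceding definition is exactly the statement $d'\circ F=F\circ d$. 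So the compatibility of $F$ with the codifferentials requires no further argument; what must be checked is that $F$ also respects the coshuffle coproduct $\triangle^{\co}=\triangle+\tau_{12}\circ\triangle$, the identity established in the proof of the previous lemma.

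The key observation is that $F$, being assembled from the degree-$0$ maps $f_k$, is a degree-$0$ operator that moreover preserves the internal $\mathbb Z$-grading of $\bar{\Ten}(\la^\bullet)$: on a homogeneous word $x_1\otimes\cdots\otimes x_n$ every summand occurring in $F(x_1\otimes\cdots\otimes x_n)$ — no matter how many tensor factors it has — carries the total internal degree $|x_1|+\cdots+|x_n|$. Since $\tau_{12}$ is defined purely through the Koszul sign of this internal degree, it follows that $\tau_{12}\circ(F\otimes F)=(F\otimes F)\circ\tau_{12}$, even though $F$ does not preserve the word-length grading. First I would record this naturality carefully, as it is the one place where the bookkeeping genuinely has to be done. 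Granting it, the rest is formal:
\begin{eqnarray*}
(F\otimes F)\circ\triangle^{\co}
&=&(F\otimes F)\circ\triangle+(F\otimes F)\circ\tau_{12}\circ\triangle\\
&=&\triangle\circ F+\tau_{12}\circ(F\otimes F)\circ\triangle\\
&=&\triangle\circ F+\tau_{12}\circ\triangle\circ F\ =\ \triangle^{\co}\circ F,
\end{eqnarray*}
using the coZinbiel morphism property in the second and third equalities and the naturality of $\tau_{12}$ in the second. Hence $F$ intertwines $\triangle^{\co}$, and together with $d'\circ F=F\circ d$ this is exactly the assertion that $F$ is a homomorphism of codifferential coshuffle coalgebras.

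The main obstacle is therefore not a computation but a point of care: one must not treat the sign in $\tau_{12}$ as depending on word length (it does not), since $F$ manifestly changes word length. Once the naturality of $\tau_{12}$ against the degree-$0$, internal-degree-preserving map $F$ is pinned down, the lemma follows by the short assembly above, with no need to expand either side into sums over shuffles and partitions.
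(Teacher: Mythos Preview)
Your proof is correct and follows essentially the same approach as the paper's: both arguments use that $F$ is already a coZinbiel coalgebra morphism commuting with the codifferentials, then assemble $\triangle^{\co}=\triangle+\tau_{12}\circ\triangle$ and invoke the naturality $\tau_{12}\circ(F\otimes F)=(F\otimes F)\circ\tau_{12}$ to conclude. The only difference is that you spell out explicitly why $\tau_{12}$ commutes with $F\otimes F$ (because $F$ has degree $0$ for the internal grading and the Koszul sign in $\tau_{12}$ depends only on that grading, not on word length), whereas the paper simply uses this step without comment; your added justification is a welcome clarification but the argument is otherwise identical.
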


\begin{proof}
Since $F$ is a homomorphism of codifferential cofree conilpotent coZinbiel coalgebras, we have
\begin{eqnarray*}
\triangle^{\co}\circ F&=&(\triangle+\tau_{12}\circ\triangle)\circ F\\
                       &=&(F\otimes F)\circ\triangle+\tau_{12}\circ(F\otimes F)\circ\triangle\\
                       &=&(F\otimes F)\circ\triangle+(F\otimes F)\circ(\tau_{12}\circ\triangle)\\
                       &=&(F\otimes F)\circ\triangle^{\co}
\end{eqnarray*}
and $d'\circ F=F\circ d$. Thus,   $F$ is a homomorphism of codifferential coshuffle coalgebras.
\end{proof}

\begin{thm}\label{morphism}
With the above notations, $\Psi^{-1}\circ\Phi\circ F\circ\Phi^{-1}\circ\Psi$ is a homomorphism from the codifferential  cocommutative coalgebra $\bar{\Sym}^c(\Lie(\la^\bullet))$ to $\bar{\Sym}^c(\Lie({\la'}^\bullet))$. It gives an $L_\infty$-algebra homomorphism from $\Lie(\la^\bullet)$ to $\Lie({\la'}^\bullet)$.
\end{thm}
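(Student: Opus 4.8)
The plan is to repeat, at the level of morphisms, the conjugation argument that establishes Theorem~\ref{object}, exploiting that $\Phi$ and $\Psi$ (and their counterparts for ${\la'}^\bullet$) are \emph{fixed} isomorphisms of the relevant coalgebras. Write $\tilde{F}:=\Psi^{-1}\circ\Phi\circ F\circ\Phi^{-1}\circ\Psi$, with the (mild abuse of) convention that the inner factor $\Phi^{-1}\circ\Psi$ uses the isomorphisms attached to $\la^\bullet$ and the outer factor $\Psi^{-1}\circ\Phi$ uses the ones attached to ${\la'}^\bullet$, so that $\tilde F\colon \bar{\Sym}^c(\Lie(\la^\bullet))\to\bar{\Sym}^c(\Lie({\la'}^\bullet))$.

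First I would check that $\tilde F$ is a morphism of graded conilpotent cocommutative coalgebras. By Lemma~\ref{coiso}, $\Phi$ identifies the reduced coshuffle coalgebra $(\bar{\Ten}^c(\la^\bullet),\triangle^{\co})$ with $\bar{\U}^c(\Lie(\la^\bullet))$, and by Theorem~\ref{coiso-1} the symmetrization map $\Psi$ identifies $\bar{\Sym}^c(\Lie(\la^\bullet))$ with the \emph{same} coalgebra $\bar{\U}^c(\Lie(\la^\bullet))$; hence $\Psi^{-1}\circ\Phi$ is an isomorphism of conilpotent cocommutative coalgebras, and likewise for ${\la'}^\bullet$. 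The preceding lemma shows that $F$ is a homomorphism of coshuffle coalgebras, that is $\triangle^{\co}\circ F=(F\otimes F)\circ\triangle^{\co}$. Conjugating this square by the coalgebra isomorphisms $\Phi^{-1}\circ\Psi$ and $\Psi^{-1}\circ\Phi$ turns it into the assertion that $\tilde F$ commutes with the cocommutative coproducts on $\bar{\Sym}^c(\Lie(-))$, so $\tilde F$ is a morphism of cocommutative coalgebras; conilpotency is automatic since every map in sight is a reduced coalgebra morphism and $\Phi,\Psi$ are bijective.

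Next I would check that $\tilde F$ intertwines the codifferentials produced by Theorem~\ref{object}, namely $D:=\Psi^{-1}\circ\Phi\circ d\circ\Phi^{-1}\circ\Psi$ on $\bar{\Sym}^c(\Lie(\la^\bullet))$ and $D':=\Psi^{-1}\circ\Phi\circ d'\circ\Phi^{-1}\circ\Psi$ on $\bar{\Sym}^c(\Lie({\la'}^\bullet))$. Since $F$ is a homomorphism of codifferential coalgebras, $d'\circ F=F\circ d$; substituting this into $D'\circ\tilde F$ and cancelling the two central isomorphisms one gets $D'\circ\tilde F=\Psi^{-1}\Phi\,(d'\circ F)\,\Phi^{-1}\Psi=\Psi^{-1}\Phi\,(F\circ d)\,\Phi^{-1}\Psi=\tilde F\circ D$. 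Thus $\tilde F$ is a homomorphism of codifferential conilpotent cocommutative coalgebras. By the standard dictionary — a codifferential on the cofree conilpotent cocommutative coalgebra $\bar{\Sym}^c(W)$ is an $L_\infty$-structure on $W$, and a coalgebra morphism commuting with such codifferentials is exactly an $L_\infty$-morphism; see e.g. \cite{LV} — this means $\tilde F$ is an $L_\infty$-algebra homomorphism from $\Lie(\la^\bullet)$ to $\Lie({\la'}^\bullet)$, whose components (Taylor coefficients) are recovered by projecting $\tilde F$ onto $\Lie({\la'}^\bullet)$.

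The argument is formal once the earlier results are in hand, and the only point needing care is bookkeeping: one must keep the two distinct comultiplications straight — the coZinbiel coproduct $\triangle$ and the coshuffle coproduct $\triangle^{\co}=\triangle+\tau_{12}\circ\triangle$ on the tensor side versus the cocommutative coproduct on $\bar{\Sym}^c(\Lie(-))$ — and invoke precisely the facts (contained in Lemma~\ref{coiso} and Theorem~\ref{coiso-1}) that $\Phi$ carries $\triangle^{\co}$ to the cocommutative coproduct on $\bar{\U}^c(\Lie(-))$ while $\Psi$ carries the cocommutative coproduct on $\bar{\Sym}^c(\Lie(-))$ to the same one; no new computation is required. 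Finally, preservation of identities and composites of $f\mapsto\tilde F$ is immediate, since the bar construction is functorial and the conjugating isomorphisms are fixed, so together with Theorem~\ref{object} one obtains a functor from the category of Leibniz$_\infty$-algebras to that of $L_\infty$-algebras, and hence, composing with the functor $S$ of Theorem~\ref{twist-homotopy-lie}, a functor from the category of homotopy embedding tensors to that of $L_\infty$-algebras.
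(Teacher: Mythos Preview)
Your proof is correct and follows exactly the paper's approach: conjugating the codifferential-coalgebra morphism $F$ by the fixed coalgebra isomorphisms $\Phi$ and $\Psi$ to transport it from $\bar{\Ten}^c$ to $\bar{\Sym}^c(\Lie(-))$. You have simply spelled out in detail (the coalgebra-morphism check and the codifferential-intertwining check) what the paper compresses into one sentence, and your final paragraph on functoriality anticipates the content of the paper's subsequent Theorem~\ref{functor}.
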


\begin{proof}
Since $\Phi$ and $\Psi$ are coalgebra isomorphisms, we transfer the homomorphism $F:\bar{\Ten}^c(\la^\bullet)\lon\bar{\Ten}^c({\la'}^\bullet)$  to $$
\Psi^{-1}\circ\Phi\circ F\circ\Phi^{-1}\circ\Psi:\bar{\Sym}^c(\Lie(\la^\bullet))\lon\bar{\Sym}^c(\Lie({\la'}^\bullet)).
$$
 Thus, $\Psi^{-1}\circ\Phi\circ F\circ\Phi^{-1}\circ\Psi$ gives an $L_\infty$-algebra homomorphism from $\Lie(\la^\bullet)$ to $\Lie({\la'}^\bullet)$.
\end{proof}

Now summarise the results, we generalize Kotov-Strobl's construction to a functor from
the category of homotopy embedding tensors to that of $L_\infty$-algebras.

\begin{thm}\label{functor}  Theorem \ref{object} and Theorem
  \ref{morphism} give us a functor $K:\Leib_\infty\mbox{-}\Alg\lon
  \Lie_\infty\mbox{-}\Alg$. Then together with Theorem \ref{twist-homotopy-lie},
  $\KS:=K\circ S$ is a functor  from
the category of homotopy embedding tensors to that of
$L_\infty$-algebras.
\end{thm}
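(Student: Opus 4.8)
The plan is to verify the two functor axioms for $K$---preservation of identity morphisms and of composition---and then to observe that $\KS=K\circ S$ is simply the composite of two functors. By Theorem~\ref{object}, $K$ is defined on objects: a Leibniz$_\infty$-algebra $(\la^\bullet,\{\theta_k\}_{k=1}^{+\infty})$, with bar differential $d=\sum_{k=1}^{+\infty}d_k$ on $\bar{\Ten}(\la^\bullet)$, is sent to the $L_\infty$-algebra carried by $\Lie(\la^\bullet)$ whose codifferential on $\bar{\Sym}^c(\Lie(\la^\bullet))$ is $\Psi^{-1}\circ\Phi\circ d\circ\Phi^{-1}\circ\Psi$; by Theorem~\ref{morphism}, $K$ is defined on morphisms: a Leibniz$_\infty$-homomorphism $f=\{f_k\}_{k=1}^{+\infty}\colon\la^\bullet\to{\la'}^\bullet$, with associated coZinbiel coalgebra morphism $F\colon\bar{\Ten}(\la^\bullet)\to\bar{\Ten}({\la'}^\bullet)$, is sent to the $L_\infty$-homomorphism induced by $\Psi_{\la'}^{-1}\circ\Phi_{\la'}\circ F\circ\Phi_{\la}^{-1}\circ\Psi_{\la}$. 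The point that makes this work is that the coalgebra isomorphisms $\Phi$ of Lemma~\ref{coiso} and $\Psi$ of Theorem~\ref{coiso-1} depend only on the underlying graded vector space, not on the Leibniz$_\infty$-structure, so each object $\la^\bullet$ comes equipped with its own fixed pair $(\Phi_{\la},\Psi_{\la})$, here written with subscripts to disambiguate.

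First I would record that $K$ preserves identities: the bar construction sends $\Id_{\la^\bullet}$ (that is, $f_1=\Id$, $f_k=0$ for $k\ge 2$) to the identity coalgebra morphism of $(\bar{\Ten}(\la^\bullet),d)$, and conjugating the identity by $\Phi_\la$ and $\Psi_\la$ again yields the identity of $\bar{\Sym}^c(\Lie(\la^\bullet))$; hence $K(\Id_{\la^\bullet})=\Id_{\Lie(\la^\bullet)}$. For compatibility with composition, let $f\colon\la^\bullet\to{\la'}^\bullet$ and $g\colon{\la'}^\bullet\to{\la''}^\bullet$ be Leibniz$_\infty$-homomorphisms with associated coalgebra morphisms $F$ and $G$. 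The one genuinely combinatorial ingredient---which is classical; see \cite{ammardefiLeibnizalgebra} and \cite[Sect.~11.4.3, Sect.~13.5]{LV}---is the functoriality of the bar construction: $G\circ F$ is exactly the coZinbiel coalgebra morphism associated to $g\circ f$, i.e.\ the shuffle identity expressing that the explicit formula for the composite of two Leibniz$_\infty$-homomorphisms matches the composite of the induced coalgebra morphisms. Granting this, the two inner occurrences $\Phi_{\la'}^{-1}\circ\Psi_{\la'}$ and $\Psi_{\la'}^{-1}\circ\Phi_{\la'}$ cancel, and
\begin{eqnarray*}
K(g)\circ K(f)&=&\big(\Psi_{\la''}^{-1}\,\Phi_{\la''}\,G\,\Phi_{\la'}^{-1}\,\Psi_{\la'}\big)\circ\big(\Psi_{\la'}^{-1}\,\Phi_{\la'}\,F\,\Phi_{\la}^{-1}\,\Psi_{\la}\big)\\
&=&\Psi_{\la''}^{-1}\,\Phi_{\la''}\,(G\circ F)\,\Phi_{\la}^{-1}\,\Psi_{\la}\ =\ K(g\circ f).
\end{eqnarray*}
Therefore $K$ is a functor $\Leib_\infty\mbox{-}\Alg\to\Lie_\infty\mbox{-}\Alg$.

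Finally, by Theorem~\ref{twist-homotopy-lie}(iii) the assignment $S$ from the category of homotopy embedding tensors to $\Leib_\infty\mbox{-}\Alg$ is a functor, and the strict Leibniz$_\infty$-homomorphisms produced by $S$ are in particular Leibniz$_\infty$-homomorphisms (with vanishing higher components), so $K$ applies to them; hence the composite $\KS:=K\circ S$ is a well-defined functor from the category of homotopy embedding tensors to $\Lie_\infty\mbox{-}\Alg$. When all structures are concentrated in degree $0$ this specializes, via Proposition~\ref{average-Leibniz} and the remark after Theorem~\ref{object}, to the Kotov--Strobl passage from an embedding tensor to an $L_\infty$-algebra. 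I expect the only real obstacle to be the functoriality of the bar construction used in the composition step; once that classical fact is granted, the rest is the formal telescoping of the fixed coalgebra isomorphisms $\Phi$ and $\Psi$.
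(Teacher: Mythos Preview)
Your proof is correct. The paper states this theorem without proof, as a summarizing statement immediately following Theorems~\ref{object}, \ref{morphism}, and \ref{twist-homotopy-lie}; your argument supplies precisely the functor-axiom verifications (identity and composition) that the paper leaves implicit, and the telescoping via the fixed coalgebra isomorphisms $\Phi$ and $\Psi$ together with the classical functoriality of the bar construction is the natural way to do this.
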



\begin{thebibliography}{a}




\bibitem{Aguiar}
M. Aguiar, Pre-Poisson algebras. {\em Lett. Math. Phys.} {\bf54} (2000),  263-277.

\bibitem{ammardefiLeibnizalgebra}
 M. Ammar and N. Poncin, Coalgebraic Approach to the Loday Infinity Category, Stem Differential for $2n$-ary Graded and Homotopy Algebras. \emph{Ann. Inst. Fourier (Grenoble).} {\bf 60}  (2010),  355-387.

 \bibitem{Arnal}
 D. Arnal,  Simultaneous deformations of a Lie algebra and its modules. Differential geometry and mathematical physics (Liege, 1980/Leuven, 1981), 3-15, \emph{Math. Phys. Stud.}, 3, Reidel, Dordrecht, 1983.


\bibitem{Bal}
D. Balavoine, Deformations of algebras over a quadratic operad. Operads: Proc. Renaissance Conferences (Hartford, CT/Luminy, 1995), \emph{Contemp. Math.} {\bf 202} Amer. Math. Soc., Providence, RI, 1997, 207-34.


\bibitem{Barnett}
C. Barnett, Averaging Operators in Noncomutative $L^p$ spaces. I. {\em Glasgow Math. J.} {\bf24} (1983), 71-74.



\bibitem{BH}
R. Bonezzi and O. Hohm, Leibniz Gauge Theories and Infinity Structures. \emph{Comm. Math. Phys.} {\bf377} (2020), 2027-2077.

\bibitem{BH-1}
R. Bonezzi and O. Hohm, Duality Hierarchies and Differential Graded Lie Algebras. arXiv:1910.10399.


\bibitem{Borjeson}
K. B\"{o}rjeson, $A_\infty$-algebras derived from associative algebras with a non-derivation differential.
\emph{J. Gen. Lie Theory Appl.} {\bf9} (2015), Art. ID 1000214, 5 pp.

\bibitem{Brainerd}
B. Brainerd, On the Structure of Averaging Operators.  {\em J. Math. Anal. Appl.} {\bf5} (1962), 347-377.





 \bibitem{review}
J. A. de Azc$\rm\acute{a}$rraga and J. M. Izquierdo, $n$-ary algebras: a review with applications.  \emph{ J. Phys. A: Math. Theor.} \textbf{43} (2010), 293001.


\bibitem{Dolgushev-Rogers}
V. A. Dolgushev and C. L. Rogers, A version of the Goldman-Millson Theorem for filtered $L_\infty$-algebras. {\em J. Algebra}  {\bf 430} (2015), 260-302.

\bibitem{DSV}
 V. Dotsenko, S. Shadrin and  B. Vallette, Toric varieties of Loday's associahedra and noncommutative cohomological field theories.  \emph{J. Topol.} {\bf12} (2019), 463-535.






\bibitem{Fregier-Zambon-1}
Y. Fr\'egier and M. Zambon, Simultaneous deformations and Poisson geometry. \emph{ Compos. Math.} {\bf151} (2015), 1763-1790.

\bibitem{Fregier-Zambon-2}
Y. Fr\'egier and M. Zambon, Simultaneous deformations of algebras and morphisms via derived brackets. \emph{J. Pure Appl. Algebra} {\bf219 } (2015), 5344-5362.





\bibitem{Goncharov}
M. E. Goncharov and P. S. Kolesnikov, Simple finite-dimensional double algebras. {\em J. Algebra}  {\bf 500} (2018), 425-438.

\bibitem{Getzler} E. Getzler, Lie theory for nilpotent $L_{\infty}$-algebras. {\em Ann. of Math. (2)} {\bf170} (2009),  271-301.



\bibitem{GLST1} A. Guan, A. Lazarev, Y. Sheng and R. Tang, {Review of deformation theory I: Concrete formulas for deformations of algebraic structures}. \emph{Adv. Math. (China)} {\bf49}  (2020), 257-277.

\bibitem{Hoh} O. Hohm and H. Samtleben,   Leibniz-Chern-Simons theory and phases of exceptional field theory. \emph{Comm. Math. Phys.} {\bf 369} (2019), 1055-1089.



\bibitem{Huijsmans}
C. B. Huijsmans and B. de Pagter,  Averaging operators and positive contractive projections. {\em J. Math. Anal. Appl.} {\bf113} (1986), 163-184.

\bibitem{Kinyon}
M. Kinyon and A. Weinstein,  Leibniz algebras, Courant algebroids, and multiplications on reductive homogeneous spaces. \emph{Amer. J. Math.} {\bf123} (2001),  525-550.

\bibitem{Kolesnikov}
P. S. Kolesnikov,  Homogeneous averaging operators on semisimple Lie algebras. {\em Algebra Logic} {\bf53} (2014),  510-511.

\bibitem{Kosmann-Schwarzbach}
 Y. Kosmann-Schwarzbach, From Poisson algebras to Gerstenhaber algebras. {\em Ann. Inst. Fourier (Grenoble) } {\bf46} (1996),  1243-1274.

\bibitem{Kosmann-Schwarzbach-1}
 Y. Kosmann-Schwarzbach, Derived brackets. {\em Lett. Math. Phys.} {\bf69} (2004), 61-87.





\bibitem{KS}
A. Kotov and T. Strobl, The Embedding Tensor, Leibniz-Loday Algebras, and Their Higher Gauge Theories. \emph{Comm. Math. Phys.} {\bf376}  (2020),  235-258.

\bibitem{LS}
T. Lada and J. Stasheff, Introduction to sh Lie algebras for physicists. \emph{Internat. J. Theoret. Phys.} {\bf32} (1993),  1087-1103.

\bibitem{LM}
T. Lada and M. Markl,  Strongly homotopy Lie algebras. \emph{Comm. Algebra} {\bf23} (1995),  2147-2161.

\bibitem{Lavan} S. Lavau, Tensor hierarchies and Leibniz algebras. \emph{J. Geom. Phys.} {\bf144} (2019), 147-189.

\bibitem{LavauP} S. Lavau  and J. Palmkvist, Infinity-enhancing Leibniz algebras. \emph{Lett. Math. Phys.}  {\bf110} (2020),  3121-3152.

\bibitem{LavauS} S. Lavau and J. Stasheff, $L_\infty$-algebra extensions of Leibniz algebras. arXiv:2003.07838.

\bibitem{LST}A. Lazarev, Y. Sheng and R. Tang, Deformations and homotopy theory  of relative Rota-Baxter Lie algebras.  {\em Comm. Math. Phys.} (2020), https://doi.org/10.1007/s00220-020-03881-3.

 \bibitem{livernet}
M. Livernet, Homologie des alg$\rm\grave{e}$bres stables de matrices sur une $A_\infty$-alg$\rm\grave{e}$bre. \emph{C. R. Acad. Sci. Paris S$\rm\acute{e}$r. I Math.} {\bf329}  (1999), 113-116.

\bibitem{Loday}
J.-L. Loday, Generalized bialgebras and triples of operads. \emph{Ast\'erisque} {\bf320} (2008), x+116 pp.

\bibitem{Loday-Pirashvili}
J.-L. Loday and T. Pirashvili, The tensor category of linear maps and Leibniz algebras.  \emph{Georgian Math. J.} {\bf5} (1998), 263-276.

\bibitem{Loday and Pirashvili}
J.-L. Loday and T. Pirashvili, Universal enveloping algebras of Leibniz algebras and (co)homology. \emph{Math. Ann.} {\bf296} (1993), 139-158.

\bibitem{LV}
J.-L. Loday and B. Vallette, Algebraic Operads. Springer, 2012.



\bibitem{Markl}
M. Markl, On the origin of higher braces and higher-order derivations. \emph{J. Homotopy Relat. Struct.} {\bf10} (2015),  637-667.

\bibitem{Milnor}
J. W. Milnor and J. C.  Moore,  On the structure of Hopf algebras. \emph{Ann. of Math. (2)} {\bf81} (1965), 211-264.


\bibitem{NR} A. Nijenhuis  and R. Richardson,  Cohomology and deformations in graded Lie algebras. {\em Bull. Amer. Math. Soc.} {\bf 72} (1966) 1-29.


\bibitem{Pei-Bai-Guo-Ni}
J. Pei, C. Bai, L. Guo and X. Ni, Replicating of binary operads, Koszul duality, Manin products and average operators.  {\em New Trends in Algebras and Combinatorics} (2020), 317-353.

\bibitem{PG}
J. Pei and L. Guo,   Averaging algebras, Schroder numbers, rooted trees and operads. \emph{J. Algebraic Combin.} {\bf42} (2015), 73-109.



\bibitem{Rota}
G.-C. Rota,  Ten mathematics problems I will never solve. \emph{Mitt. Dtsch. Math.-Ver.} {\bf 2} (1998),  45-52.

\bibitem{Sam}H. Samtleben,  Lectures on gauged supergravity and flux compactifications. \emph{Classical Quantum Gravity} {\bf 25} (2008), 214002, 36 pp.


\bibitem{SL}Y. Sheng and Z. Liu, Leibniz 2-algebras and twisted Courant algebroids. \emph{Comm. Algebra} {\bf 41}  (2013), 1929-1953.

\bibitem{SL16}Y. Sheng  and Z. Liu, From Leibniz algebras to Lie 2-algebras.  \emph{Algebr. Represent. Theory} {\bf 19} (2016),   pp 1-5.

 \bibitem{STZ:2}Y. Sheng,  R. Tang and C. Zhu, Homotopy embedding
   tensors and $\infty$-categories. work in progress.

\bibitem{shengzhu2}Y. Sheng and C. Zhu, Integration of semidirect product Lie 2-algebras. \emph{Int. J. Geom. Methods Mod. Phys.} {\bf9}  (2012) 1250043.

\bibitem{Sta63} J. Stasheff, Homotopy associativity of H-spaces. I, II. \emph{Trans. Amer. Math. Soc.} {\bf 108} (1963), 275-292; ibid. {\bf 108} (1963), 293-312.

\bibitem{stasheff:shla}
J. Stasheff, Differential graded {L}ie algebras, quasi-Hopf algebras and higher  homotopy algebras.   \emph{Quantum groups (Leningrad, 1990)}, 120-137,  Lecture Notes in Math., 1510, \emph{ Springer, Berlin,} 1992.

\bibitem{stasheff-93}
J. Stasheff, The intrinsic bracket on the deformation complex of an associative algebra. \emph{J. Pure Appl. Algebra} {\bf89} (1993),  231-235.

\bibitem{Str16} T. Strobl, Non-abelian Gerbes and Enhanced Leibniz Algebras. \emph{ Phys. Rev. D} {\bf94} (2016), 021702.

\bibitem{Str19} T. Strobl,  Leibniz-Yang-Mills gauge theories and the 2-Higgs mechanism. \emph{Phys. Rev. D} {\bf99} (2019), 115026.

\bibitem{SW} T. Strobl and F. Wagemann,   Enhanced Leibniz algebras: structure theorem and induced Lie 2-algebra. \emph{Comm. Math. Phys.} {\bf376} (2020),  51-79.







\bibitem{Uchino-1}
K. Uchino,    Derived brackets and sh Leibniz algebras. \emph{J. Pure Appl. Algebra} {\bf215} (2011),  1102-1111.



\bibitem{Vallette-1}
B. Vallette, Homotopy theory of homotopy algebras. \emph{Ann. Inst. Fourier (Grenoble)} {\bf70} (2020), 683-738.

\bibitem{Vo}
Th. Voronov, Higher derived brackets and homotopy algebras. \emph{J. Pure Appl. Algebra}  {\bf 202} (2005), 133-153.

\bibitem{Alan}
A. Weinstein, Omni-Lie algebras. Microlocal analysis of the Schrodinger equation and related topics (Japanese) (Kyoto, 1999). \emph{S${\bar{u}}$rikaisekikenky${\bar{u}}$sho K${\bar{u}}$ky${\bar{u}}$roku} {\bf 1176} (2000), 95-102.


\end{thebibliography}
 \end{document}